\numberwithin{equation}{section}
\newtheorem{theorem}{Theorem}[section]
\newaliascnt{lemma}{theorem}
\newtheorem{lemma}[lemma]{Lemma}
\crefname{lemma}{Lemma}{Lemmas}
\newaliascnt{example}{theorem}
\crefname{example}{Example}{Examples}
\newaliascnt{claim}{theorem}
\newtheorem{claim}[claim]{Claim}
\crefname{claim}{Claim}{Claims}
\newaliascnt{corollary}{theorem}
\crefname{corollary}{Corollary}{Corollaries}
\newaliascnt{construction}{theorem}
\newtheorem{construction}[construction]{Construction}
\crefname{construction}{Construction}{Constructions}
\newaliascnt{question}{theorem}
\crefname{question}{Question}{Questions}
\newaliascnt{fact}{theorem}
\crefname{fact}{Fact}{Facts}
\newaliascnt{proposition}{theorem}
\crefname{proposition}{Proposition}{Propositions}
\newaliascnt{conjecture}{theorem}
\newtheorem{conjecture}[conjecture]{Conjecture}
\crefname{conjecture}{Conjecture}{Conjectures}
\newaliascnt{definition}{theorem}
\newtheorem{definition}[definition]{Definition}
\crefname{definition}{Definition}{Definitions}
\newaliascnt{remark}{theorem}
\crefname{remark}{Remark}{Remarks}
\newaliascnt{observation}{theorem}
\newtheorem{observation}[observation]{Observation}
\crefname{observation}{Observation}{Observations}
\crefname{algorithm}{Algorithm}{Algorithms}
\newaliascnt{notation}{theorem}
\crefname{notation}{Notation}{Notations}
\newcommand\E{\mathop{\mathbb E}}
\newcommand{\Query}{\text{Queries}}
\newcommand{\N}{\mathbb N}
\renewcommand{\P}{\mathcal{P}}
\newcommand{\kstar}[1]{\ensuremath{#1}-star}
\newcommand{\eps}{\varepsilon}
\newcommand{\A}{\mathcal A}
\newcommand{\beq}{\begin{equation}}
\newcommand{\eeq}{\end{equation}}
\newcommand{\RACC}{\mathrm{RAC}} 
\newcommand{\remove}[1]{}
\definecolor{green1}{rgb}{0.40, 0.8, 0.2}
\newcommand\omri[1]{{\textcolor{red}{Omri: #1}}}
\newcommand\tomer[1]{{\textcolor{red}{Tomer: #1}}}
\def\({\left(}
\def\){\right)}
\title{
On the Instance Optimality of Detecting Collisions and Subgraphs
}
\author{
Omri Ben-Eliezer\thanks{Simons Institute for the Theory of Computing, University of California, Berkeley, USA. Part of this research was conducted while the author was at Weizmann Institute and later at Massachusetts Institute of Technology. Email: \texttt{omrib@mit.edu}.}
\and
Tomer Grossman\thanks{Department of Computer Science and Applied Mathematics,
   Weizmann Institute of Science,  Rehovot 76100, Israel. Email: \texttt{tomer.grossman@weizmann.ac.il}.}
\and
 Moni Naor\thanks{Department of Computer Science and Applied Mathematics,
   Weizmann Institute of Science,  Rehovot 76100, Israel. Email:
   \texttt{moni.naor@weizmann.ac.il}. Supported in part by grant  from the Israel
   Science Foundation (no.\ 950/16). Incumbent of the Judith Kleeman Professorial
   Chair.}
}
\date{}
\begin{document}

\maketitle

\begin{abstract}
Suppose you are given a function 
$f\colon [n] \to [n]$
via (black-box) query access to the function. You are looking to find something local, like a collision (a pair $x \neq y$ s.t.\ $f(x)=f(y)$). The question is whether knowing the `shape' of the function helps you or not (by  shape we mean that some permutation of the function is known). 
Formally, we investigate the \emph{unlabeled instance optimality} of substructure detection problems in graphs and functions. A problem is $g(n)$-instance optimal if it admits an algorithm 
$A$ satisfying that for any possible input, the (randomized) query complexity of $A$ is at most $g(n)$ times larger than the query complexity of any algorithm $A'$ which solves the same problem while holding an \emph{unlabeled copy} of the input (i.e., any $A'$ that ``knows the structure of the input''). 
Our results point to a trichotomy of unlabeled instance optimality among substructure detection problems in graphs and functions: 
\begin{itemize}
\item A few very simple properties have an $O(1)$-instance optimal algorithm.
\item Most properties of graphs and functions, with examples such as containing a fixed point or a $3$-collision in functions, or a triangle in graphs, are $n^{\Omega(1)}$-far from instance optimality. 
\item The problems of collision detection in functions and finding a claw in a graph serve as a middle ground between the two regimes. 
We show that these two properties are $\Omega(\log n)$-far from instance optimality, and conjecture that this bound is tight. We provide evidence towards this conjecture, by proving that finding a claw in a graph is $O(\log(n))$-instance optimal among all input graphs for which the query complexity of an algorithm holding an unlabeled certificate is $O\left(\sqrt{\frac{n}{\log n}}\right)$.
\end{itemize}
\end{abstract}

\section{Introduction}
\label{sec:intro}


    

Efficient detection of small structures in complex data is a fundamental challenge across computer science. 
%
In this work, we explore to what extent \emph{prior knowledge} on the input may help. 
Consider, for instance, the problem of detecting a collision in an unknown function $f\colon [n] \to [n]$ given query access to $f$. (Here, a collision in $f$ is a pair of disjoint elements $x \neq y \in [n]$ so that $f(x) = f(y)$.)
We ask the following question.

\begin{center}
\textit{How does an algorithm that knows nothing about $f$ in advance (aside from the domain size $n$) compare to an algorithm that has some prior knowledge an the structure of $f$? 
}

\end{center}
The prior knowledge we consider in this work takes the form of an \emph{unlabeled copy} of $f$ that the algorithm receives in advance as in Grossman et al.~\cite{GrossmanKN20}. That is, the algorithm receives a permutation of $f$ -- the composed function $f \circ \pi$ for some unknown permutation $\pi$ -- as an ``untrusted hint". We typically call this permutation of $f$ an \emph{unlabeled certificate}; we require the algorithm to be correct with good probability regardless of whether the hint is correct (i.e., even if $f$ is not a permutation of the unlabeled certificate). 
However, the number of queries made by the algorithm is only measured if the true input is indeed a permutation of the 
unlabeled certificate.

In the worst case, clearly $\Omega(n)$ queries are necessary, whether we know anything about the structure of $f$ or not. But are there beyond-worst-case instances where holding additional structural information on $f$ may accelerate collision detection?

\begin{definition}[instance optimality; informal] \label{def:inst_opt_inf} A randomized Las Vegas\footnote{For simplicity we consider in this paper Las Vegas randomized algorithms, but all of the results apply also to Monte Carlo type algorithms (that allow some error in the returned value).}
algorithm $A$ deciding if an unknown function $f \colon [n] \to [n]$ satisfies a property $\P$ is \emph{instance optimal} if there exists an absolute constant $\alpha$ satisfying the following. For every function $f$, and any randomized algorithm $A'$ for the same task, the following holds:
\begin{align*}
\text{Queries}_A(f) \le \alpha \cdot \max_\pi \text{Queries}_{A'}(f \circ \pi)
\end{align*}
where the $\text{Queries}_A(\cdot)$ operator refers to the expected number of queries that an algorithm $A$ makes on a certain input.

Finally, we say that $\P$ is instance optimal if there exists an instance optimal algorithm for it.

\end{definition}
Note the order of the quantifiers in the definition: for every $f$, the algorithm $A$ has to compete with an algorithm $A'$ that ``specializes'' to functions of the form $f \circ \pi$. 
In other words, an algorithm $A$ is instance optimal if it performs as well as every algorithm $A'$, that knows the structure of $f$, but not the actual labels. Note that the correctness of algorithm $A'$ is unconditional -- that is $A'$ must be correct even if the structure of $f$ doesn't match the certificate $A'$ receives.

An algorithm being unlabeled instance optimal means it always performs as well (up to a constant) as the algorithm that knows the structure of the input. If there is no instance optimal algorithm, that means there exists some function where knowing the structure of the function is helpful.
Thus, instance optimality is a strong requirement: If a property $\P$ is instance optimal that means that knowing the structure of the input function $f$ \emph{never helps}.
When a property is not instance optimal, it will sometimes be useful to discuss its ``distance'' from instance optimality.
\begin{definition}[distance from instance optimality; informal]
Consider the setting of Definition \ref{def:inst_opt_inf}.
For a function $\omega(n)$ that grows to infinity as $n \to \infty$, we say that $\P$ is $\omega$-far from instance optimality if for every algorithm $n \in \N$ and $A$ there exist a function $f$ and an algorithm $A'$ satisfying 
\begin{align*}
\text{Queries}_A(f) \ge \omega(n) \cdot \max_\pi \text{Queries}_{A'}(f \circ \pi).
\end{align*}
Similarly, $\mathcal{P}$ is $\omega$-close to instance optimality if the above inequality holds with $\leq$ instead of  $\geq$.
\end{definition}




We may now rephrase our initial question about collisions in the language of instance optimality. Is collision detection an instance optimal problem? I.e., is the property of containing a collision instance optimal? Is it far from instance optimality? Suppose that we have query access to a function $f \colon [n] \to [n]$ and are interested in finding a collision. There are two fundamental types of queries to $f$ that one can make: the first option is to query an element $x$ that we have already seen in the past, by which we mean that we have already queried some element $y$ satisfying that $f(y) = x$. This option amounts to extending a ``walk'' on the (oriented) graph of $f$. 
The second option is to query a previously unseen element $x$, which amounts to starting a new walk. The question, then, is the following: is there a universal algorithm $A$ (which initially knows nothing about $f$) for choosing when to start new walks, and which walks to extend at any given time, that is competitive with algorithms $A'$ that know the unlabeled structure of $f$?

\paragraph{Substructure detection problems.}
There are many other types of natural problems in computer science that involve small (i.e., constant-sized) substructure detection. A natural generalization of a collision is a $k$-collision (or multi-collision), where we are interested in finding $k$ different elements $x_1, \ldots, x_k$ satisfying $f(x_1) = \cdots = f(x_k)$. Fixed points, i.e., values $x$ for which $f(x) = x$, are important in local search and optimization problems, in particular for the study of local maxima or minima in an optimization setting. 

Subgraph detection in graphs is also a fundamental problem in the algorithmic literature. Motifs (small subgraphs) in networks play a central role in biology and the social sciences. In particular, detecting and counting motifs 
efficiently is a fundamental challenge in large complex networks, and a substantial part of the data mining literature is devoted to obtaining efficient algorithms for these tasks. It is thus natural to ask: is it essential to rely on specific properties of these networks in order to achieve efficiency? In other words, are subgraph detection and counting instance optimal problems?

Similarly, the problem of finding collisions is a fundamental one in cryptography. Many cryptographic primitives are built around the assumption that finding a collision for some function, $f$ is hard (e.g.\ efficiently signing large documents, commitments with little communication and of course distributed ledgers such as  blockchain). If one wants to break such a cryptographic system, should one spend resources studying the structure of $f$? If finding collisions is instance optimal, that would mean that any attempt to find collisions by studying the structure of a function is destined to be futile.

In this work we focus on the instance optimality of constant-size substructure detection problems in graphs and functions. 
Before stating our results, let us briefly discuss these data models.

%




\paragraph{Models.}
We consider two different types of data access in our work. The first type is that of functions. In this case the input is some function $f$, and the goal is to determine whether $f$ satisfies a certain property (e.g., whether it contains a collision or a fixed point). In this case the goal of an instance optimal algorithm is to perform as well as an algorithm that receives, as an untrusted hint, the unlabeled structure of the algorithm without the actual assignment of labels. Here the complexity is measured as the number of queries an algorithm makes, where each query takes an input $x$ and returns $f(x)$.

The second type of data is of graphs. Here the goal is to find a constant-sized subgraph. An instance optimal algorithm should perform as well as an algorithm that is given an isomorphism of the graph as an ``untrusted hint". For simplicity, we focus on the standard adjacency list model (e.g., \cite{Gonen2011}). Here for each vertex the algorithm knows the vertex set $V$ in advance, and can query the identity of the $i$-th neighbor of a vertex $v$ (for $v$ and $i$ of its choice, and according to some arbitrary ordering of the neighbors), or the degree of $v$. We note that all of the results also hold in other popular graph access models, including the adjacency matrix model and the neighborhood query model. 

Interestingly, graphs and functions seem closely related in our context. Specifically, the problem of finding a claw in a graph (a star with three edges) is very similar to that of finding a collision in a function, and the results we obtain for these problems are for the most part analogous.

\subsection{Main Results and Discussion}
\label{subsec:main_results}



Our main result in this paper characterizes which substructure detection problems in functions and graphs are instance optimal. 
Let us start with the setting of functions. 

A structure $H = ([h], E)$ is an oriented graph where each vertex has outdegree at most one, and we say that $f$ contains $H$ as a substructure if there exist values $x_1, \ldots, x_{h}$ such that $f(x_i) = x_j$ if and only if the edge $i\to j$ exists in $H$. (For example, a collision corresponds to the structure $([3], \{1\to 3, 2 \to 3\})$.) Finally, the property $\mathcal{P}_H$ includes all functions $f$ containing the structure $H$. Our first theorem constitutes a partial characterization for instance optimality in functions.

\begin{theorem} [Instance optimality of substructure detection in functions]\label{thm:main}
Let $H$ be a connected, constant-sized oriented graph with maximum outdegree $1$, and consider the function property $\mathcal{P}_H$ of containing $H$ as a substructure. Then $\P_H$ is
\begin{enumerate}
\item Instance optimal if $H=P_k$ is a simple oriented path of length $k$;
\item $n^{\Omega(1)}$-far from instance optimal for any $H$ that contains a fixed point, two edge-disjoint collisions, or a $3$-collision;
\item \label{funcpart3} $\Omega(\log n)$-far from instance optimal for any $H$ that contains a collision.
\end{enumerate}

\end{theorem}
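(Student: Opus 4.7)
The plan is to exhibit a simple universal algorithm $A_{\text{path}}$ and show it is competitive up to a constant factor with any hint algorithm. $A_{\text{path}}$ repeatedly samples a uniformly random $x\in[n]$ and queries $f(x),f^2(x),\ldots,f^{k-1}(x)$, outputting YES if these $k$ values are pairwise distinct. Denoting by $m(f)$ the number of ``$P_k$-roots'' of $f$ (elements starting a length-$k$ forward walk with pairwise distinct entries), the expected query complexity is $O(kn/m(f))$. For the matching lower bound against any hint algorithm $A'$, I would argue that on input $f\circ\pi$ for a uniformly random $\pi$, the $m(f)$ roots occupy a uniformly random size-$m(f)$ subset of labels, so any successful strategy needs $\Omega(n/m(f))$ expected queries to encounter one; hence $\max_\pi\text{Queries}_{A'}(f\circ\pi)=\Omega(n/m(f))$. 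Since $k$ is constant, this yields an $O(1)$ competitive ratio.

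\textbf{Strategy for parts 2 and 3.} For the separation results, the general plan is to construct, for each ``bad'' $H$, a function $f\in\P_H$ (or a distribution over such) for which a hint algorithm is very efficient while any non-hint algorithm is provably slow. The hint algorithms exploit distinctive structural features---such as high-indegree vertices, identifiable long paths, or multiple edge-disjoint collision pairs---to make targeted queries, whereas non-hint algorithms are forced into generic sampling. Lower bounds proceed via Yao's minimax principle: design a hard distribution $\D$ of inputs in $\P_H$ on which no deterministic algorithm is efficient on average.

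\textbf{Part 2 specifics.} For $H$ containing a fixed point, I would construct an $f$ whose fixed point is the endpoint of a unique long ``beacon'' path of length $n^{\Omega(1)}$, with small decoy substructures elsewhere; the hint algorithm follows $f$-walks from a few random starts until hitting the beacon and then traces to the fixed point in $\polylog(n)$ queries. Analogous constructions handle $H$ containing a $3$-collision (the vertex of in-degree $\geq 3$ serves as the beacon) or two edge-disjoint collisions (two distinct high-indegree vertices that can be mutually cross-referenced). The hard distribution plants the beacon in one of $n^{\Omega(1)}$ possible configurations among decoy functions lacking such a distinguishing feature, forcing any non-hint algorithm to query $n^{\Omega(1)}$ elements to disambiguate.

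\textbf{Part 3 and the main obstacle.} For a single collision, the structure offers only one distinguished vertex (the collision target), so the hint algorithm's advantage must be subtler. The plan is to construct functions whose unlabeled structure reveals approximate depth-information, permitting a hint algorithm to achieve a logarithmic-factor speedup over birthday-style collision sampling. The main obstacle, and the most delicate part of the proof, is establishing the matching $\Omega(\log n)$ lower bound against non-hint algorithms: this likely requires an adversary or entropy-based argument showing that each non-hint query eliminates at most a $1-\Theta(1/\log n)$ fraction of consistent candidate inputs, so an $\Omega(\log n)$-fold overhead over the hint-aware optimum is unavoidable.
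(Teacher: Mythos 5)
Your Part 1 is essentially the paper's argument (sample a random start, walk $k$ steps, repeat; knowing the shape cannot beat this by more than the constant factor $k$) and is fine. The genuine gaps are in Parts 2 and 3, where your sketches miss the mechanisms that actually produce the separations. For Part 2, the ``beacon'' plan does not work as stated. First, in the forward-query model you cannot ``trace to the fixed point in polylogarithmically many queries'': reaching the end of a path of length $n^{\Omega(1)}$ costs one query per step. Second, and more fundamentally, a \emph{unique} long beacon with small decoys yields no separation at all: an algorithm without the certificate can recognize the beacon just as cheaply (walk a few steps from random starts and discard anything that closes a short cycle), so it matches the hint algorithm up to constants. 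The crux is to hide the target among many structures that are indistinguishable \emph{without} the certificate, while planting a secret navigational key that a certificate holder can read off using only forward queries and distance information (labels are useless after random relabeling). The paper does this with $\tilde\Theta(n^{1/4})$ cycles of length $n^{3/4}$, each decorated with entry paths of length $n^{1/4}$ whose spacing along the $i$-th cycle is a distinct prime $p_i\approx n^{1/4}$; the certificate reveals the prime of the cycle containing the hidden entry points of $H$, and the hint algorithm locates that cycle by intersecting $\Theta(\sqrt n)$ short walks with $\Theta(n^{1/4})$ long walks and testing divisibility of the gaps, giving $O(n^{3/4})$ queries versus $\tilde\Omega(n)$ without a certificate. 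Your remark that ``the vertex of in-degree $\geq 3$ serves as the beacon'' has the same problem: knowing the shape does not tell you which labels map into that vertex, so there is no way to steer toward it.

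For Part 3 your high-level intuition (the shape reveals a secret scale) matches the paper, but both halves of the argument are missing. The paper's hard distribution has $\Theta(\log n)$ scales, with $n/2.2^i$ cycles of length $2^i$ at scale $i$, and at one secret scale $t$ it converts $n^{9/10}/1.1^t$ of the length-$2^t$ structures into paths ending in a collision; the certificate holder, knowing $t$, runs walks of length $2^t$ and succeeds in $O(n^{1/10})$ expected queries (note that ``birthday-style sampling'' is not the relevant baseline in this walk-based model). The $\Omega(\log n)$ lower bound is the delicate part, and the heuristic that ``each query eliminates at most a $1-\Theta(1/\log n)$ fraction of consistent inputs'' does not give it: a single lucky query (hitting a path endpoint) can settle an entire scale, and the candidate inputs are far from equiprobable, so a naive entropy/elimination count does not close the argument. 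The paper instead re-generates the distribution online with a red/blue/black coloring in which the secret scale is decided only when the algorithm meets a ``red'' path endpoint (confirmed with probability $1/|I|$), shows that winning requires either an extremely unlikely blue vertex or $\Omega(\log n)$ red vertices, and then proves by an amortized case analysis (paths in which at least half the vertices were queried versus the rest, using a cycle-symmetry argument for the location of the hidden endpoint) that each red vertex costs $\Omega(n^{1/10})$ queries in expectation. Without a concrete construction with these densities and without this online, amortized lower-bound argument (or a substitute of comparable strength), Parts 2 and 3 of the theorem are not established by your proposal.
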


Similarly, in graphs we denote by $\mathcal{P}_H$ the property of containing $H$ as a (non-induced) subgraph. Our next theorem provides a characterization for the instance optimality of subgraph detection. 

\begin{theorem} [Instance optimality of subgraph detection in graphs]\label{thm:graphconstantcharacterization}
Let $H$ be a connected, constant-sized graph with at least one edge. Then $\mathcal{P}_H$ is:
\begin{enumerate}
\item Instance optimal if $H$ is an edge or a wedge (path of length 2);
\item $n^{\Omega(1)}$-far from instance optimal if $H$ is any graph other than an edge, a wedge, or a claw (a star with $3$ edges);
\item \label{graph:part3} $\Omega(\log n)$-far from instance optimal when $H$ is a claw.
  \end{enumerate}
\end{theorem}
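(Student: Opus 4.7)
The plan is to establish the three parts separately, drawing on techniques developed for the analogous function-case result in Theorem~\ref{thm:main}.

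For Part 1, I would design a label-oblivious algorithm $A$ that samples query locations uniformly at random. For edge detection, $A$ queries the degree (or first neighbor) of a uniformly random unqueried vertex, halting once it finds a vertex of positive degree; if $k$ vertices have nonzero degree, $A$ terminates in $O(n/k)$ expected queries. For wedge detection, $A$ probes a random vertex's first two neighbors, or picks a random edge and probes its endpoints for an additional neighbor, giving the same $O(n/k)$ guarantee where $k$ is now the number of vertices lying on some wedge. To match this against any hinted algorithm $A'$, I would apply a symmetrization argument: averaging over a uniformly random relabeling $\pi$ one obtains $\E_\pi[\text{Queries}_{A'}(G \circ \pi)] = \Omega(n/k)$, because a random labeling destroys any label-specific advantage, and $\max_\pi \ge \E_\pi$ then gives the matching lower bound.

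For Part 2, I would perform a case analysis. Any connected $H$ with at least one edge, other than an edge, a wedge, or a claw, falls into one of three categories: (i) $H$ contains a cycle; (ii) $H$ is a star $K_{1,t}$ with $t \ge 4$; (iii) $H$ is a tree of diameter at least $3$ (e.g.\ $P_4$). For each category I would embed a single copy of $H$ into a large padding graph whose global structure --- degree sequence, component types, girth --- lets a hinted algorithm locate $H$ in $n^{o(1)}$ queries, while a labeling-randomized Yao argument forces any unhinted algorithm to make $n^{\Omega(1)}$ queries. Many of these constructions should port directly from the function-case analogues in Theorem~\ref{thm:main}, since a functional graph is just a graph with out-degree one in the adjacency-list model.

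For Part 3, I would construct a family $\{G_j\}_{j=1}^{\Theta(\log n)}$ of graphs on $n$ vertices in which the ``correct'' strategy for finding the claw --- be it degree sampling, neighbor enumeration from a distinguished vertex, or traversal along an embedded path --- depends on the scale $j$. The hinted algorithm on $G_j$ recognizes the global structure and deploys the scale-$j$-optimal strategy to locate the claw in some $T_j$ queries, whereas any unhinted algorithm, unaware of $j$, must run a doubling or hedging schedule and pays an $\Omega(\log n)$ overhead on the worst-case $G_j$; the lower bound follows from an adversary/Yao argument against a uniform distribution over the scales.

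The hard part will be Part 3: the construction must ensure that the $G_j$ across different scales are locally indistinguishable, so that the unhinted algorithm cannot quickly identify $j$ from a handful of queries and thereby skip the hedging overhead, while simultaneously admitting genuinely different optimal hinted strategies at each scale. A secondary difficulty lies in Part 2, where each of the three sub-cases may need its own tailored combinatorial construction; reductions from the function-case analogues in Theorem~\ref{thm:main} should, however, streamline several of them.
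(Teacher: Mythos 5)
Your Part 1 sketch and the high-level idea of Part 3 (a hidden scale out of $\Theta(\log n)$ possibilities, hinted algorithm plays the right scale, unhinted algorithm must hedge) do match the paper's approach, but as written the proposal has two genuine gaps. For Part 2 the entire content of the result is the pair of explicit constructions, and these do not ``port directly'' from the function case: the paper's function-case separation (cycles with entry paths spaced at distinct primes) is not what is used for graphs. Instead the paper uses (a) for $H$ not a star, $\Theta(\sqrt n)$ stars of pairwise distinct degrees with the $|H|$ special vertices planted as leaves of distinct stars and joined into a clique, so the certificate reveals the degrees of the ``good'' star centers (Lemma~\ref{lemma:star}); and (b) for $H=S_k$, $k\ge 4$, a spine path with $\sqrt n$ hanging paths and the star center planted on a random hanging path, so the certificate reveals which hanging path to walk (Lemma~\ref{lemma:paths}). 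Note also that your target of an $n^{o(1)}$-query hinted algorithm is both unnecessary and implausible in the adjacency-list model (the certificate is unlabeled, so the hinted algorithm still needs $\widetilde\Theta(\sqrt n)$ queries just to localize the planted copy); the theorem only needs a polynomial gap, which is what the paper's $\widetilde O(\sqrt n)$ vs.\ $\Omega(n)$ constructions give. Your three-way case split (cycle / big star / diameter-$3$ tree) is also finer than needed: the paper handles all non-stars uniformly.

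For Part 3, the crux you flag (local indistinguishability across scales) is resolved in the paper not by a family of distinct graphs $\{G_j\}$ but by a single distribution (Construction~\ref{const:first_const}) in which paths of \emph{all} lengths $2^i$ coexist, and only the random good scale $t$ gets claws attached to $n^{9/10}/1.1^t$ of its paths; this is what makes the scales information-theoretically hidden. More importantly, the $\Omega(\log n)$ overhead is not obtained by a routine Yao/adversary argument against a uniform mixture over scales: the paper needs an online re-randomization of the construction (the red/blue/black coloring of Construction~\ref{const:red_blue_black}), an argument that winning requires encountering $\Omega(\log n)$ red vertices because each red encounter only succeeds with probability $1/|I|$, and an amortized per-scale bound (Claim~\ref{claim:single_length}) showing that, even adaptively and regardless of queries made at other scales, $\kappa n^{0.1}$ queries inside scale $i$ find a red endpoint with probability only $O(\kappa)$ --- proved via a symmetry trick that treats each path as a cycle whose breaking edge is revealed lazily. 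Without these ingredients (or substitutes for them), your claim that any unhinted algorithm ``pays an $\Omega(\log n)$ overhead'' is an assertion, not a proof; in particular nothing in your sketch rules out an adaptive algorithm that identifies the good scale cheaply and then skips the hedging cost.
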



\paragraph{Almost instance optimality of claws and collisions?}
While we provide a full characterization of those substructures (or subgraphs) $H$ for which $\P_H$ is instance optimal, there remains a notable open problem: is the problem of containing a collision (in functions) or a claw (in graphs) ``almost instance optimal'', e.g., is it $O(\log n)$-close to instance optimality?

The problems of finding a collision in a function and detecting a claw in a graph are closely related and seem to be similar in nature (see Section \ref{sec:3}).
We conjecture that both of these problems are close to being instance optimal. 

\begin{conjecture} \label{conjcollision}
There exists an algorithm $A$ for collision detection (in functions $f \colon [n] \to [n]$) that is $O(\log n)$-close to instance optimality.
\end{conjecture}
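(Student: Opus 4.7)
The plan is to bound $A$'s expected query complexity by an intermediate ``walk-and-restart'' benchmark, and to show that (i) this benchmark lower bounds any structure-aware algorithm up to constants, and (ii) $A$ matches the benchmark up to the $O(\log n)$ factor coming from its round-robin scheduling. Formally, $A$ maintains $m = \lceil \log_2 n \rceil$ active walks $W_1,\dots,W_m$, advancing one walk per round in round-robin order; walk $W_i$ terminates when it (a) discovers a collision, (b) closes a cycle, or (c) reaches length $2^i$, and is then restarted at a vertex chosen uniformly from the unqueried set. The algorithm halts the moment a collision is found.

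For each input $f$ and each length $\ell$, let $\text{walk}(f,\ell)$ denote the expected cost of the naive algorithm that repeats ``pick a uniformly random unqueried vertex and walk forward for $\ell$ steps or until terminating.'' The central technical claim is
\[
\max_\pi \text{Queries}_{A'}(f \circ \pi) \;=\; \Omega\Bigl(\min_\ell \text{walk}(f,\ell)\Bigr)
\]
for every structure-aware algorithm $A'$. The idea is to invoke Yao's principle with the uniform distribution over $\pi$: after conditioning on past observations, every fresh query of $A'$ is a uniformly random unqueried vertex in the unlabeled structure, so $A'$'s only effective freedom is deciding when to abandon a walk. A charging/averaging argument would then partition $A'$'s queries into walk segments, extract a ``typical'' length $\ell$, and show that the fixed-length walk-and-restart strategy with this length is competitive within constants.

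The upper bound on $A$ then follows almost immediately. Let $\ell^*$ minimize $\text{walk}(f,\cdot)$ and pick $i^*$ with $2^{i^*-1} \le \ell^* \le 2^{i^*}$. In isolation, walk $W_{i^*}$ is (up to constants) the fixed-length-$2^{i^*}$ walk-and-restart algorithm, and would discover a collision in $O(\text{walk}(f,\ell^*))$ expected queries. Since $A$ devotes exactly a $1/m$ fraction of its queries to $W_{i^*}$, its total expected cost is $O(m \cdot \text{walk}(f,\ell^*)) = O(\log n) \cdot \min_\ell \text{walk}(f,\ell)$, which by the technical claim is $O(\log n) \cdot \max_\pi \text{Queries}_{A'}(f \circ \pi)$ for every $A'$. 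The cross-walk coupling (walks pool their visited vertices, so collisions discovered by any walk can only help the others) preserves this bound.

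The main obstacle is the lower bound claim, i.e., establishing that the walk-and-restart benchmark is tight against arbitrary adaptive structure-aware strategies. An adaptive $A'$ may vary its walk length based on the specific branch of the functional graph it encounters, and it is conceivable that for some pathological inputs $f$ this adaptivity saves more than a constant factor, in which case $\min_\ell \text{walk}(f,\ell)$ is too coarse a benchmark and a refinement --- such as replacing the minimum by an infimum over distributions of walk lengths, or directly analyzing the structure-aware optimum via information-theoretic arguments on the posterior of $\pi$ given observations --- would be required.
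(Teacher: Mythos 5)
The statement you are proving is a \emph{conjecture}: the paper does not prove it, and in fact the algorithm you describe (round-robin over $O(\log n)$ walks at scales $2^i$, restarting a walk once it reaches length $2^i$, closes a cycle, or finds a collision) is exactly the candidate algorithm the paper itself proposes without analysis. So there is no paper proof to compare against, and the question is whether your sketch closes the gap. It does not: your central technical claim, that $\max_\pi \text{Queries}_{A'}(f\circ\pi) = \Omega\bigl(\min_\ell \text{walk}(f,\ell)\bigr)$ for every certificate-holding $A'$, is precisely the hard open part, and you state it without proof (and acknowledge as much in your last paragraph). The reduction ``after conditioning, each fresh query is a uniformly random unqueried vertex, so $A'$'s only effective freedom is when to abandon a walk'' is not justified: a certificate holder knows the full multiset of unlabeled component shapes, and while walking it can match the prefix of its walk (lengths, collisions with earlier walks, points where the walk closes) against that multiset, updating a posterior over which component it is in and abandoning or persisting adaptively, possibly with a non-uniform distribution over effective lengths and with decisions coupled across walks. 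Showing that such adaptivity never beats the best \emph{fixed}-length walk-and-restart strategy by more than a constant factor is exactly what would make the conjecture a theorem, and neither the averaging/charging argument nor the appeal to Yao's principle is carried out; the paper's own $\Omega(\log n)$ lower-bound construction in Section \ref{sec:3} shows that the choice of scale is where the entire separation lives, so this step cannot be waved through.

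Two smaller points also need care in the upper-bound direction. First, the walks in $A$ are not independent: other walks consume vertices along functional-graph paths (not uniformly), which biases the restart distribution of $W_{i^*}$, and the assertion that cross-walk interactions ``can only help'' needs an actual coupling argument. Second, even granting the benchmark, you need $\text{walk}(f,2^{i^*}) = O(\text{walk}(f,\ell^*))$ for the rounded-up power of two, which is plausible but again unproven for arbitrary $f$. In summary, your proposal is a reasonable plan built around the same algorithm the paper conjectures to work, but it is not a proof; the key lower-bound lemma remains open.
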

\begin{conjecture}
\label{conjgraph}
Determining if a graph contains a claw is $O(\log n)$-close to instance optimality.
\end{conjecture}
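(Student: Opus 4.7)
My plan is to mimic the multi-scale parallel walk algorithm proposed above for collisions in functions, leveraging the close correspondence between the two problems that the paper highlights. The algorithm $A$ maintains $m = \lceil \log_2 n \rceil + 1$ walks $W_0, W_1, \dots, W_m$ on $G$, processed in round-robin fashion: at each round we extend exactly one walk by one neighbor query, and each such query also reveals whether the current endpoint already has at least three distinct neighbors recorded -- in which case we output the claw and halt. Walk $W_i$ is permitted to reach length at most $2^i$; upon reaching that length it is reset and restarted from a uniformly random vertex. Since every scale is advanced once per $O(\log n)$ queries, any single scale $i^{\star}$ that would succeed on its own in expected time $T$ causes the round-robin schedule to finish in $O(T \log n)$ expected queries.

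For each $L \in \{1,2,\dots,n\}$, let $p(L)$ denote the probability that a length-$L$ walk from a uniformly random vertex encounters a vertex of degree at least $3$. By memorylessness of the restarts, a single-scale strategy at length $L$ has expected cost $O(L/p(L))$. Let $L^\star$ minimize $L/p(L)$; rounding up to the nearest power of two loses only a constant factor, so some $i^\star \leq m$ attains $2^{i^\star} = \Theta(L^\star)$, and the discussion of the previous paragraph gives an expected running time of $O((L^\star/p(L^\star)) \log n)$ for $A$ on input $G$. It therefore suffices to show that $T^\star(G) := \max_\pi \min_{A'} \text{Queries}_{A'}(G \circ \pi) = \Omega(L^\star/p(L^\star))$ for every $G$ containing a claw.

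The matching lower bound is the main obstacle. Since $A'$ knows the unlabeled graph and may query adaptively, its queries can in principle be far more structured than those of a random walker from a uniform start, so a naive combinatorial comparison will not suffice. My plan is a Yao-style averaging argument over the random permutation $\pi$: because the true labeling is a uniformly random permutation of the certificate, the posterior distribution on $A'$'s next query, conditioned on the labeled partial view observed so far, should remain close to uniform over the unseen vertices consistent with that view, and hence the expected cost $A'$ must incur before encountering a vertex of degree $\geq 3$ is again governed by $L^\star/p(L^\star)$. Rigorously coupling the adaptive trajectory of $A'$ to a uniform random walk -- and showing that $A'$ cannot exploit its structural knowledge to shortcut the walk by more than a constant -- is the delicate technical step, and is exactly what limits the present approach to an $O(\log n)$ rather than $O(1)$ factor.
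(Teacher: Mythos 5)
The statement you are addressing is not a theorem of the paper but an open conjecture: the paper itself only proposes (informally) the same multi-scale algorithm you describe --- maintaining $O(\log n)$ parallel walks at dyadic length scales, restarting each walk when it reaches length $2^i$ --- and explicitly leaves its analysis open. So your algorithmic idea coincides with the paper's proposed approach, but neither you nor the paper has a proof, and your write-up does not close the gap: it is a plan whose decisive step is missing.

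Concretely, the missing step is the lower bound you defer to at the end, namely that $\RACC(\P_{S_3},G)=\Omega(L^\star/p(L^\star))$ for every $G$, where $L^\star$ optimizes the cost of a ``uniform start, walk length $L$'' strategy. This is not a delicate technicality to be patched by a Yao-style averaging argument; it is the entire content of the conjecture. An algorithm holding the unlabeled certificate is not confined to walk-from-a-random-vertex strategies: it can query degrees, recognize cycle lengths, and correlate local observations with global structural landmarks. The paper's own separation constructions show exactly this phenomenon --- for fixed points the certificate holder navigates using planted prime spacings, and for $k$-stars with $k\ge 4$ it uses unique star degrees --- so the posterior over the unseen part of the graph, conditioned on a labeled partial view \emph{together with the full unlabeled structure}, need not remain close to uniform in any useful sense. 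Proving that for claws no such exploitation of the certificate can beat the best single scale by more than a constant factor (so that the only advantage of the certificate is knowing which scale to try, which your round-robin recovers at an $O(\log n)$ overhead) is precisely what would be needed, and your proposal does not supply an argument for it. A secondary, smaller concern is that even your upper-bound reduction implicitly assumes the benchmark algorithm is of the random-start, fixed-length-walk form, which also requires justification before $\min_L L/p(L)$ can be identified with $\RACC$ up to constants.
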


While we are not yet able to prove the conjectures in full generality, we provide initial evidence toward the correctness, at least in the graph case. Specifically, we prove Conjecture~\ref{conjgraph} for graphs in which claw detection is ``easy'' with a certificate, that is, can be done in up to $O\left(\sqrt{\frac{n}{\log n}}\right)$ queries.


\begin{theorem}[informal; see Theorem \ref{thm:near_instance_optimality}]
\label{thm:inst_opt_graphs_intro}
The graph property of containing a claw is $O(\log n)$-instance optimal when restricted to inputs that require $o\left(\sqrt{\frac{n}{\log n}}\right)$ queries in expectation for an algorithm with an unlabeled certificate.
\end{theorem}
While the result was phrased for undirected graphs, it carries on also for collision detection in functions $f \colon [n] \to [n]$, in the case where the algorithm is allowed to go both ``forward'' (i.e., for an $x$ to retrieve $f(x)$) and ``backward'' (i.e., for $x$ to retrieve elements of the inverse set $f^{-1}(x)$). See the paragraph below on model robustness for further discussion.


We conjecture that the same algorithm we use to show near instance optimality in the regime of Theorem~\ref{thm:inst_opt_graphs_intro}
is also near instance optimal in the general regime. The algorithm $A_{\text{all-scales}}$ is roughly defined as follows. $A_{\text{all-scales}}$ maintains $m = O(\log n)$ parallel ``walks'' $W_1, \ldots, W_m$ at different ``scales'', where in each round (consisting of a total of $m$ queries) $A_{\text{all-scales}}$ adds one step to each of the walks. We try to extend each $W_i$ until it reaches length $2^i$ or until it has to end (either because of finding a collision/claw or due to reaching the end of a path or closing a cycle). In the case that $W_i$ reaches length $2^i$, we ``forget'' it and restart $W_i$ at a fresh random starting point. 

\paragraph{The challenge of merging walks.}
The only barrier to proving the above two conjectures in the general case seems to be our current inability to deal with ``merging'' walks in the algorithm. Any algorithm for collision detection in functions, or claw detection in graphs, can be viewed as maintaining a set of walks. In each step we either choose to start a new walk by querying a previously unseen vertex, or extend an existing walk by querying its endpoint (or one of its two endpoints, in the graph case). The event of merging corresponds to the case that two disjoint walks $W$ and $W'$ meet, resulting in the creation of a longer walk $W \cup W'$. Our proof of Theorem~\ref{thm:inst_opt_graphs_intro} shows the instance optimality of claw detection in the regime where merging is unlikely to happen during the algorithm's run.

\label{subsec:discussion}

\paragraph{Model robustness.}
Throughout the paper we chose to focus on specific models for convenience. However, all our results are model robust and apply in many ``natural" models. In particular, in the case of functions we chose to work on the model where an algorithm can only go forward. That is, an algorithm can query $f(x)$ in a black box manner, and doesn't have the capability to make inverse/backward ($f^{-1}(x)$) queries. Similar characterization results to the graph case also apply if an algorithm can walk backwards; in fact, the model where walking backward is allowed seems to serve as a middle ground between our models for graphs and functions, in the sense that we deal with directed graph properties but are allowed to move in the graph as if it were undirected.

For convenience we wrote all our results for Las Vegas randomized algorithms. All the results in this paper also apply if we require the algorithm to be a Monte Carlo randomized algorithm, i.e., one that is allowed to err with constant probability.

In graphs, we use the popular adjacency list model (which allows sampling random vertices, querying a single neighbor, or querying the degree of a vertex) for data access. The same characterization results also apply under other types of data access, such as the adjacency matrix model or the neighborhood query model (where querying a node retrieves all of its neighbors at once).




\subsection{Technical Overview: Collisions and Fixed Points}

In this section we give an overview of our main ideas and techniques. Many of the ideas are shared between functions and graphs; we chose to present the main ideas for a few canonical problems, such as fixed point and collision detection in functions, and claw detection in graphs. 

Showing the polynomial separation for most graph and function properties amounts, roughly speaking, to providing constructions where a certain substructure is hidden, but where certain hints are planted along the way to help the algorithm holding a certificate to navigate within the graph. Given the constructions, which are themselves interesting and non-trivial, it is not hard to prove the separation. As an example of a polynomial separation construction and result, we discuss the case of a fixed point in functions. For more general statements and proofs regarding these separations, please refer to Sections \ref{sec:4} (for functions) and \ref{sec:5} (for graphs).

The $\Omega(\log n)$-separation for claws and collisions is the most technically involved ``lower bound'' type contribution of this paper. Unlike the polynomial separation results, where the core idea revolves around describing the ``right'' way to hide information, here the construction is more complicated (roughly speaking): the trick of planting hints that allow the algorithm to navigate does not work well, and our arguments rely on the observation that it is sometimes essential for an algorithm without a certificate to keep track of multiple different scales in which relevant phenomena may emerge, compared to an algorithm with a certificate that knows in advance which of the scales is relevant.
The proof is also more challenging, requiring us to closely track counts of intermediate substructures of interest. For the sake of the current discussion, we focus on collision detection, but the proof (and construction) for claws is very similar; see Section \ref{sec:3}.

Before diving into the ideas behind fixed point and collision detection, let us briefly mention the simplest component in the characterization: an instance optimal algorithm for finding a path of length $k$. The algorithm chooses a random value and evaluates the function $k$ times on successive values to see if a path of length $k$ emerges (and not a smaller cycle, or a smaller path ending in a fixed point). This is repeated until a path is found or all values have been exhausted.  It is instance optimal, since knowing the structure of the function does not help; stopping after less than $k$ steps is meaningless, since it only saves us a constant fraction of the queries.

\subsubsection{Fixed point detection: Polynomially far from instance optimality}

\begin{figure} 
    \centering
\setlength{\unitlength}{1cm}
\thicklines

\begin{picture}(10,6)
\put(3,2){\vector(-1,0){0}}
\put(3,2){\line(1,0){2}}
\put(2,3){\line(0,1){2}}
\put(2,3){\vector(0,-1){0}}
\put(2,3){\vector(1,0){0}}

\put(2.71,2.71){\line(0.71,0.71){1.41}}

\put(2.71,2.71){\vector(-0.71,-0.71){0}}
\put(2.71,2.71){\vector(0.71,-0.71){0}}

\put(2,2){\circle{2}} 
\put(3,2){\vector(0,-1){0}}

\put(9,2){\line(1,0){2}}
\put(9,2){\vector(0,-1){0}}
\put(9,2){\vector(-1,0){0}}

\put(8,3){\line(0,1){2}}
\put(8,3){\vector(0,-1){0}}
\put(8,3){\vector(1,0){0}}

\put(8.71,2.71){\line(.71,.71){1.41}}
\put(8.71,2.71){\vector(-0.71,-0.71){0}}
\put(8.71,2.71){\vector(0.71,-0.71){0}}

\put(8,2){\circle{2}}
\end{picture}

\caption{There are $n^{1/4}/\log n$ cycles, where each cycle is of length $n^{3/4}$. Each path entering a cycle is of size $n^{1/4}$. The distance between every two paths on the $i$-th cycle is $p_i$.} \label{fig:fixed}
\end{figure}
We give an overview of the proof that finding a fixed point is polynomially far from instance optimality. Small variations of the constructions can be used to show that the same is true for any structure containing a fixed point, a $3$-collision, or two edge-disjoint collision.

In order to obtain such a result 
we provide a distribution of functions that have several fixed 
points  with a secret parameter so that an algorithm with a certificate (knowing the  parameter in this case) can find a fixed point in $n^{\frac{3}{4}}$ queries while any algorithm that does not know the secret parameter (i.e.\ without  a certificate) requires $\tilde\Omega(n)$ queries to find a fixed point.  

The idea is to construct a function $f$ with $\tilde{\theta}(n^{1/4})$ cycles of size roughly $n^{3/4}$, where one random value $x$ in one of the cycles is turned into a fixed point (which effectively turns the said cycle into a path ending at $x$). It is quite clear that for such a distribution finding the fixed point take time $\tilde\Omega(n)$. But we want to add some information or hint that will allow a certificate holder to find out which is the ``correct" cycle.

To give such a hint we add to each cycle  many paths of length $n^{1/4}$ entering it. The distance between two paths entering the $i$th cycle is some (unique)  prime $p_i$ where $p_i$ is of size roughly $n^{1/4}$ (so roughly $n^{1/2}$ paths enter the cycle). See~\Cref{fig:fixed} for a drawing of this construction.

The hint is the value $p_i$ associated with the unique cycle that ends up with a fixed point. The algorithm (with the hint) we propose will check many (about $\sqrt n$) `short' (length $n^{1/4}$) paths and see when they collide with another set of paths that is supposed to be on the cycles (these are $n^{1/4}$ `long' paths of  length $\sqrt n$).  Once our algorithm finds three paths entering the same cycle which are of distances that are all a multiple of $p_i$, the algorithm will conclude that this is the unique path that at its end the fixed point resides and will continue on the path. On the other hand,  for any algorithm that does not know which of the $p_j$'s is the chosen one and hence the  which path ends in a fixed point, each $x$ residing in a cycle is equally likely to be a fixed point, and thus the algorithm requires $\tilde{\Omega}(n)$ queries in expectation. 


\subsubsection{Finding Collisions:  $\Omega(\log n)$ far from instance optimality}

The distribution constructed above will not work for collision detection, since functions generated according to this distribution will inherently have many collisions. Below we describe a substantially different construction demonstrating that collision detection is (at least) logarithmically far from instance optimality.
We note that the same proof outline, and same construction idea can also be used to show that finding a claw in a graph is not instance optimal.

In order to obtain such a result 
we provide a distribution of functions that have several collisions, again,  with a secret parameter, so that an algorithm with a certificate (knowing the  parameter in this case) can find a collision in $n^{c}$ queries for some constant $c<1/2$, while any algorithm that does not know the secret parameter (i.e.\ without  a certificate) requires $\Omega(n^{c} \log n)$ queries to find a collision.

The hard distribution is as follows: there are $\log n$ length scales. For scale $i$ we have $n/{2.2^i}$ cycles, each  of length $2^i$ (note that the total number of nodes in all cycles is $O(n)$). For a uniformly randomly chosen scale $t$ we turn $n^{1-c}/{1.1^t}$ of the cycles to be a path ending in a loop of size $2$ at the end (this is a collision). 

The secret parameter is the value of $t$. The algorithm with a certificate simply picks a value at random and follows the path it defines for $2^t$ steps. The algorithm stops if (i)  a collision is discovered or (ii) the path has reached length $2^t$ without a collision  or (iii) the path created a cycle of length $2^i < 2^t$. 
The probability of picking a node on a good path (one ending in  a collision of length $2^t$)  is 
$$\frac{n^{1-c} \cdot 2^t}{n \cdot 1.1^t} $$
(since there are $n^{1-c}/{1.1^t}$ such cycles, each of size $2^t$).
The cost (in terms of queries)  of picking a value on the wrong size cycle, say of size $2^i$, is  $\min(2^i,2^t)$. It is possible to show that the total expected work when picking the wrong value is $O(2^t/1.1^t)$.\footnote{the constant $1.1$ is a bit arbitrary, and other constants larger than 1 will also work.} Therefore the expected amount of work until a good path is found is    
$$\frac{2^t}{1.1^t} \cdot \frac{1.1^t \cdot n }{n^{1-c} \cdot 2^t} = n^{c}.$$ 
The result is $O(n^{c})$ queries in expectation. 

We next show that any algorithm that does not know $t$ requires $\Omega(n^{c} \log n)$ queries, which results in a logarithmic separation from the algorithm with a certificate. In essence, this means that the algorithm needs to spend a substantial effort at all possible scales (instead of just one scale, $t$) in order to find the collision. 

Consider an algorithm without a certificate, and suppose that we choose the secret parameter $t$ in an online manner as follows. Our initial construction starts with $n/{2.2^i}$ cycles of length $i$. For each such $i$, we pick $n^{1-c}/1.1^i$ of the cycles of length $2^i$, and color one of the nodes in each such cycle by red (call these points the ``$i$-red points''. Note that at this time we have no information whatsoever on $t$. Now, each time that a red point on some cycle of length $2^i$ is encountered, we flip a coin with an appropriate probability (which initially is of order $1/\log n$) to decide whether the current value of $i$ is the secret parameter $t$ or not. If it is, then we turn all $i$-red points (for this specific value of $i$) into collisions as described above, and remove the color from all other red points (in paths of all possible lengths). Otherwise, we remove the color from all $i$-red points (for this specific $i$) and continue.

It turns out that this construction produces the same distribution as we described before (where $t$ was chosen in advance). However, it can also be shown that to find a collision with constant probability, $\Omega(\log n)$ red points need to be encountered along the way. The rest of the analysis provides an amortized argument showing that the expected time to find each red vertex by any algorithm is $\Omega(n^{c})$. The main idea of the amortized analysis (which we will not describe in depth here) is to treat cycles in which we made many queries -- at least a constant fraction of the cycle -- differently from cycles where we made few queries. For cycles of the first type, the probability to find a red point (if one exists) is of order $2^i / n^c$, but the amount of queries that we need to spend is proportional to $2^i$. For cycles of the second type, each additional query only has probability $O(1 / n^c)$ to succeed finding a red point, but the query cost is only $1$. In both cases, the rate between the success probability and the query cost is of order $1/n^{c}$.


\remove{
\subsubsection{Overview of \Cref{thm:graphconstantcharacterization}}
In this section we give an overview of~\Cref{thm:graphconstantcharacterization}. The proof that finding claw is $O(\log n)$-far from being instance optimal uses the same construction as that showing that finding collisions in functions is $O(\log n)$-far from being instance optimal. 

We show that unless $H$ is a claw, a wedge, or a single edge, it is polynomially far from instance optimality. We have two constructions, one if $H$ is a $\kstar{k}$, with $k \geq 4$ ($k=3$ is the case of claws) and one if $H$ is not.

\paragraph{Finding a \kstar{k} is polynomially far from being instance Optimal}

The distribution is one where we have a path, $P$ of length $\sqrt{n}$, and each vertex in the path is the start of an additional path of length $\sqrt{n}$. See~\Cref{fig:pathsIntro}. Finally, a vertex, $u$ is picked uniformly at random, and  connected to $k$ vertices.

It is quite clear that an algorithm without a certificate requires $\Omega(n)$ queries. An algorithm with a certificate, will find the path $P$ in $\sqrt{n}$ queries. The algorithm will then know, due to this unlabeled certificate,which of the paths originating from $P$ to follow in order to find $u$.

\begin{figure}
\begin{center}
\begin{tikzpicture}[scale=0.16]
\tikzstyle{every node}+=[inner sep=0pt]
\draw [black] (12.8,-3.7) circle (3);
\draw (12.8,-3.7) node {$v_0$};
\draw [black] (12.8,-15.1) circle (3);
\draw (12.8,-15.1) node {$v_1$};
\draw [black] (61.1,-15.1) circle (3);
\draw [black] (12.8,-26.5) circle (3);
\draw (12.8,-26.5) node {$v_2$};
\draw [black] (61.1,-26.5) circle (3);
\draw [black] (12.8,-49) circle (3);
\draw (12.8,-49) node {$v_{\sqrt{n}}$};
\draw [black] (61.1,-49) circle (3);
\draw [black] (12.8,-12.1) -- (12.8,-6.7);
\draw [black] (12.8,-18.1) -- (12.8,-23.5);
\draw [black] (15.8,-15.1) -- (58.1,-15.1);
\draw (36.95,-15.6) node [below] {$Path\mbox{ }of\mbox{ }length\mbox{ }\sqrt{n}$};
\draw [black] (12.8,-23.5) -- (12.8,-18.1);
\draw [black] (58.1,-15.1) -- (15.8,-15.1);
\draw [black] (15.8,-26.5) -- (58.1,-26.5);
\draw (36.95,-26) node [above] {$Path\mbox{ }of\mbox{ }length\mbox{ }\sqrt{n}$};
\draw [black] (12.8,-29.5) -- (12.8,-46);
\draw [black] (12.8,-46) -- (12.8,-29.5);

    \draw [black] (15.8,-49) -- (58.1,-49);
\draw (36.95,-49.5) node [below] {$Path\mbox{ }of\mbox{ }length\mbox{ }\sqrt{n}$};
\end{tikzpicture}
 \caption{Construction showing that finding a $\kstar{k}$ is not instance optimal. \label{fig:pathsIntro}}
\end{center}
\end{figure}

\paragraph{TODO} \tomer{TODO}
Next we give an overview of the proof that finding a subgraph, $H$ that isn't a \kstar{k}, is far from being optimal. We will The distribution is one where we $\theta(\sqrt{n})$ stars, each with a unique degree. Each of the stars will have degree roughly $\sqrt{n}$. $|H|$ vertices of degree 1 are then chosen uniformly at random, and are then connected.

It is quite clear that an algorithm without a certificate requires $\Omega(n)$ queries. An algorithm with a certificate, will find the path $P$ in $\sqrt{n}$ queries.

An algorithm with a certificate, on the other hand, will know the degree of the star center for each of the vertices chosen to form $H$. Thus, an algorithm with a certificate will only have to look for $H$ on a subgraph of size $O(\sqrt{n})$, which can trivially be done in $O(\sqrt{n})$.
}

\subsubsection{Finding claws: $O(\log n)$-close to instance optimality in merging-free regime}
The proof that collision detection is $\Omega(\log n)$-far from instance optimality extends very similarly to claw detection in graphs. We next show that this bound is tight in the ``low query'' regime where $G$ admits an algorithm for claw detection (with a certificate) using $q \leq \alpha \sqrt{\frac{n}{\log n}}$ queries, for a small constant $\alpha > 0$.

Every claw-free graph is a union of disjoint cycles and paths. Thus, every algorithm for finding a claw can be viewed as maintaining a set of walks of several types: Some of the walks may have closed a cycle; others may have reached one or two ends of the path $P$ that the walk resides in. All other walks simply consist of consecutive vertices in the interior of some path in $G$. 

Clearly, walks of the first type -- those that have closed a simple cycle -- cannot be extended to find a claw. Our first part of the proof is a reduction eliminating the need to consider walks of the second type (i.e., ones that have reached at least one endpoint). Specifically, we show that for every graph $G$ on $n$ vertices there is a graph $G'$ on $n+2$ vertices satisfying several properties:
\begin{itemize}
\item In all simple paths in $G'$, either both endpoints have degree $1$ or both are degree $3$ or more (i.e., are claw centers).
\item The query complexity of detecting a claw in $G'$ is equal, up to a multiplicative constant, to that of $G$. This is true both with or without an unlabeled certificate.
\item The construction of $G'$ from $G$ is deterministic. Thus, an unlabeled certificate for $G'$ can be constructed if one has an unlabeled certificate for $G$.
\end{itemize}
The construction is very simple: we add two new vertices and connect them to all degree-$1$ vertices (and to each other, if needed). 

\paragraph{Merging without claws requires $\Omega\left(\sqrt{\frac{n}{\log n}}\right)$ queries.}
The second part of our argument shows that one cannot make two walks merge in the first $\alpha \sqrt{{n}/{\log n}}$ queries (with constant probability and for some small constant $\alpha > 0$) without finding a claw beforehand. The proof relies on an advanced birthday paradox type analysis that is suitable for adaptive algorithms. For the purpose of this part, one may consider $G$ as a union of paths and cycles (without any claws). We bucket these paths and cycles into $O(\log n)$ groups, where in each group all paths and cycles are of roughly the same length -- within a multiplicative factor of $1.1$ from each other. Suppose that after each ``new'' (previously unseen) vertex is queried, the algorithm immediately knows to which bucket this vertex (and the corresponding walk emanating from it) belongs. We show that even in this case the lower bound holds.

Focusing on a specific bucket, let $\mathcal{W}$ be the set of all walks in this bucket at some point in the algorithm's run. An assignment of walks to ``locations'' within the bucket is considered valid if no two walks intersect.
We argue that the set of locations of walks is uniformly random among all sets of valid configurations. 
Similarly to our analysis of the $\Omega(\log n)$ separation (see the previous subsection), our next step in this part deals separately with ``short walks'' and ``long walks''. Very roughly speaking, our proof shows that walks have sufficient ``degree of freedom'' so that their probability to merge will be very small, even if provided that they lie in the same bucket.
We omit the precise details of the analysis from this overview, and point the reader to Section \ref{sec:nomerge}.

\paragraph{Asymptotic stochastic dominance of $A_{\text{all-scales}}$}
The third and last part of the argument shows that the algorithm $A_{\text{all-scales}}$ mentioned in Section~\ref{subsec:main_results} stochastically dominates any other algorithm (asymptotically) in the following sense. Conditioned on an algorithm $A$ (making $q = O\left(\sqrt{\frac{n}{\log n}}\right)$ queries) not encountering any merging walks during its operation, the algorithm $A_{\text{all-scales}}$ is at least as likely to find a claw, while using a slightly larger amount of roughly $4q \log n$ queries. 

Recall first how $A_{\text{all-scales}}$ is defined. For $0 \leq i < \log n$ let $A^{(i)}$ be the algorithm which repeatedly does the following: pick a random vertex in $G$; make a bidirectional walk from it for $2^{i+1}$ steps, or until a claw is found (leading to a ``win'') or an endpoint is found (leading to an early termination). 
$A_{\text{all-scales}}$ maintains one copy of each $A^{(i)}$ (for a total of $\log n$ copies), and alternates between them: each round of $A_{\text{all-scales}}$ makes $\log n$ queries, one for each $A^{(i)}$.

Now let $A$ be any algorithm operating on graphs on $n$ vertices.
Consider the event $E_i = E_i(G, q, A)$ that $A$ finds a claw (for the first time) after at most $q$ queries through the following process. $A$ queries a ``new'' vertex $v$ of distance between $2^{i}$ and $2^{i+1}-1$ from the claw center $w$, and finds it by completing a walk from $v$ to $w$. 
We claim that the event that $A$ finds a claw is equal to the union of the events $E_i$ (for $0 \leq i \leq \log n$).

The proof goes through a careful coupling argument between $A$ and any fixed $A^{(i)}$ (separately). Through the coupling, we may assume that $A$ and $A^{(i)}$ have access to the same source of randomness generating ``new'' vertex queries, that is, the $j$-th new vertex starting a walk $W_j$ in $A$ is also the $j$-th new vertex in $A^{(j)}$. We may further assume, by symmetry considerations, that $A$ respects an ``older first'' principle: if two walks $W$ and $W'$ have exactly the same ``shape'' (within $G$) at some point, then $A$ will prefer to extend the walk among them that is older. Now, suppose that $A$ finds the claw in its walk $W_j$, of distance between $2^i$ and $2^{i+1}-1$ from $v_j$. By the ``older first'' principle, this implies that for all vertices $v_{j'}$ with $j' < j$ that are at least $2^i$ away from an endpoint (call these values of $j'$ good), $A$ must have walked for at least $2^i$ from $v_{j'}$ so far. For all other values of $j' < j$ (call these bad), $A$ walked a number of steps that is at least the distance from an endpoint.
In contrast, $A^{(i)}$ makes up to $4 \cdot 2^{i}$ queries around each good vertex, and up two times as many queries as $A$ around each bad one. 

\subsection{Related Work}
The term instance optimality was coined by Fagin, Lotem and Naor \cite{FaginLN03}. If an algorithm always outperforms every other algorithm (up to a constant), particularly the algorithm that is aware of the input distribution, it is defined as being instance optimal. This definition is very strict, and thus there are numerous situations in which it cannot be meaningfully attained. As a result, several works (including ours) address this by necessitating that an instance optimal algorithm be competitive against a certain class of algorithms (for example, those that know an a permutation of the input, rather than the input itself). This notion of instance optimality is sometimes referred to as ``unlabeled instance optimality".

\paragraph{Unlabeled instance optimality.} Afshani, Barbay, and Chan~\cite{AfshaniBC17} considered and presented unlabeled instance-optimal algorithms for locating the convex hull and set maxima of a set of points. Valiant and Valiant \cite{ValiantV16} developed an unlabeled instance optimal algorithm for approximating a distribution using independent samples from it (with the cost function being the number of samples). Later \cite{ValiantV17}, they provided an algorithm for the identity testing problem. Here the problem is  determining, given an explicit description of a distribution, whether a collection of samples was selected from that distribution or from one that is promised to be far away. More recent works on instance optimality in distribution testing include, for example, the work of Hao et al.~\cite{HaoOSW18,HaoOrlitsky2020}.

Grossman, Komargodski, and Naor examined unlabeled instance optimality in the query model \cite{GrossmanKN20}. Their work relaxes the definition of instance optimality by no longer requiring an optimal algorithm to compete against an algorithm that knows the entire input, but rather against an algorithm that knows \emph{something} about the input. Arnon and Grossman \cite{ArnonG21} define the notion of min-entropic optimality, where instead of relaxing the ''for-all" quantifier over algorithms, they relax “for-all” quantifier over inputs. That is, for an
algorithm to be optimal it is still required to perform as well any other algorithm; however it is no longer required to be optimal on every input distribution, but rather only on a certain class of inputs. 

\paragraph{Instance optimality in graphs.}
Subgraph detection and counting has not been thoroughly investigated from the perspective of instance optimality; establishing a unified theory of instance optimality in this context remains an intriguing open problem. However, instance optimality has been investigated for other graph problems of interest. For example, Haeupler, Wajc and Zuzic \cite{HWZ21} investigate instance optimality and a related notion called universal optimality in a family of classical and more ``global'' distributed graph problems such as computing minimum spanning trees and approximate shortest paths. 

\paragraph{Strong instance optimality.} The original, robust definition of instance optimization calls for an algorithm to be superior to every other algorithm.  For getting the top $k$ aggregate score in a database with the guarantee that each column is sorted, \cite{FaginLN03} provided an instance-optimal algorithm. Demaine, L\'{o}pez-Ortiz, and Munro \cite{DemaineLM00} provided instance-optimal algorithms for locating intersections, unions, or differences of a group of sorted sets. Baran and Demaine \cite{BaranD04} showed an instance optimal algorithm for finding the closest and farthest points on a curve. Grossman, Komargodski and Naor \cite{GrossmanKN20} and Hsiang and Liu \cite{HsiangL23} studied instance optimality in the decision tree model. 



 \paragraph{Cryptography and complexity.}
 The problems of finding a constant sized structure in $f$, where $f$ is a total function guaranteed to contain the structure at hand has been studied extensively and is a fundamental problem in computational complexity and there are complexity classes in TFNP defined around it \cite{MegiddoP91,Papadimitriou94}. We note that we can slightly change the functions in our paper to also make them total problems, and all our proofs will still hold. 
 
As mentioned above, the problem of finding collisions is a fundamental one in cryptography. The standard definition is that of a collision resistant hash (CRH),  where finding a collision is a computationally hard problem. Such functions are very significant for obtaining efficient signature schemes and commitment to large piece of data using little communication. But other related structures are also considered in the literature: for instance, functions where it is hard to find multiple collisions~\cite{KomargodskiNY18}. 
 

\subsection{Organization}
In Section \ref{sec:prelims} we formally define the computational models we use as well as the notions of an unlabeled certificate (i.e., the ``untrusted hint'') and instance optimality. In Section \ref{sec:3} we prove that finding a collision in functions, and a claw in a graph is not instance optimal. In Section \ref{sec:4} we prove that functions that are not subsets of the collisions of two paths, followed by an additional path is polynomial far from being instance optimal. Such properties include many fundamental structures such as finding a fixed point, or a multi-collision. This gives us a complete characterization in the function model. In~\Cref{sec:5} we complete the full characterization for the model defined on graphs. We do this by proving that finding a path of length one or two is instance optimal, and that determining if a graph contains a subgraph $H$ is polynomially far from being instance optimal, unless $H$ is a path of length 1 or 2, or a claw. Finally, in~\Cref{sec:AlmostIO} we prove that finding a claw in a graph is $O(\log(n))$ close to being instance optimal among graphs for which finding a claw or collision can be done in $O(\sqrt{n} /\log(n))$, by an algorithm with an unlabeled certificate.

\section{Preliminaries}
\label{sec:prelims}

\subsection{Functions}\label{sec:prelimsfunc}


Let $n \in \N$.
A \emph{property} $\P$ of functions is a collection of functions $f \colon [n] \to [n]$ that is closed under relabeling. That is, if $f \in \P$ then $f \circ \pi \in \P$ for any permutation $\pi \colon [n] \to [n]$. 
We sometimes say that $f$ \emph{satisfies} $\P$ when $f \in \P$.

In this work we will be interested in the property of containing some constant size substructure $H$ (e.g., a collision or a fixed point). Let $H$ be an oriented graph with $h$ vertices. Suppose further that the outdegree of each vertex in $H$ is at most $1$.\footnote{Note that a function can be viewed as an oriented graph where the outdegree is always equal to one, hence $H$ can appear as a substructure in such a function if and only if the outdegrees are at most $1$.} The property $\mathcal{P}_H$ consists of all functions $f \colon [n] \to [n]$ satisfying the following. There exist $h$ disjoint elements $x_1, \ldots x_h \in [n]$ and a mapping between $V(H)$ and $\{x_1, \ldots, x_h\}$, so that $H$ contains an edge between $u$ and $v$ if and only if $f(x_u) = x_v$, where $x_u$, $x_v$ are the mappings of $u$ and $v$, respectively.

%
A Las Vegas (randomized) algorithm for the property $\mathcal{P}$ in the query model is a randomized decision tree that determines membership in the property $\P$ with probability $1$ (i.e., it is always correct, and the quantity of interest is the number of queries the algorithm requires).
Given a Las Vegas randomized algorithm $A$ (which knows $n$) with random seed $r$ and given $f \colon [n] \to [n]$,
denote by $\Query^\P_A(f,r)$ the amount of queries that $A$ makes when evaluating if $f$ satisfies a property using the random seed $r$. Usually when the property $\mathcal{P}$ is clear from context we omit it from the notation.

We write $\Query^\P_A(f) = \E_{r} \Query^\P_A(f,r)$ (or $\Query_A(f)$) to denote the expected number of queries that the algorithm $A$ makes over input $f$, where the expectation is taken over all possible random seeds.

\begin{definition}[Unlabeled Certificate Complexity]
  The \emph{Unlabeled Certificate} complexity of a property $\P$, and function $f$ is:
\begin{align*}
    \RACC(\P, f) =  \min_{A \in \A_\P} \max_{\pi} 
    \Query_A(f \circ \pi),
\end{align*}
where $\A_{\P}$ is the set of all Las Vegas algorithms for evaluating if $f$ satisfies property $\P$, and $\pi$ ranges over all permutations of $[n]$.
\end{definition}

So far, we have considered only properties of functions of a given size $n$. Our definition of instance optimality is asymptotic in its nature and so we extend the definition of a property by allowing it to have functions of different sizes. Suppose that  $\mathcal{P}$ is a property which contains graphs of all sizes $n \geq N$, for some constant $N$. 
We can then define a corresponding sequence of algorithms $\{A_n\}_{n \geq N}$, where $A_n$ is responsible for graphs of size $n$.
\begin{definition}[instance optimality]
  A sequence of properties $\P=\{\P_n\}_{n\in \N}$ invariant under a relabeling is \emph{instance optimal} if there exist an absolute constants $c > 0$,
  and a sequence
  $\A = \{A_n\}_{n\in \N}$, where each $A_n$ is a Las Vegas algorithm for $\mathcal{P}_n$, such that on every input $f \colon [n] \to [n]$, it holds that
  \begin{align*}
    \Query^{\P_n}_{A_n}(f) \leq c \cdot \RACC(\mathcal{P}_n,f)
  \end{align*}
\end{definition}
Next we present the analogous definition for being far from instance optimality.
\begin{definition}[\emph{$\omega$-far} from instance optimality]
Let $\omega\colon \N \to \N$ denote a function that grows to infinity as $n \to \infty$.
We say that a sequence of algorithms $\{A_n\}_{n \in \N}$ evaluating if a sequence of functions $\{f_n\}_{n \in \N}$ (where $f_n \colon [n] \to [n]$) satisfies a sequence of properties $\P = \{\P_n\}_{n \in \N}$ is \emph{$\omega$-far from instance optimal} if there exists a constant $N$ where for all $n \ge \N$ it holds that:
  \begin{align*}
   \Query^{\P_n}_{A_n}(f_n) \ge \omega(n) \cdot \RACC(\P_n,f_n).
  \end{align*}
  We say that the sequence of properties $\{\P_n\}$ is \emph{$\omega$-far from instance optimal} if any sequence of algorithms $\{\A_n\}_{n \in \N}$ evaluating it is $\omega$-far from instance optimal.
\end{definition}
In particular, a property $\P$ (or more precisely a sequence $\{\P_n\}$ of properties of functions $f \colon [n] \to [n]$, for any $n$) is polynomially far from instance optimal, if it is $\omega$-far for some $\omega(n) = n^{\Omega(1)}$ polynomial in $n$.

\subsection{Graphs}

A graph property $\P$ is a collection of graphs that is closed under isomorphism. That is, if $G = (V,E) \in \P$ and $\pi \colon V \to V$ is a permutation, then the graph $G^\pi = (V,E^\pi)$ where $(u,v) \in E$ if and only if $(\pi(u), \pi(v)) \in E^{\pi}$ satisfies $G^{\pi} \in \P$.

Here we consider the adjacency list query model. We assume that the vertex set $V$ is given to us in advance. Given a single query, an algorithm can either (i) find the degree $d_v$ of $v$, or (ii) find the $i$-th neighbor of $v$ (in some arbitrary ordering). We note that other variants of the adjacency list model in the literature also allow pair queries, that is, given $u,v\in V$ the algorithm can ask whether there is an edge between $u$ and $v$. Our results hold word for word also in this variant.




Definitions of instance optimality are analogous to~\Cref{sec:prelimsfunc}, except here the unlabeled certificate is an isomorphism of the graph.

\begin{definition}[Unlabeled certificate complexity]
  The \textbf{randomized unlabeled certificate} complexity of a graph property $\P$ with respect to a graph $G$ is defined as follows.
  \begin{align*}
    \RACC(\P, G) =  \min_{A \in \A_\P} \max_{\pi \in \Gamma} 
    \Query^\P_A(\pi(G)),
  \end{align*}
  where $\Gamma$ is the set of all permutations of the vertex set, and $\A_\P$ is the set of all Las Vegas randomized algorithms that always evaluate membership in $\P$ correctly.
\end{definition}

\begin{definition}[instance optimality]
  A sequence of graph properties $\P=\{\P_n\}_{n\in \N}$ is \emph{instance optimal} if there exist a constant $c > 0$ and a sequence of Las Vegas randomized algorithms
  $\A = \{A_n\}_{n\in \N}$ for $\P$, such that on every input $G$ on $n$ vertices, it holds that
  \begin{align*}
    \Query^{\P_n}_{A_n}(G) \leq c \cdot \RACC(\mathcal{P}_n,G)
  \end{align*}
\end{definition}

\begin{definition}[$\omega$-far from instance optimality]
Let $\omega\colon \N \to \N$ denote a function that grows to infinity as $n \to \infty$.
A sequence of algorithms $\{A_n\}_{n \in \N}$ evaluating if a sequence of graphs $\{G_n\}_{n \in \N}$ (where $G_n$ is a graph of order $n$) satisfies a sequence of properties $\P = \{\P_n\}_{n \in \N}$ is \emph{$\omega$-far from instance optimal} if there exists a constant $N$ where for all $n \ge \N$ it holds that:
  \begin{align*}
   \Query^{\P_n}_{A_n}(G_n) \ge \omega(n) \cdot \RACC(\P_n,G_n).
  \end{align*}
  We say that the sequence of properties $\{\P_n\}$ is \emph{$\omega$-far from instance optimal} if any sequence of algorithms $\{\A_n\}_{n \in \N}$ evaluating it is $\omega$-far from instance optimal.
\end{definition}

We conclude with standard graph theory terminology. A simple path with $k \geq 2$ vertices (in an undirected graph) is a collection of disjoint vertices $v_1, \ldots, v_k$ where $v_i$ is connected to $v_{i+1}$ by an edge for each $i=1,2,\ldots,k-1$. A simple cycle is defined similarly, but with $v_k$ also connected to $v_1$. 
Finally, the claw graph plays a central role in this work.
\begin{definition}[Claw]
The \emph{Claw graph}, $S_3$, is a \kstar{3}. That is, a four vertex graph consisting of a single vertex, of degree three, which is connected to three vertices each with degree one.
\end{definition}

\section{Collisions and Claws: Logarithmic Separation} \label{sec:3}


In this section we formally present and analyze our construction proving~\Cref{thm:main}~\Cref{funcpart3} and~\Cref{thm:graphconstantcharacterization}~\Cref{graph:part3}: that detecting collisions (in functions $f\colon [n] \to [n]$) and claws (in graphs) is not instance optimal. 

\begin{theorem}[~\Cref{thm:graphconstantcharacterization}~\Cref{graph:part3} Reworded]
The property $\P_{S_3}$ of containing a claw is $\Omega(\log(n))$-far from instance optimality.

\end{theorem}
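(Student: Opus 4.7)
The plan is to mirror the collision-in-functions construction and analysis sketched in the Technical Overview, adapting the structures to undirected graphs and replacing each planted two-cycle collision by a planted claw. Concretely, for some constant $c < 1/2$, I will build a distribution on graphs with $\log n$ length scales: for every $i \in \{1, \ldots, \log n\}$ include $n/(2 \cdot 2^i)$ vertex-disjoint cycles of length $2^i$, using $O(n)$ vertices in total. Then pick a secret scale $t \in \{1, \ldots, \log n\}$ uniformly at random, pick $n^{1-c}/1.1^t$ of the length-$2^t$ cycles, and on each chosen cycle attach two pendant vertices to a single distinguished vertex, turning that vertex (whose degree becomes $4$) into the center of a claw. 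Every claw in the resulting graph thus sits on a length-$2^t$ cycle and can be detected by a single degree query at its center.

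For the upper bound, the algorithm with the unlabeled certificate reads $t$ directly off the certificate (the unique scale at which pendant vertices appear) and then repeatedly samples a random vertex $u$ and walks along the cycle through $u$ for up to $2^t$ steps, querying the degree at each visited vertex to detect a claw center and aborting the walk whenever it closes a cycle of length smaller than $2^t$ or reaches length $2^t$ without success. Because a random $u$ lies on a good cycle with probability $\Theta(2^t n^{1-c}/(1.1^t \cdot n))$, while the expected query cost per attempt telescopes geometrically to $O(2^t / 1.1^t)$, the total expected cost is $O(n^c)$. Hence $\RACC(\P_{S_3}, G) = O(n^c)$ for every $G$ in the support of this distribution.

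The lower bound, the technically delicate direction, will proceed via Yao's principle combined with an online adversary that defers the choice of $t$ until forced. Initially the adversary picks, for every scale $i$, a set of $n^{1-c}/1.1^i$ candidate cycles of length $2^i$ and marks one vertex on each as red, without committing any additional edges. Whenever a certificate-free algorithm first visits a red vertex on a length-$2^i$ cycle, the adversary flips a biased coin with probability $\Theta(1/\log n)$: on heads it commits $t = i$, promoting every $i$-red vertex to a claw-center and retracting all red marks at other scales; on tails it only retracts the red mark of the currently visited cycle. A direct coupling argument will show that this online process induces precisely the offline distribution, after which it suffices to prove that with constant probability $\Omega(\log n)$ distinct red vertices must be discovered before a commit occurs.

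The main obstacle is the amortized charging argument establishing that each freshly discovered red vertex costs the algorithm $\Omega(n^c)$ expected queries, whence $\Omega(\log n)$ required discoveries deliver the $\Omega(n^c \log n)$ lower bound. I plan to classify the cycles the algorithm has touched into \emph{heavily queried} cycles, in which at least a constant fraction of the vertices have been probed, and \emph{lightly queried} cycles. On a heavily queried length-$2^i$ cycle the probability of hitting its red vertex is $\Theta(2^i/n^c)$ against $\Theta(2^i)$ queries already spent, yielding a success-per-query rate of $\Theta(1/n^c)$; on a lightly queried cycle each individual additional query encounters a red vertex with probability only $O(1/n^c)$ at unit cost, again giving rate $O(1/n^c)$. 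Summing these contributions over all touched cycles gives the required amortized bound. The claw case inherits the collision analysis with only cosmetic modifications, because adjacency-list walks around a cycle behave identically to the forward function queries used in the collision argument; this yields the claimed $\Omega(\log n)$ separation between $\RACC$ and certificate-free query complexity.
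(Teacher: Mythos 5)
Your proposal is essentially the paper's own proof: the multi-scale cycle/path construction with a secret scale $t$, the $O(n^c)$ certificate algorithm that walks $2^t$ steps from random starts, the online adversary that defers the choice of $t$ and marks one red vertex per candidate cycle, the coupling to the offline distribution, the ``need $\Omega(\log n)$ red discoveries'' step, and the heavy/light (long/short) amortized charging showing each red vertex costs $\Omega(n^c)$ queries are all exactly the paper's argument (the only cosmetic difference being that you plant the claw as a degree-$4$ vertex inside a cycle, whereas the paper attaches pendants to the ends of paths). One parameter slip needs fixing: with $n/(2\cdot 2^i)$ cycles at every scale $i\in\{1,\dots,\log n\}$ the total vertex count is $\Theta(n\log n)$, not $O(n)$, and at the largest scales there are fewer than $n^{1-c}/1.1^t$ cycles available to mark; the paper avoids both issues by taking $a_i = n/(2.2)^i$ (geometric decay in $i$) and restricting the scales to $\tfrac{1}{1000}\log n \le i \le \tfrac{1}{100}\log n$ so that $2^t \le n^{c}$.
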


\begin{theorem}[~\Cref{thm:main}~\Cref{funcpart3} Reworded]\label{thm:3.2}
Fix $a,b,c \in \N$. Let $H = H_{a,b,c}$ denote the oriented graph containing two paths of length $a$ and $b$ which collide in a vertex, followed by a path of length $c$. 
The function property $\P_H$ is $\Omega(\log(n))$-far from instance optimal. 
\end{theorem}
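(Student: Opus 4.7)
The plan is to prove the separation via an explicit hard distribution on functions $f\colon [n]\to[n]$ indexed by a secret scale $t \in \{1,\dots,\log n\}$. For each scale $i$, the construction contains $n/(2.2)^i$ disjoint cycles of length $2^i$; the geometric decay ensures the total number of vertices used is $O(n)$. We draw $t$ uniformly at random, and at the secret scale we designate $n^{1-\gamma}/(1.1)^t$ of the $2^t$-cycles as \emph{collision cycles}, where one edge is redirected so that the cycle becomes a path ending in a $2$-loop. Here $\gamma \in (0,1/2)$ is any convenient constant. To realize the full substructure $H_{a,b,c}$ rather than a bare collision, we decorate each collision cycle by attaching dangling incoming paths of lengths $a$ and $b$ at the collision target and replacing the outgoing $2$-loop by a path of length $c$, so that each collision cycle contains an embedded copy of $H_{a,b,c}$. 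The untrusted hint is the unlabeled structure, from which $t$ can be read off.

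For the algorithm holding the hint, the strategy given $t$ is to sample a uniformly random vertex $v$ and walk from $v$ for up to $2^t$ steps, stopping if it discovers the $H$-pattern, closes a short cycle, or exhausts the budget. The probability that $v$ lands on a collision cycle equals $(n^{1-\gamma}/(1.1)^t)\cdot 2^t/n = 2^t/(n^\gamma (1.1)^t)$, so the expected number of trials until success is $n^\gamma (1.1)^t/2^t$. A direct geometric-sum calculation shows the expected cost of a single trial is $\sum_i (10/11)^i \min(2^i, 2^t) = O(2^t/(1.1)^t)$, with both tails dominated by the boundary term at $i=t$. Multiplying yields the certified upper bound of $O(n^\gamma)$ expected queries.

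The lower bound is the heart of the argument. I would expose the hard distribution lazily through an online coupling adversary. Before $t$ is committed, on every candidate cycle at every scale we mark one \emph{red point} at a uniformly random position. Whenever the algorithm first exposes a red point on a scale-$i$ cycle, we flip a biased coin with bias $\Theta(1/\log n)$: on success, $i$ becomes the secret $t$, all scale-$i$ red cycles are converted into decorated collision cycles, and reds at other scales are cleared; on failure the exposed red point is uncolored and the cycle stays intact. A short calculation confirms that the marginal distribution produced matches the offline construction. Moreover, because each red-point encounter commits $t$ with probability only $\Theta(1/\log n)$, the algorithm must trigger $\Omega(\log n)$ red-point events in expectation before it can witness the planted structure with constant probability.

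The main obstacle will be the amortized accounting that assigns each red-point event an expected cost of $\Omega(n^\gamma)$ queries, which combined with the $\Omega(\log n)$ events yields the theorem. The key structural observation is that at every scale $i$ the red vertices form a $1/n^\gamma$-fraction of all scale-$i$ vertices, so in a uniform-query world the per-query rate of red discoveries would already be $O(1/n^\gamma)$. To upgrade this to arbitrary adaptive strategies, I would split cycles by coverage: on a scale-$i$ cycle where the algorithm has queried at most half the vertices, each new query hits the uniformly-placed red point with probability $O(1/2^i)$, and summing across scales against the scale-$i$ red density recovers an $O(1/n^\gamma)$ per-query rate; on cycles where more than half the vertices have been queried, the algorithm has already invested $\Theta(2^i)$ queries on a cycle carrying at most one red point among the $1/n^\gamma$-fraction of scale-$i$ red cycles, yielding the same discovery-per-query ratio. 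Linearity of expectation then forces $\Omega(n^\gamma)$ queries per red event. The decoration used to realize $H_{a,b,c}$ changes all counts only by constant factors, so both the $O(n^\gamma)$ upper bound and the $\Omega(n^\gamma \log n)$ lower bound survive, completing the logarithmic separation.
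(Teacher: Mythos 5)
Your proposal follows essentially the same route as the paper's proof: a multi-scale hard distribution with geometrically many cycles of length $2^i$ per scale, a secret scale $t$ carrying $n^{1-\gamma}/1.1^t$ planted collision structures, the $O(n^\gamma)$ certified algorithm via walks of length $2^t$, and a lower bound by lazily revealing red marker points, committing $t$ with probability $1/|I|$ at each red encounter (so $\Omega(\log n)$ encounters are needed), and an amortized accounting that splits cycles by coverage to charge $\Omega(n^\gamma)$ queries per red discovery --- this is exactly the paper's argument, down to the long/short case split. Two small repairs are needed for the construction as you stated it: the scales cannot range over all of $\{1,\dots,\log n\}$, since feasibility requires $n^{1-\gamma}/1.1^t \le n/2.2^t$, i.e.\ $2^t \le n^\gamma$, so the scale range must be restricted to a window of $\Theta(\log n)$ values of $i$ with $2^i$ polynomially smaller than $n^\gamma$ (the paper uses $\frac{1}{1000}\log n \le i \le \frac{1}{100}\log n$ with $\gamma = 1/10$); and the coin bias in the online coupling must be exactly $1/|I|$ with $I$ the set of surviving scales (not a fixed $\Theta(1/\log n)$) for the marginal to match the offline distribution, plus a one-line lemma (analogous to the paper's blue-vertex lemma) ruling out that the algorithm samples one of your added decoration vertices directly within $O(n^\gamma\log n)$ queries.
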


These two cases (i.e., claws in graphs and collisions in functions) are very similar and the proof that they are not instance optimal is almost identical. Thus, for the majority of the section we focus on the case of claws in graphs. At the end of the section we describe the minor adaptations required for the case of collisions in functions.



We start by presenting the construction for claws. In~\Cref{sec:clawstocollision} we adapt the construction for collisions in functions. Here and in the rest of the paper, we do not try to optimize the constant terms. In particular, the constant $c=1/10$ appearing in the exponent of the query complexity is somewhat arbitrary; the same construction essentially works for any $c < 1/2$ (and with some adaptations it can be made to work for larger values of the constant $c$).

\begin{construction}
\label{const:first_const}
Consider the following process for generating a graph over the vertex set $[n]$, which starts with an empty graph and gradually adds edges to it.

\begin{itemize}
    \item For each integer $\frac{1}{1000}  \log n \leq i \leq \frac{1}{100} \log n$, pick $a_i = n / 2.2^i$ uniformly random disjoint simple paths of length $2^i$ in the graph. 
    
    \item Pick a uniformly random integer $\frac{1}{1000} \log n \leq t \leq \frac{1}{100} \log n$, which we consider as the ``good'' index. Pick a random collection $\P_t$ of $b_t = n^{9/10} / 1.1^t$ of the paths of length $2^t$. Apply to each path $P \in \P_t$ the following transformation:
    let $u_P$ and $v_P$ denote the two ends of the path. Now connect $u_P$ to two isolated vertices, and $v_{P}$ to two other isolated vertices. This turns $P$ into a tree of size $2^t + 4$ built from a long path and two claws, one at each end of the path.
    
    \item All vertices that do not participate in any of the above structures remain isolated. 
\end{itemize}

\end{construction}
We claim that an algorithm holding a certificate requires only $O(n^{1/10})$ queries to find a claw. Since $t$ is known from the certificate, the strategy is simply to only try walks of length $2^t$.

\begin{lemma}
$\E_{G \leftarrow \Delta} \RACC(\P, G) = O(n^{1/10})$.
\end{lemma}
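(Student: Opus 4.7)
The plan is to exhibit a simple certificate-aided algorithm $A$ and bound its expected query complexity on $G \leftarrow \Delta$. Since the certificate is an isomorphic (unlabeled) copy of $G$, $A$ can read off the value of $t$ directly---e.g., as the unique integer such that every tree component of $G$ carrying a degree-$3$ vertex has size $2^t+4$. Using $t$, the algorithm iterates the following step: sample a uniformly random $v \in [n]$ (free in this model, since $V$ is known in advance), and perform a truncated BFS from $v$ that issues at most $C \cdot 2^t$ queries, for a suitable absolute constant $C$. If at any point during the BFS a vertex of degree $\geq 3$ is encountered, $A$ queries three of its neighbors and returns the claw; otherwise it proceeds to a new iteration.

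Next I would estimate the success probability of a single iteration. Each good-path tree has $2^t+5$ vertices and there are $b_t = n^{9/10}/1.1^t$ such trees, so the total number of vertices sitting in good-path trees is $\Theta\bigl((2/1.1)^t \cdot n^{9/10}\bigr)$, and a uniformly random $v$ lies in one of them with probability at least
\[
p \;\geq\; \frac{(2^t+5)\,b_t}{n} \;=\; \Omega\bigl((2/1.1)^t\cdot n^{-1/10}\bigr).
\]
Conditional on $v$ being such a vertex, the BFS with budget $\Theta(2^t)$ is large enough to exhaust the good-path tree (which has only $2^t+5$ vertices) and therefore to discover the endpoint $u_P$ or $v_P$, both of which have degree $3$; thus the iteration succeeds.

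Finally, I would multiply everything together. Each iteration costs $O(2^t)$ queries by truncation, and the expected number of iterations until success is $O(1/p)$, so the total expected query cost is
\[
O(2^t)\cdot O(1/p)\;=\;O\bigl(1.1^t\cdot n^{1/10}\bigr).
\]
Since the construction forces $t\leq \tfrac{1}{100}\log n$, the factor $1.1^t$ is bounded by $n^{\log_2(1.1)/100}$, a sub-polynomial term that is absorbed into the $O(n^{1/10})$ of the lemma statement (taking outer expectation over the random choice of $t$ leaves the asymptotics unchanged). The one real subtlety is confirming that the BFS budget $C\cdot 2^t$ is enough to exhaust a good-path tree no matter where in that tree the sampled vertex sits; this is immediate because the tree has only $2^t+5$ vertices and a naive BFS visits each using at most two queries.
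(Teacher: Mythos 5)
Your algorithm is essentially the paper's (sample a uniformly random vertex, explore for a budget of order $2^t$, repeat), but your cost accounting has a genuine gap that breaks the stated bound. You charge each iteration its worst-case truncation budget $O(2^t)$, and consequently obtain $O(2^t)\cdot O(1/p)=O(1.1^t\, n^{1/10})$. You then dismiss $1.1^t$ as ``sub-polynomial,'' but it is not: since $t$ can be as large as $\tfrac{1}{100}\log n$, the factor $1.1^t$ is $n^{\Theta(1)}$ (roughly $n^{\log_2(1.1)/100}$), so your bound is $O(n^{1/10+\Omega(1)})$, not $O(n^{1/10})$. This is not a cosmetic loss: the lemma is used against a lower bound of $\Omega(n^{1/10}\log n)$ for uncertified algorithms, and a polynomial excess factor (which dominates $\log n$) would destroy the logarithmic separation the construction is built to exhibit.

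The missing idea is that the per-iteration cost must be bounded \emph{in expectation}, exploiting early termination. A random starting vertex lands on a path of length $2^i$ with probability $a_i 2^i/n = 1.1^{-i}$ (and on an isolated vertex otherwise, costing $O(1)$), and in that case the exploration stops after $\min(2^i,2^t)$ steps, not $2^t$. Summing, the expected cost of one iteration is
\[
\sum_{i\le t}\frac{2^i}{1.1^i}+\sum_{i> t}\frac{2^t}{1.1^i}=O\!\left(\frac{2^t}{1.1^t}\right),
\]
which exactly cancels the $1.1^{-t}$ appearing in the success probability $p=\Theta\bigl(2^t/(1.1^t n^{1/10})\bigr)$ and yields $O(n^{1/10})$. (One also needs, as the paper notes, that this expected-cost bound survives conditioning on previous failed attempts, which is routine since each attempt uses a fresh uniform vertex.) With that change your argument matches the paper's proof; as written, it does not establish the lemma.
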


\begin{proof}
We show the following stronger claim: For any $G \in \Delta$, $\RACC(\P, G) \in O(n^{1/10})$. 

The algorithm repeats the following until a claw is found. It picks a point at random, and walks at an arbitrary direction for $2^t$ steps or unless the walk cannot continue anymore, i.e., due to reaching the end of a path. The correctness of the algorithm is immediate. We show that if the true function matches the certificate then the algorithm makes $O(n^{1/10})$ queries in expectation.

The probability that a random point will fall on a path of length $2^i$ is $\frac{a_i \cdot 2^i}{n} = \frac{1}{1.1^i}$. The number of queries made, if we land on a path of length $2^i$ is bounded by $\min(2^i, 2^t)$. 

Thus the expected number of queries made by our algorithm every time it picks a point and walks until the path ends or until the walk reaches a length of $2^t$ is at most
\beq \sum_{i=0}^{t} 2^i \cdot \frac{1}{1.1^i} + \sum_{i=t}^{\log n} 2^t \cdot \frac{1}{1.1^i} = O\left(\frac{2^t}{1.1^t}\right) \eeq

The same asymptotic bound on the expectation   holds also if we condition on the event that the path on which we fell at a certain round did not contain a claw. 

Let $X_j$ denote the number of steps taken by the algorithm in the $j$-th attempt, and let $E_j$ denote the event that a claw is found in attempt $j$. The above discussion implies the following:
\begin{claim}
$\E\left[X_j | \neg E_1 \wedge \neg E_2 \wedge \ldots \wedge \neg E_{j-1}\right] = O\left(\frac{2^t}{1.1^t}\right)$.
\end{claim}

To complete the proof, it suffices to prove the following claim.
\begin{claim} 
\label{claim:prob_claw_found}
$\Pr(E_j | \neg E_1 \wedge \neg E_2 \wedge \ldots \wedge \neg E_{j-1}) = \Theta\left(\frac{2^t}{1.1^t} \cdot \frac{1}{n^{1/10}} \right)$.
\end{claim}
To prove Claim \ref{claim:prob_claw_found}, observe that $E_j$ holds if and only if the random starting point chosen in attempt $j$ belongs to one of the $n^{9/10} / 1.1^t$ paths of length $2^{t}$. There are $2^t \cdot n^{9/10} / 1.1^t$ such points out of a total of $n$ points, and the claim follows by dividing these last two quantities.

Finally, the proof of the lemma follows from these two claims using linearity of expectation and a standard analysis of geometric random variables.
\end{proof}



The main result of this section, given below, is a lower bound showing that algorithms without a certificate require a number of queries that is larger by a multiplicative logarithmic factor compared to the best algorithm with a certificate.

\begin{theorem}
\label{thm:lower_bound_claw_no_cert}
For any algorithm $\A$ (without a certificate), $ \E_{G \leftarrow \Delta}\Query_A(G) = \Omega(n^{1/10} \log(n))$. 
\end{theorem}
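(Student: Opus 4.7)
The plan is to apply Yao's minimax principle, so it suffices to lower-bound the expected query complexity of the best deterministic algorithm against the distribution $\Delta$. I would study an equivalent deferred-decision adversary that fixes in advance the positions of all $a_i$ paths at every scale $i$ and, for each scale, the set of $b_i$ candidate paths together with their two \emph{red} endpoints (the would-be claw centers), but keeps the secret scale $t$ uncommitted. The first time the algorithm touches a red endpoint on scale $i$, the adversary flips a coin whose bias equals the current conditional probability that $t = i$: on heads it declares $t := i$ and promotes all $b_i$ candidate paths on scale $i$ to genuine claws (so the queried vertex acquires degree $3$ and a claw is exposed); on tails it rules out scale $i$, meaning every red vertex at scale $i$ becomes an ordinary path endpoint of degree $1$ from then on. By choosing the coin biases correctly, the induced joint distribution on $(t, G)$ coincides with $\Delta$ and, at every intermediate time step, the conditional law of $t$ is uniform over the scales that have not yet been ruled out.

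With the adversary in place, the analysis splits into two pieces. First, I would show that the algorithm must encounter red vertices on $\Omega(\log n)$ distinct scales before finding a claw with constant probability: since $t$ is conditionally uniform over the surviving scales throughout the execution, the probability that all $k$ distinct scales already touched have come up tails is at least $1 - O(k/\log n)$, so $k = \Omega(\log n)$ is required for the success probability to exceed any given constant. The second and technically most involved piece is an amortized lower bound showing that the expected number of queries between consecutive first-time red encounters on a fresh scale is $\Omega(n^{1/10})$. This is the main obstacle. I would use a potential function that assigns weight $n^{-1/10}$ to each query and argue that the count $\Phi$ of red vertices discovered so far grows by at most $n^{-1/10}$ per query in expectation. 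The key observation is that the red fraction \emph{within} every scale $i$ equals $2 b_i / (a_i \cdot 2^i) = 2 n^{-1/10}$, independently of $i$, so any query landing on a freshly exposed vertex of scale $i$ hits a red endpoint with probability $O(n^{-1/10})$. For queries extending an ongoing walk, I would split into a short-walk regime (treated like fresh samples) and a long-walk regime on a scale-$i$ path where the walk has already consumed a constant fraction of its length; in the latter case, reaching the red endpoint costs $\Theta(2^i)$ additional queries and succeeds with probability at most $b_i / a_i = n^{-1/10} \cdot 2^i$, giving again a ratio of $n^{-1/10}$ per query.

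Combining the two parts by linearity of expectation yields the desired bound of $\Omega(\log n) / O(n^{-1/10}) = \Omega(n^{1/10} \log n)$ queries in expectation. The principal subtleties I anticipate are: (i) verifying that the deferred-decision adversary reproduces $\Delta$ exactly, which requires choosing each coin bias based on the current number of surviving scales and handling carefully the sequences in which several red vertices on the same unresolved scale are touched before any coin comes up heads; and (ii) making the potential-function amortization robust to adaptive algorithms that can freely interleave short probes on many scales with long walks on a few of them -- the adversary must assign credit per query in a way that covers both regimes uniformly. Both subtleties reduce to a careful adversary-side bookkeeping of partial walks.
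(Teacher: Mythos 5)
Your proposal follows essentially the same route as the paper's proof: your deferred-decision adversary is the paper's online red/black construction (coin of bias $1/|I|$ over the surviving scales, flipped at the first red encounter per scale), your requirement of red encounters on $\Omega(\log n)$ distinct scales matches the paper's ``winning'' analysis, and your amortization with the short-walk/long-walk split and the uniform ratio $\Theta(n^{-1/10})$ per query (using $2b_i/(a_i 2^i)=2n^{-1/10}$ and, in the long regime, cost $\Theta(2^i)$ against success probability $b_i/a_i = 2^i n^{-1/10}$) is exactly the paper's $E_{\text{short}}$/$E_{\text{long}}$ argument. The only ingredient you leave implicit is the handling of the isolated vertices destined to become claw leaves (the paper's ``blue'' vertices), which must be shown to be hit only with negligible probability within $n^{1/10+\Omega(1)}$ queries so that the deferred construction remains well-defined; this is a minor, easily supplied lemma rather than a difference in approach.
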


To prove Theorem \ref{thm:lower_bound_claw_no_cert}, we first revisit Construction \ref{const:first_const}, discussing an equivalent way to generate the same distribution that is more suitable for our analysis. This alternative construction has some offline components, that take place before the algorithm starts to run, and an online component, that reveals some of the randomness during the operation of the algorithm.

For what follows, let $B(n)$ denote the maximum possible number of (initially isolated) vertices that are added as neighbors of claws in the second part of Construction \ref{const:first_const}. Note that this number is maximized when $t$ takes its minimal possible value, and satisfies $B(n) \leq n^{9/10} / 1.1^{\log (n) / 1000} = n^{9/10 - \Omega(1)}$.

Unlike the original construction, here we think of the vertices of the constructed graph as having one of three colors: black, red, or blue. These colors shall help us keep track of the analysis. In essence (and roughly speaking), blue vertices lead to an immediate victory but they are very rare and unlikely to be found in less than $n^{1/10 + \Omega(1)}$ queries; red vertices are not as rare: finding one of these takes roughly $n^{1/10}$ queries, but $\Omega(\log n)$ red vertices are required to find a claw with constant probability; finally, all other vertices are black, and encountering a black vertex contributes very little to the probability of finding a claw.

\begin{construction}
\label{const:red_blue_black}
We start with an empty graph on $n$ vertices colored black,
and color $B(n)$ of the vertices in blue.

We then construct, as in the first bullet of Construction \ref{const:first_const}, $n / 2.2^{i}$ disjoint paths of length $2^i$ out of black vertices only, for every $\frac{1}{1000}\log n \leq i \leq \frac{1}{100} \log n$.

Next, for every $i$ we pick a subset of $b_i = n^{9/10} / 1.1^i$ paths out of those of length $2^i$. We color the ends of these paths in red.

The last part of the construction happens online, while the algorithm runs. In each time step where the algorithm visits a black vertex, the construction remains unchanged. If the algorithm encounters a red vertex, then we reveal the randomness in the construction in the following way.
\begin{itemize}
    \item Let $I = \{ \frac{1}{1000}\log n \leq i \leq \frac{1}{100} \log n : \text{there exists a path of length $2^i$ with a red end}\}$. Note that initially, $I$ simply contains all values of $i$ in the relevant range; however in the construction $I$ will become smaller with time.
    \item Let $i \in I$ denote the unique integer satisfying that the currently visited red vertex lies on a path of length $2^i$.
    We flip a coin with probability $1/|I|$. If the result is `heads', we consider $i$ as the ``good'' index and do the following: all red ends of paths of length $2^i$ are connected to (isolated) blue vertices, all vertices in the graph are recolored by black, and the construction of the graph is complete.
    \item If the result of the above flip is `tails', we turn all red ends of paths of length $2^i$ to black, and remove $i$ from $I$.
\end{itemize}
Finally, if the algorithm encounters a blue vertex, we pick $i \in I$ uniformly at random to be the ``good'' length, and connect the red ends of paths of length $2^i$ to blue isolated vertices randomly. We then recolor all vertices in the graph to black and consider the construction complete.
\end{construction}

It is straightforward to check that Construction \ref{const:red_blue_black} produces the exact same distribution over graphs as Construction \ref{const:first_const}, and furthermore it does not reuse randomness revealed by the algorithm in previous parts.
Of particular interest is the following observation.
\begin{observation}
Consider any point of time during the online phase of Construction \ref{const:red_blue_black}, and let $I$ be as the defined in the first bullet. Then for any $t \in I$, the probability that $t$ will be the eventual ``good'' index, conditioning on all previous choices made during the construction, is $1/|I|$.
\end{observation}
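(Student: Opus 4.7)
The plan is to prove the stronger invariant that at every moment during the online phase of Construction~\ref{const:red_blue_black}, conditional on the entire history so far (the algorithm's queries with their answers together with every coin flip executed by the construction), the eventual good index $T$ is uniformly distributed on the current set $I$. The stated observation is then immediate. I proceed by induction on the number of online steps processed.

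For the base case, before any online step, $I$ is the full admissible range and no randomness of the online phase has been exposed. At this point $T$ is not yet determined, but uniformity of its eventual distribution on $I$ follows from the distributional equivalence between Construction~\ref{const:first_const} and Construction~\ref{const:red_blue_black}: unrolling the tree of ``heads'' branches and summing the resulting telescoping products shows that each initial element of $I$ is declared the good index with the same probability $1/|I|$. This equivalence is noted informally just before the observation and is the reason the online implementation is correct in the first place.

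For the inductive step, suppose the invariant holds just before an online step, and consider three cases according to the color of the queried vertex. (a) Querying a black vertex does not change $I$, and the only information revealed to the algorithm about the graph depends on the fixed path structure, which is independent of $T$; hence the invariant persists trivially. (b) Querying a red vertex lying on a path of length $2^i$ (so $i \in I$) triggers a coin flip with bias $1/|I|$. On the ``heads'' branch the good index is fixed to $i$, so $T$ becomes degenerate at $i$, consistent with $I$ effectively collapsing to $\{i\}$. On the ``tails'' branch $i$ is removed from $I$; by the induction hypothesis $T$ was uniform on the old $I$, so Bayes gives $\Pr[T = j \mid \text{tails}] = (1/|I|)/((|I|-1)/|I|) = 1/(|I|-1)$ for each $j \in I \setminus \{i\}$, which is uniform on the new $I$. (c) Querying a blue vertex causes the construction to sample $i \in I$ uniformly to serve as $T$, which by the induction hypothesis exactly matches the conditional distribution of $T$ given history, so the construction terminates consistently with the invariant.

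The main conceptual point --- rather than a substantive obstacle --- is that the bias $1/|I|$ chosen in case (b) is precisely what is needed to keep the Bayesian posterior after ``tails'' uniform on the reduced set. This calibration is what makes Construction~\ref{const:red_blue_black} a faithful online implementation of Construction~\ref{const:first_const}, and it is also what drives the downstream lower-bound argument: each red-vertex encounter carries only a $\Theta(1/\log n)$ chance of revealing $T$, so the algorithm must visit $\Omega(\log n)$ red vertices to pin down $T$ with constant probability, which combined with the $\Omega(n^{1/10})$ amortized cost per red encounter yields Theorem~\ref{thm:lower_bound_claw_no_cert}.
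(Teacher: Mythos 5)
Your proof is correct, and it formalizes exactly what the paper leaves implicit: the paper states this as an observation with no explicit proof, the intended justification being that the coin bias $1/|I|$ in Construction~\ref{const:red_blue_black} is calibrated so that uniformity of the eventual good index on the current set $I$ is preserved, which is precisely the invariant you prove by induction (your Bayes step $(1/|I|)\big/\bigl((|I|-1)/|I|\bigr)=1/(|I|-1)$ is the heart of it). One small caveat: your base case appeals to the distributional equivalence between Construction~\ref{const:first_const} and Construction~\ref{const:red_blue_black}, but that equivalence is itself most naturally verified by the very telescoping computation at issue, so leaning on it flirts with circularity. This is easily repaired without the equivalence: at time zero (or at any history with current set $I$ of size $m$), the index encountered at the $k$-th red event is selected with probability $\prod_{j=0}^{k-2}\bigl(1-\tfrac{1}{m-j}\bigr)\cdot\tfrac{1}{m-k+1}=\tfrac{1}{m}$, and a blue encounter or termination resolves the choice uniformly on the remaining $I$, so the eventual good index is uniform on $I$ directly; equivalently, one can run a backward induction on $|I|$. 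With that adjustment the argument is fully self-contained, and your closing remark correctly identifies why the invariant is what powers Lemma~\ref{lem:collision_red_vtxs} and Theorem~\ref{thm:lower_bound_claw_no_cert}.
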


We say that $\A$ \emph{wins} if it either finds a claw or encounters a blue vertex. The following lemma is a strengthening of Theorem \ref{thm:lower_bound_claw_no_cert}, and its proof immediately yields a proof for the theorem. 
\begin{lemma}
\label{lem:winning}
There exists $C > 0$ such that for any $n \in \N$, any algorithm without a certificate requires at least $C n^{0.1} \log n$ queries to win with success probability $9/10$.
\end{lemma}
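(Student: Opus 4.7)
\textbf{Proof plan for \Cref{lem:winning}.} The plan is to combine two conceptually distinct ingredients: an amortized query-counting argument that upper bounds the expected number of distinct scales at which the algorithm can encounter a red vertex, together with a coupon-collector-style argument showing that $\Omega(\log n)$ such ``scale processings'' are necessary to win with constant probability. Throughout, I would work in the online formulation from \Cref{const:red_blue_black}, which makes the calculation tractable: prior to any heads flip or blue-vertex encounter, the graph structure visible to the algorithm is determined entirely by the offline stage of the construction and is independent of the (still-hidden) random red labeling of path endpoints; consequently, the algorithm's sequence of queries is itself independent of that labeling.

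A preliminary step is to reduce winning to triggering a heads flip. Since there are $B(n) = n^{9/10 - \Omega(1)}$ initially isolated blue vertices among $n$ total, the probability of hitting any of them over $Q = O(n^{1/10}\log n)$ queries is at most $Q \cdot B(n)/n = n^{-\Omega(1)}\log n = o(1)$. So up to an $o(1)$ additive term, winning is equivalent to the event that at least one queried red vertex results in ``heads''.

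Let $L = |I_0| = \Theta(\log n)$ be the initial number of scales, and let $K$ denote the (random) number of distinct scales at which a red vertex is encountered during a full $Q$-query run (ignoring early termination on winning, which only decreases $K$ and hence the winning probability). Since $|I|$ decreases by exactly one after each failed tails flip, a telescoping calculation yields
\[
\Pr[\text{win}\mid K] \;\le\; 1 - \prod_{j=1}^{K}\Bigl(1-\tfrac{1}{L-j+1}\Bigr) \;=\; 1 - \tfrac{L-K}{L} \;=\; \tfrac{K}{L},
\]
so that $\Pr[\text{win}] \le \E[K]/L$. To reach $\Pr[\text{win}] \ge 9/10$ it is therefore necessary that $\E[K] \ge (9/10)L = \Omega(\log n)$.

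The main obstacle is to prove the matching upper bound $\E[K] = O(Q/n^{1/10})$ against any adaptive strategy. Since the algorithm's queries prior to winning are independent of the red labeling, I would couple all possible labelings so that the set of queried vertices is identical across them. Then, using that each path of length $2^i$ is red-ended with probability $b_i/a_i = 2^i n^{-1/10}$ in the offline construction, linearity gives
\[
\E[K] \;\le\; \E\bigl[\#\{\text{red vertices queried}\}\bigr] \;=\; \sum_i \tfrac{b_i}{a_i}\cdot e_i \;=\; n^{-1/10}\sum_i 2^i e_i,
\]
where $e_i$ is the expected number of scale-$i$ path endpoints queried. The final technical piece is to show $\sum_i 2^i e_i = O(Q)$ by an amortized charging scheme: a \emph{fresh} query to a previously untouched random vertex is a scale-$i$ endpoint with probability $2/(2.2)^i$, contributing $\sum_i 2^i \cdot 2/(2.2)^i = 2\sum_i (1/1.1)^i = O(1)$ per query; a \emph{continuation} query that extends a known walk at scale $i$ discovers a new endpoint at amortized rate $O(1/2^i)$ (the most efficient walking strategy --- alternating directions from an interior starting vertex --- requires $\Theta(2^i)$ continuation steps per endpoint found), contributing $2^i \cdot O(1/2^i) = O(1)$ per query. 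Summing over all $Q$ queries gives $\sum_i 2^i e_i = O(Q)$, and hence $\E[K] = O(Q/n^{1/10})$. Comparing with the necessary condition $\E[K] = \Omega(\log n)$ forces $Q = \Omega(n^{1/10}\log n)$, completing the proof.
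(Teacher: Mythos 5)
Your overall architecture coincides with the paper's: blue vertices are shown to be negligible (as in Lemma \ref{lem:no_blue_vertices}), winning via a red vertex is reduced to the good scale falling among the scales you actually process, and the crux is the bound that within $Q$ queries the expected number of red encounters is $O(Q/n^{0.1})$ (the paper's Lemma \ref{lem:collision_red_vtxs}). Your telescoping bound $\Pr[\mathrm{win}]\le \E[K]/L$ is a clean repackaging of the paper's final union-bound step, and exploiting the fact that, prior to winning, the query sequence is independent of the hidden red labeling is legitimate and essentially how the paper argues as well.

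The genuine gap is in the step you dispose of parenthetically: the claim that continuation queries discover scale-$i$ path endpoints at amortized rate $O(1/2^i)$. This is precisely the content of the paper's \Cref{claim:single_length}, and it is where all the technical work lies. Your justification --- analyzing the ``most efficient walking strategy'' (alternating directions from an interior vertex) --- is an analysis of one particular strategy, not a lower bound over all adaptive strategies, and the corresponding per-query bound is simply false without a careful amortization: if a path of length $2^i$ has already been explored except for a few vertices, the conditional probability that the next query reveals an endpoint is constant (even $1$), not $O(1/2^i)$. The paper handles this by splitting into heavily explored paths (at most $\kappa n^{0.1}/2^{i-2}$ of them, each red-ended with probability $O(2^i/n^{0.1})$, so a counting/union bound suffices) and lightly explored paths, where a symmetry argument --- viewing each path as a cycle with a hidden break edge --- shows that, conditioned on the algorithm's entire view (including several walks having landed on the same path, which skews the position distribution), every unexplored vertex is an endpoint with probability at most $4/2^i$. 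Without some version of this two-case amortization and the conditional-symmetry argument, your charging scheme asserts the key inequality $\sum_i 2^i e_i = O(Q)$ rather than proving it; filling it in would essentially reproduce the paper's proof.
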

From this lemma, it immediately follows that the expected winning time for the algorithm is $\Omega(n^{0.1} \log n)$, which in turn implies the theorem (by definition of winning). Indeed, any algorithm that finds a claw can immediately find a blue vertex and win, since each claw center has two blue neighbors.
Thus, we devote the rest of this section to the proof of Lemma \ref{lem:winning}.

The next (easy) lemma states that encountering a blue vertex is a rare event which will not substantially impact our analysis.
\begin{lemma}
\label{lem:no_blue_vertices}
There exists an absolute constant $\eps > 0$ satisfying the following. For any algorithm $\A$ without a certificate, with high probability $\A$ does not query a blue vertex within $n^{\frac{1}{10} + \eps}$ steps.
\end{lemma}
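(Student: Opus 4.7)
The plan is to exploit the structural fact that throughout Construction \ref{const:red_blue_black}, blue vertices remain isolated from the rest of the graph. Blue vertices are introduced in the first line of the construction as isolated vertices, and they only acquire edges at the very moment the construction terminates, at which point they are simultaneously recolored to black. Thus, at every intermediate time (i.e., before the algorithm either hits a blue vertex or triggers the heads-outcome upon visiting some red vertex), no edge in the graph is incident to any blue vertex.

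First I would conclude from this structural fact that the algorithm cannot reach a blue vertex by traversing an edge: a degree query at a blue vertex returns $0$, and no neighbor query issued at a non-blue vertex can ever return a blue endpoint. Consequently, the only way for the algorithm to query a blue vertex is via a \emph{fresh} query to a label that it has not previously observed, either as the target of its own query or as the neighbor returned by some previous query.

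Next I would invoke the symmetry of Construction \ref{const:red_blue_black} under relabeling of the vertex set $[n]$. Since the algorithm has no certificate and thus no a priori information about the labeling, for any fresh label queried the conditional distribution of that label's role, given the transcript so far and the event that no blue vertex has yet been hit, is uniform over the roles of the not-yet-revealed vertices. The number of blue vertices is at most $B(n) \leq n^{9/10 - \delta}$ for some absolute constant $\delta > 0$ (as noted right before the construction). So each fresh query made in the first $n/2$ steps hits a blue vertex with probability at most $2 B(n) / n = O(n^{-1/10 - \delta})$.

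A union bound over $n^{1/10 + \eps}$ queries, for any $\eps < \delta$, then gives total probability $O(n^{\eps - \delta}) = o(1)$ of ever querying a blue vertex, which is exactly the conclusion of the lemma. The main subtlety is formalizing the symmetry claim for an adaptive algorithm, whose next fresh label is itself a function of the previous responses: one must argue, by exchangeability of the labeling conditional on the transcript, that the online reveal of a new label is equivalent to sampling uniformly from the not-yet-exposed vertices; the key ingredient that makes this work here is precisely the isolation of blue vertices, which guarantees that nothing in any previously seen neighborhood can correlate with the identity of the blue vertices.
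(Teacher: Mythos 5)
Your proposal is correct and follows essentially the same route as the paper's proof: the only way to hit a blue vertex before the construction completes is via a query to a previously unseen label, each such query is blue with probability at most $B(n)/(n-O(t)) = O(n^{-1/10-\Omega(1)})$ by symmetry of the labeling, and a union bound over $n^{1/10+\eps}$ queries (with $\eps$ smaller than the constant in the exponent of $B(n)$) finishes the argument. Your added remarks on the isolation of blue vertices and on exchangeability for adaptive algorithms just make explicit what the paper's shorter proof leaves implicit.
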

\begin{proof}
If $\A$ has already won (found a claw), then no blue vertices remain and so the probability to encounter one is zero.

Otherwise, the only chance to encounter a blue vertex at time $t$ is by sampling a vertex from the set of vertices $V_t$ that we did not see so far (i.e., vertices that were not queried and were not revealed to be neighbors of queried vertices). Note that $|V_t| = n - O(t) \geq n/2$ for $t=o(n)$, and the probability that the sampled vertex is blue is bounded by $B(t) / |V_t| \leq 1 / n^{0.1+\Omega(1)}$. The proof follows by a union bound.
\end{proof}

The following result shows that finding $\Omega(\log n)$ red vertices with constant probability requires $\Omega(n^{0.1} \log n)$ queries. To complete the proof, we shall see later that either finding a blue vertex or collecting at least logarithmically many red vertices is essential to win with constant probability.

\begin{lemma}
\label{lem:collision_red_vtxs}
There exists a constant $c > 0$ so that for all $n \in \N$, any algorithm $\A$ which makes $c n^{0.1}\log n$ queries will find less than $\frac{1}{1000}\log n$ red vertices in expectation.
\end{lemma}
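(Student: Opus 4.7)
The plan is to establish an amortized upper bound of the form $\E[\text{red vertices found}] \leq 2Q/n^{1/10}$ for any adaptive algorithm making $Q$ queries. Plugging in $Q = c n^{1/10} \log n$ and taking $c < 1/2000$ then yields $\E[\text{red found}] < (1/1000) \log n$, which is the statement of the lemma.

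The key observation driving the amortized bound is a per-length-scale balance, combined with the fact that only endpoints of paths can be red (isolated and interior vertices contribute zero to the red count). For each length scale $i$ in the relevant range, let $Q_i$ denote the total number of queries the algorithm makes on vertices belonging to length-$2^i$ paths, and for a single length-$2^i$ path $P$, let $q_P$ denote the number of queries on $P$, so that $Q_i = \sum_{P} q_P$ where the sum is over length-$2^i$ paths. I would separately bound, for each such $P$, the expected number of endpoint queries on $P$ by $2q_P/2^i$, using that each path has only $2$ endpoints among its $2^i$ vertices. Next, since the $b_i$ red-ended paths are chosen uniformly among the $a_i$ paths of length $2^i$ independently of the edge structure, the probability that any particular path is red-ended equals $b_i/a_i = n^{-1/10} \cdot 2^i$. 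Multiplying and summing over paths of length $2^i$ yields
\begin{align*}
\E[\text{red vertices found on length-}2^i\text{ paths}] \leq \sum_{P} \frac{b_i}{a_i} \cdot \frac{2 q_P}{2^i} = \frac{2 b_i}{a_i \cdot 2^i} \cdot Q_i = 2 n^{-1/10} \cdot Q_i,
\end{align*}
and summing over $i$ with $\sum_i Q_i \leq Q$ then gives $\E[\text{red found}] \leq 2Q/n^{1/10}$.

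The main technical obstacle is justifying the endpoint-query bound $\E[\#\text{endpoint queries on } P] \leq 2q_P/2^i$ against an \emph{adaptive} algorithm that can use degree-$1$ responses to immediately identify an endpoint once queried. I intend to formalize this via a deferred-decisions coupling: reveal the positional labeling of each path only at the moment the algorithm first touches a vertex on it, and answer subsequent neighbor queries consistently with a uniformly random embedding of the path into the positions $0, 1, \ldots, 2^i-1$. Under this coupling, the two endpoints of $P$ sit at uniformly random positions relative to the algorithm's first-touched vertex on $P$, so any contiguous walk of length $q_P$ covers either endpoint with probability at most $q_P/2^i$, regardless of how the algorithm adapts to degree responses. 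A minor wrinkle is that the algorithm might re-enter the same path from a different random sample; such re-entries only increase $q_P$ and are absorbed into the same amortized bound. Finally, to justify the multiplication by $b_i/a_i$, I would invoke the online formulation of \Cref{const:red_blue_black}, in which red paths are not identified as such until the algorithm first encounters a red vertex, so the red-coloring remains independent of the edge structure observed by the algorithm.
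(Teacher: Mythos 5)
Your overall strategy---charging each query an amortized red-finding probability of $O(n^{-1/10})$, using deferred decisions so that the red coloring stays (essentially) independent of what the algorithm has observed, and exploiting the fact that endpoints are hard to localize inside a long path---is the same as the paper's, which proves exactly your per-scale statement in \Cref{claim:single_length}. However, the step you yourself flag as the main technical obstacle is where your argument has a genuine gap. The claim that ``any contiguous walk of length $q_P$ covers either endpoint with probability at most $q_P/2^i$, regardless of how the algorithm adapts'' is only justified for a \emph{single} contiguous walk whose length is fixed in advance. In the actual setting $q_P$ is an adaptive stopping time (e.g.\ ``walk until a degree-$1$ vertex appears''), the algorithm may enter the same path at several random points, and---since the shortest paths have length $2^{\frac{1}{1000}\log n}\ll n^{0.1}$---it can afford to explore a constant fraction, or all, of many paths. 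Once a constant fraction of a path has been seen without an endpoint, the conditional probability that the next query hits an endpoint is no longer $O(2^{-i})$: with $j$ of the $2^i$ vertices revealed it is roughly $1/(2^i-j)$, which approaches $1$. So the per-path bound $\E[\#\text{endpoint hits on }P]\le 2q_P/2^i$ cannot be deduced from the fixed-length coverage statement; it needs either a hazard-rate/optional-stopping argument combined with the observation that hitting an endpoint late (or hitting both endpoints) forces $q_P=\Omega(2^i)$, or a case split.

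That case split is precisely what the paper's proof of \Cref{claim:single_length} supplies. It decomposes the event of finding a red vertex at scale $i$ into $E_{\mathrm{long}}$ (the red vertex lies in a path that received at least $2^{i-2}$ queries) and $E_{\mathrm{short}}$. For $E_{\mathrm{long}}$ it does not argue per query at all but counts paths: at most $\kappa n^{0.1}/2^{i-2}$ paths can absorb that many queries, and each is red with probability $(1+o(1))\,2^i/n^{0.1}$, giving $O(\kappa)$ by a union bound. For $E_{\mathrm{short}}$ it uses a symmetry argument---view the path as a cycle whose missing edge is a uniformly random unseen edge---which bounds the per-query hazard by $4/2^i$ even under adaptive, multi-segment exploration, as long as fewer than $2^{i-2}$ vertices of the path have been queried. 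Your amortized bound does hold up to constants once such a split (or an equivalent cap of $2$ hits per path with $q_P=\Omega(2^i)$) is added, and the constant is immaterial since $c$ is chosen last; you should also note, as the paper does, that after conditioning on previously revealed endpoint colors the red probability is only $(1+o(1))\,b_i/a_i$, and that blue vertices may be ignored since encountering one erases all red vertices. But as written, the key endpoint-coverage step is asserted rather than proved, and the assertion fails exactly in the regime (nearly fully explored paths, adaptive stopping, repeated entry) that the paper's long/short decomposition and cycle-symmetry argument are designed to handle.
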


\begin{proof}
First, note that we may assume that no blue vertex is ever encountered during the process. If such a vertex is visited, then all red vertices are recolored to black immediately, and so the success probability is zero in this case.

We show that the lower bound applies even if we augment the algorithm with the following additional information revealing some of the randomness in the construction. Clearly, this immediately yields a lower bound for the general case, where the algorithm does not hold such information.

Assume that the algorithm knows in advance, for each vertex $v$ in the graph, whether it is part of a path (i.e., not an initially isolated vertex at the start of the online phase). Further, if $v$ is part of path $I$ of length $2^i$, the algorithm knows $I$. However, the algorithm does not know initially which of the paths have their end vertices colored red.

We denote by $\mathcal{I}_i$ the set of all paths of length $2^i$, and let $V_i = \bigcup_{I \in \mathcal{I}_i} V(I)$ denote the set of all vertices in these paths. Note that $V_i \cap V_j = \emptyset$ for $i \neq j$.
Further let $\mathcal{R}_i \subseteq \mathcal{I}_i$ denote the set of all paths in $\mathcal{I}_i$ whose end vertices are red, and let $\mathcal{B}_i = \mathcal{I}_i \setminus \mathcal{R}_i$ denote the set of all other paths (i.e., all paths with black ends) in $\mathcal{I}_i$. Note that $\mathcal{R}_i$ and $\mathcal{B}_i$ are initially unknown to the algorithm.

The main technical claim of the proof is as follows.
\begin{claim}
\label{claim:single_length}
There exists an absolute constant $\alpha > 0$ satisfying the following for any fixed $\kappa > 0$. For all $n \in \N$, the probability that the algorithm finds a red vertex in $V_i$ within its first $\kappa n^{0.1}$ queries in $V_i$ is bounded by $\alpha \kappa$. This is true regardless of which queries were made outside of $V_i$.
\end{claim}

\newcommand{\ELONG}{{E_{\text{long}}}}
\newcommand{\ESHORT}{{E_{\text{short}}}}

\begin{proof}
Let $E$ denote the event that a red vertex is found within the first $\kappa n^{0.1}$ queries in $V_i$. The statement of the claim is that $\Pr(E) = O(\kappa)$ where the hidden term in the $O(\cdot)$ expression is an absolute constant (independent of $n$).

Consider two events, $\ELONG$ and $\ESHORT$, defined as follows. $\ELONG$ is defined like $E$, with the additional requirement that the red vertex is found in a path in which the algorithm made at least $2^{i-2}$ queries (i.e., at least half of the path length). The complementary event $\ESHORT = E \setminus \ELONG$ is the event that the red vertex is found in a path to which less than $2^{i-2}$ queries were made.

Note, first, that there are more than $n^{0.9}$ paths of length $2^i$, and that we are interested here in the domain where at most $O(n^{0.1})$ queries are being made within $V_i$. Thus, at any time along the process, the probability that a certain path $I \in \mathcal{I}_i$ whose endpoints were not visited yet satisfies $I \in \mathcal{R}_i$ (i.e., has red endpoints) is at most $(1+o(1)) p$, where 
$$p = \frac{n^{9/10} / 1.1^i}{n / 2.2^i} = \frac{2^i}{n^{0.1}}$$ 
was the a priori probability (before the process started) that $I \in \mathcal{R}_i$.

Since $E = \ELONG \cup \ESHORT$, to prove the claim it suffices to show separately that $\Pr(\ELONG) = O(\kappa)$ and $\Pr(\ESHORT) = O(\kappa)$. 

We first bound $\Pr(\ELONG)$. Let $\mathcal{L}_i$ denote the set of paths $I \in \mathcal{I}_i$ satisfying that (i) at least $2^{i-2}$ of the vertices in $I$ were visited during the first $\kappa n^{0.1}$ queries, and (ii) one of the endpoints of $I$ was visited during that time. Note that
$$|\mathcal{L}_i| \leq \frac{\kappa n^{0.1}}  {2^{i-2}}.$$ 

By the above, for each $I \in \mathcal{L}_i$, the probability that $I \in \mathcal{R}_i$ (even conditioned on all previous queries made by the algorithm) is bounded by $(1+o(1)) p$. Taking a union bound, we have that
$$
\Pr(\ELONG) \leq \sum_{I \in \mathcal{L}_i} \Pr(I \in \mathcal{R}_i) \leq |\mathcal{L}_i| \cdot (1+o(1)) p  \leq (1+o(1)) \cdot \frac{\kappa n^{0.1}}  {2^{i-2}} \cdot \frac{2^i}{n^{0.1}} = O(\kappa).
$$
To bound $\Pr(\ESHORT)$, consider any $I \in \mathcal{I}_i \setminus \mathcal{L}_i$ where none of the endpoints of $I$ have been revealed (if an endpoint was revealed to be black then $I \notin \mathcal{R}_i$ and the probability to find a red vertex in $I$ is zero). 
Let $Q \subset I$ denote the set of already queried vertices in $I$. Recall that $|Q| < 2^{i-2}$. 
We claim that conditioned on all the information visible to the algorithm at this point, any vertex in $I \setminus Q$ (including all neighbors of $Q$) has probability at most $\frac{4}{2^{i}}$ to be an endpoint of $I$. 

The proof of this claim relies on a symmetry-based perspective. Instead of thinking of $I$ as a path on $2^{i}$ vertices, one can view it as a cycle on $2^{i}$ nodes where all edges are initially considered unseen. When one of the endpoints of an unseen edge $e$ is queried, we flip a coin with probability $1/r$, where $r$ is the number of unseen edges at this point, and if the result is `heads' we break the cycle into a path by removing $e$ from the cycle. Now to prove our claim, as long as less than $2^{i-2}$ vertices in $I$ were queried, the number of unseen edges is bigger than $2^{i-1}$, and so the probability of each specific node to belong to the edge that will break the cycle is less than $2 / 2^{i-1} = 4 / 2^i$.

Thus, the (conditional) probability that any particular query in a path $I \in \mathcal{I}_i \setminus \mathcal{L}_i$ will reveal a red vertex is at most
$$
\frac{4}{2^i} \cdot (1+o(1)) \cdot \frac{2^i}{n^{0.1}} = O \left(\frac{1}{n^{0.1}} \right).
$$
This is true for any of our $\kappa n^{0.1}$ queries in $V_i$. By a union bound, $\Pr(\ESHORT) = O(\kappa)$.
\end{proof}

The proof of the lemma given Claim \ref{claim:single_length} follows from a simple linearity of expectation argument.
For each $\frac{1}{1000}  \log n \leq i \leq \frac{1}{100} \log n$, Let $I_i$ denote the indicator random variable of whether a red vertex was found in $V_i$, and let $q_i$ denote the number of queries made in $V_i$ in the first $cn^{0.1} \log n$ rounds of the process. It follows from the claim that $\E[I_i]  \leq \alpha q_i / n^{0.1}$, and so
$$
\E[\text{\# red vertices encountered}] \leq \sum_{\frac{1}{1000}  \log n \leq i \leq \frac{1}{100} \log n} \alpha \frac{q_i}{n^{0.1}} \leq \frac{\alpha}{n^{0.1}} cn^{0.1} \log n \leq \alpha c \log n,
$$
which completes the proof by taking $c$ small enough as a function of $\alpha$.
\end{proof}
We now have all the ingredients to complete the proof of Lemma \ref{lem:winning}.

\begin{proof}[Proof of Lemma \ref{lem:winning}]
The algorithm (without certificate) wins if and only if it either encounters a blue vertex, or it encounters a red vertex and the subsequent coin flip results in `heads'. By Lemmas \ref{lem:no_blue_vertices} and \ref{lem:collision_red_vtxs} and a union bound, there exists a small enough absolute constant $C >0$ so that the probability that either of these events happen in the first $C n^{1/10} \log n$ rounds is bounded by $9/10$. The proof follows.
\end{proof}

\subsection{From Claws to Collisions \label{sec:clawstocollision}}
We now briefly define a similar construction aimed at showing the $\Theta(\log n)$ separation for collisions in functions. The same construction and proof also apply if instead of a collision we consider  an ``extended collision'' $H_{a,b,c}$ as defined in Theorem \ref{thm:3.2}.

\begin{construction}
\label{const:collision_construction}
Consider the following process for generating a function $f \colon [n] \to [n]$:
\begin{itemize}
    \item For each integer $\frac{1}{1000}  \log n \leq i \leq \frac{1}{100} \log n$, add $a_i = n / 2.2^i$ disjoint paths of length $2^i$ in the function. 
    
    \item Pick a uniformly random integer $\frac{1}{1000} \log n \leq t \leq \frac{1}{100} \log n$. Pick a collection $\P_t$ of $b_t = n^{9/10} / 1.1^t$ of the paths of length $2^t$. For each such path $P$, let $u$ and $v$ denote the first and last element in the path; set $f(v)$ to be an arbitrary value in $P \setminus \{u,v\}$, which creates a collision. Close all other paths (of all lengths) into cycles: specifically, using the same notation, set $f(v)=u$.
    
    \item For all values $x \in [n]$ that do not participate in any of the above paths, set $f(x) = x$ to be a fixed point.\footnote{We note that the construction can be easily modified to not include fixed points: simply use the remaining values to close cycles of length $2$ or $3$ instead of fixed points, which are essentially cycles of length $1$.}
\end{itemize}
\end{construction}

As is the case with claws in graphs, an algorithm holding a certificate would know the value of $t$, and make walks of length $2^t$ until finding a collision, with a query complexity of $O(n^{1/10})$. Meanwhile, to show the lower bound for an algorithm without a certificate, we use a coloring scheme with only two colors -- red and black -- where elements that are ends of paths which serve as ``candidates'' for a collision are marked red, and all other elements are marked black. Similarly to the above, in order to find a collision with constant probability, the algorithm needs to find $\Omega(\log n)$ red elements with constant probability, which requires $\Omega(n^{1/10} \log n)$ queries.





\remove{
\begin{lemma}
Assuming the total number of queries made thus far is $o(n^c \log(n))$ (or similar), an algorithm whp won't land twice on the same path. 
\end{lemma}

\begin{lemma}
Assuming the total number of queries made thus far is $o(n^c \log(n))$ (or similar), then learning with (exponentially) high probability whether the correct length is $t$, requires $\omega(n^c)$ queries, all spent on walking path of length ``close" to $t$.  

This is independent of $t$, 
\end{lemma}

To be able to use the above lemma I think we need something stronger. I.e claim that an algorithm that spends $O(\frac{n^c}{\log(n)})$ queries walking paths of length $t$ doesn't learn much (i.e the probability that a path is good is very close to $\frac{1}{\log n}$. I think here is where the $\loglog(n)$ might come in -- an algorithm can learn something meaningful in $O(\frac{n^c}{\log\log(n)})$
}

\section{Function Properties Far from Instance Optimal} \label{sec:4}
In this section we study the instance optimality of properties of functions $f \colon [n] \to [n]$. As shown in the previous section, the property of containing a collision is $\Theta(\log n)$-far from instance optimal. We show that any pattern with either at least two collisions or at least one fixed point is \emph{polynomially} far from instance optimality. This is summarized in the theorem below.

\begin{theorem}
\label{thm:function_charac}
Let $H$ be any constant-size oriented graph (possibly with self-edges) where each node has out-degree at most one. Suppose further that $H$ either contains (i) a fixed point (i.e., an edge from a node to itself) or (ii) at least two nodes with in-degree at least two, or (iii) at least one node with in-degree at least three. 

The function property $\P_H$ of containing $H$ as a substructure is %
$\tilde\Omega(n^{1/4})$-far from being instance optimal.
\end{theorem}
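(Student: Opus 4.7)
The plan is to adapt the fixed-point construction of the technical overview (illustrated in~\Cref{fig:fixed}) to all three cases, preserving the $\tilde O(n^{3/4})$ vs.\ $\tilde\Omega(n)$ gap. In each case I build a distribution $\D$ over functions $f:[n]\to[n]$ whose underlying shape has $k=\tilde\Theta(n^{1/4})$ disjoint long cycles $C_1,\ldots,C_k$ of length $\tilde\Theta(n^{3/4})$, decorated by short attached paths of length $n^{1/4}$ entering $C_i$ at cyclic positions spaced by a cycle-specific prime $p_i\in[n^{1/4},2n^{1/4}]$. A uniformly random special cycle $C_{i^\star}$ is then chosen, and a single constant-size copy of $H$ is planted on $C_{i^\star}$ by a constant number of local re-routings; crucially, the construction is arranged so that no copy of $H$ appears outside the planted location.

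The specific planting depends on which feature of $H$ is available. For case (i), I redirect one vertex of $C_{i^\star}$ to itself to create a fixed point. For case (iii), I redirect two vertices of $C_{i^\star}$ to a common target to create a node of in-degree $3$; the base is designed so that no two attached paths share an endpoint on any cycle, so no accidental in-degree-$3$ node arises. For case (ii), I add an extra collision on $C_{i^\star}$ at constant distance from some existing attached-path endpoint, producing two edge-disjoint collisions at constant distance. The base avoids $H$ in cases (ii) and (iii) because $H$ is a connected constant-size pattern (per~\Cref{thm:main}): its two collision nodes (resp.\ its in-degree-$3$ configuration) are at constant distance within $H$, while in the base any two collisions on a shared cycle are separated by at least $p_i=\Omega(n^{1/4})$ steps along the cycle. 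The remaining constant-size vertices and edges of $H$ are realized on $C_{i^\star}$ by a constant number of further local modifications.

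With the certificate, the algorithm learns $i^\star$ from the shape and runs the overview's strategy: it samples $\tilde\Theta(\sqrt n)$ short random walks of length $n^{1/4}$ together with $n^{1/4}$ longer probes of length $\sqrt n$, and matches the collision pattern revealed by these walks against the attached-path endpoints on the shape to identify three attachment points on a common cycle whose pairwise spacings are multiples of $p_{i^\star}$; this pinpoints $C_{i^\star}$ within $\tilde O(n^{3/4})$ queries. An additional walk of $\tilde O(n^{3/4})$ around $C_{i^\star}$ then locates $H$, for a total query cost of $\tilde O(n^{3/4})$.

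The main obstacle is the matching $\tilde\Omega(n)$ lower bound for any algorithm without a certificate, which yields the claimed $\tilde\Omega(n^{1/4})$ separation. The plan is a Yao-style argument: the base distribution $\D$ is invariant under independent cyclic rotations of each cycle and under permutations of same-shape cycles, and the planted location within $C_{i^\star}$ is uniform; hence, as long as the algorithm has not visited a vertex within constant distance of the plant, its posterior on the planted location is essentially uniform over the $\tilde\Theta(n)$ cycle vertices, and each query hits a constant neighborhood of $H$ with probability $\tilde O(1/n)$. Summing over queries gives the $\tilde\Omega(n)$ bound. The delicate points, to be verified case by case, are: (a) no accidental copy of $H$ appears outside the planted location, which uses the $p_i=\Omega(n^{1/4})$ spacing of base collisions against the constant intra-$H$ distances of $H$; and (b) the attached-path decoration does not leak cheap information about $i^\star$, which holds because any unlabeled algorithm must observe three properly spaced attached-path endpoints on a common cycle in order to reveal its prime $p_i$, and the expected cost to do so is already $\tilde\Omega(n^{3/4})$.
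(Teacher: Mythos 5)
Your base distribution and your certificate algorithm coincide with the paper's (cycles of length $n^{3/4}$ with prime-spaced entering paths of length $\Theta(n^{1/4})$, then short walks plus long walks plus the ``three intersections with spacings divisible by $p_i$'' test), but the way you plant $H$ is where the proposal has a genuine gap. The paper does not plant $H$ by local surgery on a single special cycle: it defines the \emph{entry vertices} of $H$ to be its in-degree-$0$ vertices (say $T$ of them), picks $T$ uniformly random points lying on $T$ distinct cycles, turns each such cycle into a path of length $n^{3/4}$ ending at the corresponding entry point, and realizes the rest of the copy of $H$ on fresh vertices that are fed \emph{only} by these long paths. This design is exactly what makes the certificate upper bound (\Cref{fixedpointupper}) work: in the forward-query model an algorithm can never discover preimages by walking, so to certify the planted copy it must explicitly query every preimage appearing in the copy, and in the paper's construction each such preimage is either another vertex of the copy or the endpoint of a path of mass $n^{3/4}/n$, reachable by $O(n^{1/4})$ random samples followed by walks of length $O(n^{3/4})$. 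In your plant this fails: in case (ii) one of the two collisions of the planted copy is an attachment point whose second preimage is the tail vertex of one \emph{specific} attached path of length $\Theta(n^{1/4})$ (mass $\Theta(n^{-3/4})$), and in cases (i) and (iii) your ``further local re-routings'' can similarly hang required preimages on fresh vertices or on short cut-off segments of the broken cycle; locating such a vertex costs $\omega(n^{3/4})$ (up to $\Omega(n)$) queries \emph{even with the certificate}, since one must sample until hitting that small-mass set and then walk to verify. So the claimed $\tilde O(n^{3/4})$ upper bound is not established by your construction as described.

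There is also an internal mismatch in your lower bound: you argue that the posterior on the plant is essentially uniform over the $\tilde\Theta(n)$ cycle vertices, but your case (ii) plant is anchored at constant distance from an attached-path endpoint, i.e., at one of only $\tilde\Theta(n^{3/4})$ distinguished locations, so the uniformity argument does not apply as written (it could perhaps be repaired by arguing that the extra preimage is unreachable by forward walks, but that is a different argument and needs care once the cycle structure is given away, as in \Cref{fixedpointlower}). The paper's entry-point planting resolves both issues simultaneously: the critical points are uniform over all $\Theta(n/\log n)$ cycle vertices even conditioned on the full (labeled) cycle structure, giving the $\Omega(n/\log n)$ bound, while every vertex of the copy is reachable by forward walks from a mass-$n^{3/4}$ region, giving the $O(n^{3/4})$ certificate bound. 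If you replace your single-cycle surgery with one cycle-turned-path per in-degree-$0$ vertex of $H$, the remainder of your outline (the prime-spacing certificate algorithm, and the no-accidental-copy argument based on base collisions being $\Omega(n^{1/4})$ apart while $H$ is connected and of constant size) does go through and matches the paper's proof.
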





The rest of this section is devoted to the proof of the theorem. 
We start with the construction used to prove the theorem.
\begin{construction}
Let $H$ be an oriented graph satisfying the conditions of Theorem \ref{thm:function_charac}.
 Define the \emph{entry vertices} of $H$ to be those vertices with in-degree 0 in $H$ (in the special case where $H$ is a single fixed point, define its single vertex as the entry point). Let $T$ be the total number of entry vertices in $H$. 
 
 Define an input distribution $\Delta$ as follows: first, pick $\alpha \frac{n}{\log n}$ vertices uniformly at random and split them into $N = \alpha n^{1/4} / \log(n)$ disjoint cycles of length $n^{3/4}$, for a small absolute constant $\alpha > 0$. Denote these cycles by $C_1,...,C_{N}$. For each cycle $C_i$ we associate a unique prime number, $p_i$, where all $p_i$'s are in the range $(n^{1/4}/4, {n^{1/4}}/2)$ for an appropriate value of $c$. Note that this is possible due to well-known results on the density of prime numbers. We say that all points contained in the union $\bigcup_{i=1}^{N} C_i$ are of \emph{type 1}.

For each cycle $C_i$, we add paths of length $\alpha n^{1/4}$ entering it, where the distance (in $C_i$) between the entry points of every two adjacent paths entering the cycles is exactly $p_i$. Since each cycle $C_i$ has a length of $n^{3/4}$, it has $\Theta(\sqrt{n})$ paths entering it, each of length $n^{1/4}$. We say that all points participating in these paths are of \emph{type 2}.

Lastly, a collection $x_1, \ldots, x_T$ of exactly $T$ points from $\bigcup_{i=1}^{N} C_i$ is picked uniformly at random conditioned on the event that no two of these points come from the same cycle. Denote the latter event by $E$ and note that $\Pr(E) = 1-o(1)$. For each $x_k$, let $C_{i_k}$ denote the cycle containing it, and turn this cycle into a path ending at $x_k$ by removing the outgoing edge from $x_k$. Finally, insert a copy of $H$ using all $x_1, \ldots, x_T$ as entry points.

To complete the function into one that has size $n$, partition all remaining (unused) points into disjoint cycles of length $n^{3/4}$ each.
\end{construction}

Crucially, an algorithm with a certificate knows the indices $i_1, \ldots, i_T$ (and the corresponding primes $p_{i_1}, \ldots, p_{i_T}$). Given these primes, it turns out that the algorithm is able to find the relevant cycles, walk on them until finding the entry points, and building the full $H$-copy, all using $O(n^{3/4})$ queries. In contrast, for an algorithm without the certificate, the entry points are distributed uniformly over the union of all cycles, and thus a lower bound of $\tilde\Omega(n)$ can be shown.

\begin{lemma} \label{fixedpointupper}
$\E_{f \leftarrow \Delta} RAC(\P_H, f) = O(n^{3/4})$.
\end{lemma}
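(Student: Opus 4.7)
The plan is to exhibit an algorithm $A$ that, given the unlabeled certificate, locates an $H$-copy in $O(n^{3/4})$ queries in expectation. From the certificate, $A$ reads the full unlabeled structure: the $N=\alpha n^{1/4}/\log n$ cycles of length $n^{3/4}$, their associated primes $p_1,\ldots,p_N$, and the indices $i_1,\ldots,i_T$ of the special cycles that were broken into paths and glued into a common $H$-copy at the break points $x_1,\ldots,x_T$.

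First I would perform two batches of forward walks from uniformly random start vertices, each batch costing $O(n^{3/4})$ queries: $\sqrt{n}$ short walks of length $n^{1/4}$, and $n^{1/4}$ long walks of length $\sqrt{n}$. A uniformly random vertex is of type $2$ (resp.\ type $1$) with probability $\Theta(1/\log n)$, and these vertices are spread roughly uniformly over the $N$ cycles, so for any fixed cycle $C_i$ we expect $\Theta(n^{1/4})$ short walks to enter $C_i$ through one of its type-$2$ paths, and $\Theta(1)$ long walks to lie on $C_i$. A short walk that starts on a type-$2$ path first traverses the remainder of that path, arriving at the entry point of the path on $C_i$, and then continues on the cycle for the rest of its $n^{1/4}$-step budget, so its trailing portion records the entry point together with a short prefix of $C_i$ starting there.

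Next I would glue any two walks that share a common vertex into the same connected component; since forward walks that agree on a single vertex agree thereafter, each component lies on a unique cycle. Taking the long walk of a component as a coordinate chart, $A$ reads off the relative positions of the entry points of the short walks on that cycle. By construction these positions are multiples of $p_i$ on cycle $C_i$, so by checking which prime in $\{p_1,\ldots,p_N\}$ is consistent with the pairwise differences observed within a component, $A$ determines the index $i$ of the component. Since the primes are pairwise coprime and all lie in the narrow window $(n^{1/4}/4,n^{1/4}/2)$, a single pair of identified short walks already suffices to pin down $i$, and a direct union bound shows that the expected number of spurious matches across components is $O(1/\log n)$. In particular $A$ locates the components corresponding to each special cycle $C_{i_k}$.

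Finally, for each identified $C_{i_k}$, $A$ continues walking forward from a known cycle vertex for at most $n^{3/4}$ steps; because the cycle has been broken at $x_k$, the walk is forced to exit at $x_k$ and enter the $H$-structure, which $A$ verifies and traverses in $O(1)$ further queries. Since $T$ is a constant, the final phase also costs $O(n^{3/4})$ queries. The main obstacle is making the identification phase work: a Poisson/birthday calculation only gives $\Theta(1)$ expected overlap between short and long walks per cycle, so with constant probability some special cycle may receive no usable overlap. This is handled by restarting the entire procedure, which succeeds with constant probability per round, keeping the total expected cost at $O(n^{3/4})$.
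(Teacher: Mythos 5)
Your proposal is correct and follows essentially the same route as the paper's proof: the same sampling of $\Theta(\sqrt{n})$ short walks of length $\Theta(n^{1/4})$ and $\Theta(n^{1/4})$ long walks of length $\Theta(\sqrt{n})$, identification of the special cycles via divisibility of distances between short--long intersection points by the certificate's primes, a full walk along each identified cycle to reach the entry points of $H$, and a restart/geometric-variable argument for the $\Omega(1)$ per-round success probability. The only deviation is bookkeeping: the paper commits to walking a cycle only after \emph{three} intersection points with all pairwise distances divisible by some $p_{i_k}$ (so that its cruder high-probability count of $O(\log^2 n)$ intersections per wrong cycle plus a union bound keeps the wasted work at $o(n^{3/4})$), whereas your single-pair criterion also suffices but only via the expectation accounting you sketch (expected $O(1)$ usable pairs per wrong cycle, spurious divisibility probability $O(n^{-1/4})$ per pair and special prime, hence $o(1)$ expected false identifications against the $T$ special primes and $o(n^{3/4})$ wasted queries) — and note that, strictly speaking, a single pair does not ``pin down'' the index with certainty, since a difference $m\,p_i$ with $m$ as large as $\Theta(n^{1/4})$ can be divisible by another prime in the window, so the union-bound clause, restricted to matches with the special primes, is the part doing the work.
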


\begin{lemma} \label{fixedpointlower}
For any algorithm $\A$ (without a certificate), $\E_{f \leftarrow \Delta} \Query_A(f) = \Omega(n / \log n)$.
\end{lemma}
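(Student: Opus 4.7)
The plan is to invoke Yao's minimax principle and fix a deterministic algorithm $A$, so it suffices to prove $\Pr_{f \sim \Delta}[A \text{ finds } H \text{ within } Q \text{ queries}] = O(Q \log n / n)$ for every $Q \le n/2$; Markov then gives $\E_{f \sim \Delta}[\Query_A(f)] = \Omega(n/\log n)$. A necessary condition for finding $H$ is that some query lands on a label whose underlying structure vertex is the source of an $H$-edge; let $U$ be the set of such sources in the structure (so $U \supseteq \{x_1,\dots,x_T\}$ and $|U| = O(1)$), let $\pi$ be the random labeling, and let $\tau$ be the first step at which $A$ queries a label $v_t$ with $\pi^{-1}(v_t) \in U$.

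The main tool I would use is a coupling with an auxiliary input $f_0$ obtained from $f$ by deleting the $H$-modification: the $T$ chosen cycles are left intact (their $x_k$'s behave like ordinary cycle vertices) and the internal vertices of $H$ are reassigned into the filler. Because $f$ and $f_0$ agree on every vertex not in $U$, and because each query $v_{t+1}$ is a deterministic function of the previous responses, the query sequence $A$ produces on $f$ coincides with the one it produces on $f_0$ up to and including time $\tau$. Consequently both $\tau$ and $S_Q := \{\pi^{-1}(v_t) : t \le Q\}$ can be analysed in the $f_0$-world, where they depend only on $\pi$ and are independent of the random entry set $X := \{x_1,\dots,x_T\}$; in particular $|S_Q| \le Q$ almost surely.

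Next I would bound $\Pr[S_Q \cap U \neq \emptyset]$ in two pieces. For the entry points, the construction picks $T$ of the $N = \alpha n^{1/4}/\log n$ original cycles (modulo an $o(1)$-probability distinct-cycles conditioning) and then a uniform position on each, so a cycle-by-cycle summation gives
\begin{align*}
\Pr_X[\,X \cap S_Q \neq \emptyset \mid \pi\,]
\;\le\; \sum_{i=1}^{N} \frac{T}{N} \cdot \frac{|S_Q \cap C_i|}{n^{3/4}}
\;\le\; \frac{T \cdot Q}{N \cdot n^{3/4}}
\;=\; O\!\left(\frac{Q \log n}{n}\right),
\end{align*}
using $N \cdot n^{3/4} = \Theta(n/\log n)$ and $\sum_i |S_Q \cap C_i| \le |S_Q| \le Q$. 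For the $O(1)$ internal vertices of $H$, each is a generic filler vertex in the $f_0$-world; by a symmetry argument over the $\Theta(n)$ filler positions (which are exchangeable in $f_0$'s structure) the probability that any such vertex lies in $S_Q$ is at most $O(Q/n)$, and a union bound contributes another $O(Q/n)$ overall.

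The step I expect to be most subtle is the reduction to $f_0$: one must carefully verify that $A$'s adaptive queries cannot be influenced by the location of the planted $H$ before time $\tau$, which rests on $f = f_0$ outside $U$ together with the deterministic-transcript property of $A$. Once that is settled, the remainder is a clean hitting-set calculation in the structure, whose punchline is that the $T$ entry points are genuinely uniform over the $\Theta(n/\log n)$ type-1 structure vertices, so any set of at most $Q$ queried structure vertices fails to contain any $x_k$ with probability $1 - O(Q \log n / n)$.
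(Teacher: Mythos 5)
Your proposal is correct and its quantitative heart is the same as the paper's: the entry points $x_1,\dots,x_T$ are (up to the $o(1)$-probability conditioning on the event $E$) uniform over the $\Theta(n/\log n)$ type-1 vertices, so each query hits a critical vertex with probability $O(\log n/n)$, while the $O(1)$ remaining $H$-vertices sit among $\Theta(n)$ exchangeable positions and contribute only $O(Q/n)$; hence $\Omega(n/\log n)$ expected queries. Where you differ is in how adaptivity is neutralized. The paper strengthens the adversary instead of coupling: it hands the algorithm, for free, the identity and cyclic order of every cycle $C_i$ (but nothing about which cycles are broken, nor about type-2 points), observes that criticality of points in $C$ is independent of queries made in $B=[n]\setminus C$ and vice versa, bounds the per-query hitting probability in $B$ by $\Theta(1/n)$, and finishes with a sampling-without-replacement (urn) computation in $C$ with $T$ green balls among $\Theta(n/\log n)$ balls. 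You instead invoke Yao, fix a deterministic algorithm, and couple with an unplanted function $f_0$ so that the transcript is $X$-independent before the stopping time $\tau$. Both routes work; the paper's revelation trick buys a shorter argument (no transcript lemma needed), while your coupling is more self-contained and generalizes more mechanically. The one step you rightly flag as subtle is also the only place needing repair: as described, $f$ and $f_0$ do not agree on \emph{all} vertices outside $U$, because removing the internal $H$-vertices back into the filler changes the filler partition into cycles (predecessor edges of those vertices change). You should either define $f_0$ from $f$ by re-routing only the out-edges of the $O(1)$ vertices in $U$ (accepting that $f_0$ is then not exactly the unplanted distribution, which is harmless since the entry set $X$ stays independent of the transcript given the cycle structure), or mimic the paper and condition on the cycle/filler skeleton. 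With that fix your sketch matches the paper's level of rigor; the concluding "Markov" step is really the standard "success probability at most $1/2$ within $Q=cn/\log n$ queries forces expected cost $\Omega(Q)$" argument, combined—as in the paper—with the observation that a correct (Las Vegas) algorithm must either query a critical point or spend $\Omega(n)$ queries.
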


\begin{proof}[Proof Of Lemma \ref{fixedpointupper}]
We show the following stronger claim: For any $f \in \Delta$, $
\mathcal{RAC}(\P_H, f) = O(n^{3/4})$. We provide an algorithm to find all $T$ entry points with $\Omega(1)$ success probability using $O(n^{3/4})$ queries. Once all $T$ entry points are found, completing the copy of $H$ is trivial. 
The algorithm does the following for a large enough constant $C$.
\begin{itemize}
    \item \textbf{Phase 1: sampling short paths.} Pick $Cn^{1/2}$ vertices uniformly at random and follow the path stemming from each point for $Cn^{1/4}$ steps.
    \item \textbf{Phase 2: sampling long paths.} Pick $Cn^{1/4}$ points uniformly at random and follow the path stemming from each point for $Cn^{1/2}$ steps.
    \item \textbf{Phase 3: collection intersections and closing cycles.} Consider any walk $W$ generated in step 2, for which there are three walks generated in step 1 which intersect $W$ at points $y_1, y_2, y_3$ (for this matter, the intersection point of two walks is defined as the first point in which they intersect). If the distance between all of these intersection points is a multiple of $p_{i_k}$ for some $k \in [T]$, we follow $W$ as long as possible (until the walk closes a cycle or reaches a fixed point; the latter means, conditioned on the certificate being correct, that an entry point was found). 
\end{itemize} 

First note that $\Omega(n)$ of the points are of type 2. Thus in Phase 1, with high probability at least $\sqrt{n}$ of the points picked will be of type 2. Since each such point is followed for a length of $n^{1/4}$, the algorithm must query a point that is on the intersection of a path and a cycle. Each of the cycles that contain an entry point will with high probability have $n^{1/4}$ points queried by our algorithm on them. Additionally, with constant probability arbitrarily close to one (for $C$ large enough), Phase 2 will pick at least one point in each of the cycles $C_{i_1}, \ldots, C_{i_T}$ containing an entry point, and subsequently make a walk of length $C\sqrt{n}$ on each of these cycles.

We conclude that with probability $\Omega(1)$, the following holds for all cycles $k \in [T]$: the path generated within $C_{i_k}$ in Phase 2 will intersect at least two paths generated in Phase 1. Thus, Phase 3 will ensure that the rest of $C_{i_k}$ will be queried, until reaching the entry point. Given all query points, $H$ will be queried, as needed. Thus, the success probability of a single iteration of the algorithm is $\Omega(1)$.

It remains to bound the expected query complexity of a single iteration of the algorithm. Clearly, Phases 1 and 2 always take $O(n^{3/4})$ queries. Phase 3 takes $O(n^{3/4})$ queries when restricted to cycles that contain an entry point, since there are only $O(1)$ such cycles. What about other cycles?

With high probability, the following holds for all cycles $C'$ not containing an entry point: at most $O(n^{1/4})$ of the short paths from Phase 1 and at most $O(\log^{1.1} n)$ of the long paths from Phase 2 will intersect $C'$. It follows that for with high probability, simultaneously for all such cycles, the number of short-long intersections is $O(\log^2 n)$. Condition on this high probability event.

Let $C'$ be any cycle not containing an entry point, and let $p'$ be the unique prime associated with this cycle. Also let $p \in \{ p_{i_1}, \ldots, p_{i_k}\}$ be any prime associated with one of the entry point cycles, and let $y_1, \ldots, y_m$ denote all intersection points found in $C'$. Note that all distance between $y_i$ and $y_j$ are divisible by $p'$, but the probability that each such distance is divisible by $p$ is $O(1/p) = O(1/n^{1/4})$. Thus, the probability that there are three intersection points $y_i, y_j, y_l$ whose distances are all divisible by $p$ is $O(\log^{4} n / \sqrt{n})$. Phase 3 will query the whole cycle $C'$ only if this event holds.

Finally, taking a union bound over all values of $p \in \{ p_{i_1}, \ldots, p_{i_k}\}$ and all possible cycles $C'$, we conclude that the expected number of queries made in Phase 3 on incorrect cycles $C'$ (which do not contain entry points) is of order $O(n^{3/4} \cdot \log^4 n / n^{1/4}) = o(n^{3/4})$. 
%

We have seen that a single iteration of the algorithm above succeeds with $\Omega(1)$ probability and has expected query complexity $O(n^{3/4})$. The proof follows by linearity of expectation and standard bounds on geometric random variables.
\end{proof}

\begin{proof}[Proof of Lemma \ref{fixedpointlower}]
Consider the construction right after all type $1$ points are determined, and suppose that the algorithm is given the following information ``for free'': for each cycle
$C_i$, the algorithm knows exactly which points belong to $C_i$ as well as their order along the cycle.
Note that the algorithm does not know a priori the indices $i_1,...,i_T$ of the cycles which will be turned into a path (which ends in an entry point to the $H$-copy). It also does not receive any additional information about type 2 points, aside from the information detailed above.


We say that a point in $[n]$ is \emph{critical} if it participates in the $H$-copy. Note that an algorithm must either query a critical point or make $\Omega(n)$ total queries in order to be correct with constant probability. Thus we bound the number of queries an algorithm makes until it queries a critical point. Once such a point is queried, the algorithm stops running.

Let $C = \bigcup_{i=1}^{N} C_i$ denote the set of all points lying in cycles, and let $B = [n] \setminus C$ denote the set of all other points. With some abuse of notation let $H$ be the (unknown) set of critical points. We note that for any $x \in C$, the probability that $x \in H$ is completely independent of the queries that the algorithm makes in $B$. The same is true for $x \in B$ with respect to queries made in $C$.

Consider first queries that the algorithm makes in $B$. Because of the above independence, the probability that after $i$ queries were made in $B$, the next query will reveal a critical point is bounded by $|H| / (|B|-i) = \Theta(1/n)$. Thus, any algorithm that makes $o(n)$ queries has $o(1)$ probability to query a critical point in $B$. 

\newcommand{\ALONG}{{A_{\text{long}}}}
\newcommand{\ASHORT}{{A_{\text{short}}}}

To analyze the situation in $C$, observe that the event $E$ defined during the construction satisfies $\Pr(E) = 1-1/\Theta(n^{1/4})$. Thus from elementary probabilistic considerations it will suffice to prove the following claim: let $x_1, \ldots, x_k$ be a uniformly random collection of $k$ disjoint points in $C$ without the condition that they come from different cycles (i.e., without conditioning on the event $E$). Then the expected number of queries required for any algorithm (without a certificate) to query at least one of these points is $\Omega(n/\log n)$. However, the proof of this claim is essentially trivial, following from standard estimates for sampling without replacement: in our setting, we have $|C| = \Theta(n / \log n)$ red balls and $T$ green balls in an urn, and the quantity we need to output is the expected amount of time until a green ball is sampled (without replacement). This quantity is of order $\Theta(n / \log n)$.
\end{proof}



\section{Instance Optimality in Graphs} \label{sec:5}


We consider whether questions of the type: ``Does $G$ contains a subgraph $H$" are instance optimal.
 We begin by defining a \kstar{k}.
 
 \begin{definition} [\kstar{k}]
The \emph{\kstar{k}} $S_k$ is a graph with $k+1$ vertices: $k$ vertices of degree 1, all connected to a vertex of degree $k$.
 \end{definition}
 
  For example, a \kstar{1} is a graph consisting of 2 vertices, connected by an edge. A \kstar{2} (or a wedge) contains 2 vertices of degree 1 connected to a third vertex of degree 2. A \kstar{3} is a claw. 


 
 In Section \ref{sec:3}, we have seen that claws are not instance optimal, by showing a $\Omega(\log n)$-separation between an algorithm with a certificate and one without a certificate in some case. What about other choices of $H$? It turns out that if $H$ is a \kstar{1} or \kstar{2} then finding $H$ is instance optimal. If $H$ is any other graph, then finding $H$ is polynomially far from instance optimal. 



\begin{theorem}[\Cref{thm:graphconstantcharacterization} repeated]
Let $H$ be any fixed graph and consider the property $\mathcal{P}_H$ of containing a copy of $H$ as a subgraph. Then $\mathcal{P}_H$ is:
\begin{itemize}
\item instance optimal if $H$ is an edge or a wedge (path with two edges);
\item $n^{\Omega(1)}$-far from instance optimal if $H$ is any graph other than an edge, a wedge, or a claw; and
\item $\Omega(\log(n))$-far from instance optimal when $H$ is a claw.
  \end{itemize}
\end{theorem}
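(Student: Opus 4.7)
The plan is to prove the theorem in three parts. The $\Omega(\log n)$ lower bound for the claw case was established in Section~\ref{sec:3}, so what remains is to prove (i) instance optimality of $\P_H$ for $H \in \{K_2, P_2\}$, and (ii) polynomial separation from instance optimality for all other connected $H$ with at least three edges.

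For (i), the strategy is to exhibit a simple sampling-based algorithm without a certificate and show it matches every certificate-holding algorithm up to constants. For $H = K_2$, the algorithm samples vertices uniformly at random without replacement, queries each degree, and upon finding a vertex of positive degree queries one neighbor to output an edge witness; if all vertices are exhausted without success, it outputs ``no''. The wedge case is analogous with degree threshold $2$. Letting $k$ be the number of vertices meeting the threshold, the expected query complexity is $O(n/k)$ when $k \geq 1$ and exactly $n$ when $k = 0$. A matching lower bound on $\RACC(\P_H, G)$ follows from a Yao-style indistinguishability argument: since the certificate reveals only unlabeled information, from any algorithm's perspective the witness vertices occupy a uniformly random $k$-subset of $[n]$, and locating one requires $\Omega(n/k)$ expected queries regardless of the certificate. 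Hence the ratio is $O(1)$ and $\P_H$ is instance optimal.

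For (ii), the plan is to design, for each eligible $H$, an input distribution $\Delta_H$ on which a certificate algorithm succeeds in $O(\sqrt n)$ expected queries while every non-certificate algorithm requires $\Omega(n)$ expected queries, yielding an $n^{\Omega(1)}$ separation via Yao's principle. The construction splits into two subcases. When $H$ is not a star (and not $K_2$ or $P_2$), I would use $\Theta(\sqrt n)$ vertex-disjoint stars whose center degrees are all pairwise distinct values in the interval $[\sqrt n /2, \sqrt n]$ (ensuring each star is uniquely identifiable in the unlabeled structure), padded with isolated vertices to reach $n$ vertices. A copy of $H$ is planted by picking one designated star $S^*$ uniformly at random and adding the edges of $H$ among $|V(H)|$ of its leaves. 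A certificate algorithm reads off the identity of $S^*$ from the unlabeled degree sequence, samples random vertices until one belongs to $S^*$ (in $O(\sqrt n)$ expected samples, since $|V(S^*)|/n = \Theta(1/\sqrt n)$), walks to its center, and queries the neighbors of each of its $O(\sqrt n)$ leaves to verify $H$. When $H = S_k$ with $k \geq 4$, the plan is instead to use a backbone path of length $\sqrt n$ with a pendant structure of size $\Theta(\sqrt n)$ attached to each backbone vertex, designed so that the pendants are pairwise non-isomorphic (e.g., pendant number $i$ carries a small distinguishing marker at position $i$). A uniformly random vertex $u$ in the graph is then connected to $k$ fresh leaves to form the planted $S_k$; the certificate identifies the unique pendant containing $u$ and $u$'s position within it, and a certificate algorithm samples a vertex, walks to the backbone in $O(\sqrt n)$ steps, traverses the backbone using $O(1)$ queries per vertex to identify markers, and walks down the correct pendant to $u$ --- all in $O(\sqrt n)$ queries. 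In both constructions, a symmetry argument over the uniform choice of $S^*$ (respectively, $u$) shows that any non-certificate algorithm must essentially examine an $\Omega(1)$ fraction of all non-isolated vertices to locate the planted $H$, requiring $\Omega(n)$ queries in expectation.

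The main obstacle will be the star subcase of (ii), where one must design the pendant-distinguishing markers to satisfy two competing requirements simultaneously: they must be rich enough that a certificate algorithm can identify the correct pendant via a few local queries, yet sparse and local enough that they do not provide comparable navigational help to a non-certificate algorithm. Keeping marker sizes $O(1)$ and placing them generically within each pendant should ensure that a non-certificate algorithm --- lacking the global unlabeled map --- cannot translate marker observations into pruning of which pendants to explore, while a certificate algorithm armed with this map jumps directly to the correct pendant.
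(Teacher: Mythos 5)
Your overall plan coincides with the paper's: the claw case is delegated to Section~\ref{sec:3}, the non-star case uses a forest of $\Theta(\sqrt n)$ stars with pairwise distinct center degrees and a planted constant-size structure on leaves (the paper plants a clique on leaves of \emph{distinct} stars, you plant $H$ inside one star $S^*$ --- a harmless variant, since planting does not change center degrees and the same symmetry argument gives the $\Omega(n)$ lower bound), and the $S_k$, $k\ge 4$ case uses a backbone path with $\Theta(\sqrt n)$ pendant paths and a uniformly random planted star center, exactly as in \Cref{lemma:paths}. However, there are two genuine problems. The first is in part (i), the wedge case: your algorithm (sample vertices, accept upon finding degree $\ge 2$) is \emph{not} instance optimal, and your claimed matching lower bound $\RACC(\P_{P_2},G)=\Omega(n/k)$ (with $k$ the number of vertices of degree $\ge 2$) is false. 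Take $G=K_{1,n-1}$: here $k=1$, so your algorithm spends $\Theta(n)$ expected queries, yet \emph{any} algorithm --- even without a certificate --- finds a wedge in $O(1)$ queries by sampling an arbitrary vertex, and if it has degree $1$, querying the degree of its unique neighbor (which is $n-1$). The flaw in the lower-bound argument is that the witness need not be hit by direct sampling; it can be reached through a neighbor query from a non-witness vertex. The paper's algorithm for the wedge does exactly this extra step (pick random $u$; if $\deg(u)=1$, also check whether its neighbor has a second neighbor), and the instance-optimality proof must be carried out for that two-step exploration, not for pure vertex sampling.

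The second issue is in your $S_k$ ($k\ge 4$) construction: the marker mechanism is internally inconsistent as stated. If pendant number $i$ carries its distinguishing marker ``at position $i$,'' then recognizing the marker of pendant $i$ requires walking $\Theta(i)$ steps into that pendant, so ``traversing the backbone using $O(1)$ queries per vertex to identify markers'' is not possible, and the naive implementation costs $\sum_i \Theta(i)=\Theta(n)$, destroying the $O(\sqrt n)$ certificate upper bound (an $O(\log n)$-bit marker near the top of each pendant would repair this at cost $O(\sqrt n\log n)$, still a polynomial separation, but this is not what you wrote). More importantly, the markers are unnecessary: in the paper's construction the pendants are identical paths, the reflection symmetry of the backbone is broken by a single extra degree-one vertex $v_0$ attached at one end, and the unlabeled certificate then determines the backbone index $k$ of the pendant containing the star center; the certificate algorithm simply locates the backbone, orients it from $v_0$, counts to position $k$, and walks down that pendant, all in $O(\sqrt n)$ queries. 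If you keep markers, you must also verify they do not themselves create copies of $S_k$ (fine for $k\ge 4$ with degree-$3$ gadgets, but it needs saying). With these two repairs --- the two-step wedge algorithm together with a correct matching lower bound, and a symmetry-breaking device replacing (or correctly specifying) the markers --- your argument aligns with the paper's proof.
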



The theorem follows from the lemmas below, together with the results of Section \ref{sec:3}. \Cref{lemma:deg1} proves the first item of the theorem. 
\Cref{lemma:star} proves that for every $H$ that is not a \kstar{k}, finding $H$ is not instance optimal. 
\Cref{lemma:paths} proves the separation for $k$-stars when $k \geq 4$.

 \begin{lemma} \label{lemma:deg1}
$\mathcal{P}_H$ is instance optimal when $H$ is a \kstar{1} (edge) or a \kstar{2} (wedge).
 \end{lemma}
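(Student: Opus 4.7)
The plan is to exhibit, for each of the two cases (edge and wedge), a simple Las Vegas algorithm $A_H$ without a certificate whose expected query complexity matches, up to an absolute constant, a lower bound valid against every algorithm $A'$ holding an unlabeled certificate. The conceptual reason both cases are tractable is that a positive witness for the property is \emph{local}: once a ``witness vertex'' is touched by the queries, only a constant number of additional queries are needed to certify a YES answer, so uniform random sampling cannot be beaten by structural information.

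For $H$ a single edge, the algorithm $A_1$ repeatedly samples a previously-unqueried vertex $v$ uniformly at random and queries $d_v$, returning YES if $d_v \geq 1$ and marking $v$ otherwise; once all $n$ vertices have been queried it returns NO. Writing $m$ for the number of non-isolated vertices in the input $f$, a geometric-tail analysis gives expected complexity $O(n/(m+1))$ when $m \geq 1$ and exactly $n$ when $m = 0$. For $H$ a wedge, the algorithm $A_2$ samples $v$, queries $d_v$, returns YES if $d_v \geq 2$, and otherwise (if $d_v = 1$) also queries the first neighbor $u$ of $v$ and $d_u$, returning YES if $d_u \geq 2$; otherwise both vertices are marked and the loop continues. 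Let $m$ be the number of vertices of degree $\geq 2$ and $T$ the number of degree-$1$ vertices whose unique neighbor has degree $\geq 2$. Each round costs $O(1)$ queries and succeeds with probability at least $(m+T)/n$, giving expected complexity $O(n/(m+T))$ when $m + T \geq 1$ and $O(n)$ when the graph contains no wedge.

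For the matching lower bound on any $A'$, I invoke Yao's principle, running $A'$ on $f \circ \pi$ for uniformly random $\pi$. Define the witness set $W = \{v : d_v \geq 2\} \cup \{v : v \text{ is adjacent to some vertex of degree} \geq 2\}$, and observe the following structural fact: if $v \notin W$ then every neighbor of $v$ is also not in $W$, because a vertex of degree $\leq 1$ whose unique neighbor is a wedge-center would itself lie in $W$. Consequently, before any vertex of $W$ is touched (either directly or as a neighbor response), the revealed subgraph is a vertex-disjoint union of isolated vertices and edges, which is wedge-free. By random-$\pi$ symmetry each query touches at most two vertices, each new touch uniformly distributed over untouched vertices of $f$; a short case decomposition yields $|W| = m + T$, so the expected number of queries to touch $W$ is $\Omega(n/(m+T))$. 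The analogous argument with $W = \{v : d_v \geq 1\}$ handles the edge case (giving $\Omega(n/m)$), while the Las Vegas constraint forces $\Omega(n)$ queries when the answer is NO, since otherwise an adversarial input with an edge or wedge planted among unqueried vertices would fool the algorithm.

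The main technical obstacle is the structural observation about $W$ in the wedge case combined with the adversarial adjacency-list ordering: one must verify that clever strategies involving multiple neighbor queries on the same vertex or chains of neighbor queries cannot certify a wedge without first touching a vertex of $W$. This follows because any wedge witness necessarily identifies a specific vertex $u$ with $d_u \geq 2$ in $f$ (either via a degree query on $u$ or via two distinct revealed edges incident to $u$), and such a $u$ must appear among the touched vertices. Combining the bounds yields $\Query_{A_H}(f) \leq O(1) \cdot \RACC(\mathcal{P}_H, f)$ for both $H$, establishing instance optimality.
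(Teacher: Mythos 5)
Your proposal is correct and follows essentially the same route as the paper: the paper's proof uses exactly your two algorithms (sample a random vertex and check its degree; in the wedge case also follow a unique neighbor) and simply asserts optimality as "not hard to verify." Your witness-set count $|W|=m+T$ and the random-relabeling (exchangeability) lower bound $\Omega(n/|W|)$ on any certificate-holding Las Vegas algorithm, plus the $\Omega(n)$ bound on NO instances, are precisely the verification the paper leaves implicit, and they are sound.
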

 

 \begin{lemma} \label{lemma:star}
 For all graphs $H$ that are not a \kstar{k}, there exists a distribution, $\Delta$, such that $\E_{G \leftarrow \Delta} \Query_A(G) = \Omega(n)$ for any algorithm $A$ (without a certificate) determining membership in $\P_H$, whereas $\RACC(\P_H,\Delta) = O(n^{1/2} \log n)$. 
 \end{lemma}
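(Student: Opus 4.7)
The distribution $\Delta$ is built from disjoint stars with distinct degrees. Fix $m = \lfloor \sqrt{n}/2 \rfloor$ and pick distinct integers $d_1 < \cdots < d_m$ from the range $[\sqrt{n}/4, \sqrt{n}]$ with $\sum_i (d_i+1) \leq n$. Let $G_0$ be the disjoint union of stars $S_1, \ldots, S_m$ of center degrees $d_1, \ldots, d_m$ (plus isolated vertices to reach $n$ in total). Setting $h = |V(H)|$, sample a uniformly random ordered tuple $(S_{i_1}, \ldots, S_{i_h})$ of distinct stars and, in each $S_{i_j}$, a uniformly random leaf $\ell_j$; then add edges among $\{\ell_1, \ldots, \ell_h\}$ making the $\ell_j$'s form a copy of $H$, with $\ell_j$ playing the role of the $j$-th vertex. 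Finally apply a uniformly random relabeling to the $n$ vertices.

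\textbf{Structural fact and upper bound with a certificate.} Because $H$ is connected and not a star, and any connected subgraph of a vertex-disjoint union of stars lies entirely within a single star (and hence is itself a star), $G_0$ contains no copy of $H$. Therefore every copy of $H$ in any $G$ from $\Delta$ must use at least one added edge, hence at least one \emph{modified leaf}. An algorithm holding the unlabeled certificate reads off the multiset of degrees and, using the distinctness of the center degrees, determines which of the $m$ center degrees correspond to modified stars (those with an attached leaf of degree $>1$ in the unlabeled picture). To locate those centers in the actual labeled $G$, the algorithm repeatedly samples a uniformly random vertex $v$, queries $\deg(v)$, and, if $v$ is a leaf, queries its unique neighbor. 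A single $O(1)$-query round hits any specified modified center with probability $\Theta(1/\sqrt{n})$ (its star has $\Theta(\sqrt{n})$ leaves out of $n$ vertices), so a coupon-collector bound finds all $h$ modified centers in $O(\sqrt{n}\log n)$ expected queries. Enumerating the $\Theta(\sqrt{n})$ leaves of each modified center and querying their degrees then reveals the (unique) modified leaf of each modified star, and the planted copy of $H$ is identified, for a total of $O(\sqrt{n}\log n)$ expected queries.

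\textbf{Lower bound without a certificate.} By Yao's minimax principle, it suffices to lower-bound the expected query complexity over $\Delta$ of any deterministic algorithm. Conditioned on the probability-$(1-o(1))$ event that the tuple $(S_{i_1}, \ldots, S_{i_h})$ consists of distinct stars, the set of modified leaves is distributed as a uniformly random $h$-subset of the $\Theta(n)$ leaves of $G_0$. By the structural fact, any algorithm declaring $G \in \P_H$ must have previously queried (or seen as the target of a neighbor query) at least one modified leaf. In the adjacency-list model each query reveals the degree of a single vertex or the identity of a single neighbor of a queried vertex, so after $t$ queries the set $L_t$ of leaves whose modification status has effectively been determined satisfies $|L_t| = O(t)$. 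A deferred-decisions coupling --- revealing the identities of the modified leaves only as each leaf is actually probed --- gives that, conditional on no modified leaf having been found yet, each new query uncovers one with probability $O(h/n)$. Summing, the success probability within $t$ queries is $O(th/n)$, so $t \ge \Omega(n/h) = \Omega(n)$ is required for constant success probability, yielding $\E[\Query_A(G)] = \Omega(n)$.

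\textbf{Main obstacle.} The hardest step is formalizing the ``$O(1)$-leaves-per-query'' bound in the lower bound. The concern is that a query to a star center, which returns a high-degree label, might somehow reveal modification information about its leaves ``for free.'' The key point is that in the adjacency-list model a single neighbor query to a center yields only the label of a neighbor, not its degree; thus determining whether a specific leaf is modified requires a dedicated query to that leaf itself. Combined with the uniform placement of modified leaves over the $\Theta(n)$ candidate leaves and the fact that queries to unrelated stars cannot rule out any leaf in other stars, the deferred-decisions argument makes the $O(h/n)$-per-query bound rigorous.
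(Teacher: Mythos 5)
Your proposal is correct and takes essentially the same approach as the paper: disjoint stars with pairwise-distinct degrees of order $\sqrt{n}$, a copy of $H$ planted on uniformly random leaves of distinct stars (the paper plants a clique on $|H|$ such leaves, a cosmetic difference), a certificate algorithm that exploits degree uniqueness to locate the relevant star centers by sampling leaves in $O(\sqrt{n}\log n)$ queries, and an $\Omega(n)$ lower bound because the $O(1)$ critical leaves are uniformly hidden among $\Theta(n)$ leaves. The minor deviations (isolated filler vertices, planting $H$ exactly rather than a clique) do not change the argument.
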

 
 \begin{lemma} \label{lemma:paths}
Let $H = S_k$ for $k \geq 4$. There exists a distribution $\Delta$ such that for any algorithm $A$ (without a certificate) determining membership in $\P_H$, $\E_{G \leftarrow \Delta} \Query_A(G) = \Omega(n)$, while $\RACC(\P_H,\Delta) = O(n^{1/2})$. 
 \end{lemma}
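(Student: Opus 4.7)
The plan is to take $\Delta$ to be the distribution of ``spider'' graphs described as follows. Let $m = \lfloor \sqrt{n} \rfloor$. Each $G$ in the support of $\Delta$ contains a main path $P = v_1 v_2 \cdots v_m$, together with a side path $S_i = u_{i,1} u_{i,2} \cdots u_{i,m}$ of length $m$ attached to $v_i$ for each $i \in [m]$ (where $u_{i,1}$ is adjacent to $v_i$). A vertex $u$ is chosen uniformly at random from the spider's vertex set $V_{\mathrm{spider}} = V(P) \cup \bigcup_i V(S_i)$, and $k$ new ``pendant'' vertices are attached to $u$ by edges; any remaining vertices are kept isolated so that $|V(G)|=n$. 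In this construction every spider vertex other than $u$ has degree $1$, $2$, or $3$, while $u$ has degree at least $k + 1$, so $u$ is the unique vertex whose degree witnesses a $k$-star (since $k \geq 4$).

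For the upper bound $\RACC(\P_H,\Delta) = O(\sqrt{n})$, I would use the fact that the unlabeled certificate reveals the structural position of $u$ inside the spider---namely the index $i^*$ of the main vertex bearing $u$'s side path and the distance $j^*$ of $u$ from that main vertex---up to the single non-trivial automorphism of the spider, which is the reversal of the main path. The algorithm samples a random vertex and walks (using degree observations to distinguish degree-$2$ side-path vertices from degree-$3$ main-path vertices) until it reaches a main vertex, costing $O(\sqrt{n})$ queries because each path has length at most $m$. It then walks along the main path until hitting a degree-$2$ endpoint, advances $i^*-1$ steps toward the interior, and descends the relevant side path for $j^*$ steps, checking whether the vertex reached has degree $\geq k+1$. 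If the check fails, the algorithm tries again starting from the opposite main-path endpoint to handle the $\mathbb{Z}_2$ symmetry. Each phase uses $O(\sqrt{n})$ queries.

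For the lower bound $\E_{G \leftarrow \Delta}[\Query_A(G)] = \Omega(n)$, the key observation is that in order to output YES on $G \in \mathrm{supp}(\Delta)$, any Las Vegas algorithm $A$ must either (i) query $\deg(u)$ directly, or (ii) record $k$ distinct edges incident to $u$ via neighbor queries. In either case $A$ must ``touch'' (query directly, or receive as the answer to some neighbor query) at least one vertex in the set $\{u\} \cup N(u)$, which has cardinality $k + O(1)$. Each query adds at most one new vertex to the touched set, and since $u$ is uniformly distributed over the $\Theta(n)$ spider vertices (which in turn determines $N(u)$), after $T$ queries the probability that the touched set intersects $\{u\} \cup N(u)$ is $O(T/n)$. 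Forcing this probability to be bounded away from zero yields $T = \Omega(n)$ in expectation.

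The hardest part will be implementing the navigation within the adjacency-list query model, since this model reveals only degrees and labeled neighbors but no directional information. I would handle this by relying consistently on local degree observations: at each main vertex, the three neighbors split into two degree-$3$ main neighbors and one degree-$2$ (or higher) side neighbor, which lets one walk along the main path unambiguously and then descend a chosen side path using $O(1)$ queries per step. The lower bound is more routine once the touched-set formalism is in place, with only minor care needed to account for the $k$ pendants and the $O(1)$ spider-neighbors of $u$.
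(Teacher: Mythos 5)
Your proposal is essentially the paper's own proof: the same caterpillar construction (a spine of $\sqrt{n}$ vertices with a length-$\sqrt{n}$ side path hanging off each spine vertex, and a uniformly random vertex turned into the $S_k$ center by attaching $k$ pendants), the same certificate algorithm that locates the spine by degree observations and then walks the one side path indicated by the certificate in $O(\sqrt{n})$ queries, and the same $\Omega(n)$ lower bound from the star center being uniform over $\Theta(n)$ vertices. The only differences are cosmetic: the paper breaks the reversal symmetry by attaching an extra pendant $v_0$ to one end of the spine rather than trying both ends, and your local degree rule needs a tiny patch near the spine ends (e.g., $v_1$ has degree $2$ and is locally indistinguishable from a side-path start, and $m=\lfloor\sqrt{n}\rfloor$ should be shrunk slightly so the spider fits in $n$ vertices), none of which affects the $O(\sqrt{n})$ or $\Omega(n)$ bounds.
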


\begin{proof}[Proof Of \Cref{lemma:deg1}]
When $H$ is a single edge, it is immediate that the algorithm which repeatedly picks a random vertex and checks whether it has neighbors is instance optimal. When $H$ is a wedge, consider the following algorithm: pick a random vertex $u$. If $u$ has two neighbors, then a wedge was found. If $u$ has one neighbor, $v$, then check if $v$ has a neighbor. It is not hard to verify that the algorithm is instance optimal.
\end{proof}

\begin{proof}[Proof of \Cref{lemma:star}]
For a given graph $H$ we define an input distribution $\Delta$ as follows. Pick $\sqrt{n}$ vertices arbitrarily. These vertices are the center of disjoint $k$-stars. Each remaining vertex is connected to exactly one of the star centers such that each star has a unique degree in $[\frac{\sqrt{n}}{4}, \frac{3 \sqrt{n}}{2}]$. Note that in this construction each vertex is either a star center, or a vertex of degree 1. Next we pick $|H|$ degree-$1$ vertices uniformly at random, such that no two vertices that are chosen belong to the same star. These vertices are then connected to form a clique, and thus in particular contain $H$. Denote this set of vertices by $S$.


Our result follows from the following two claims:
\begin{claim} \label{claim:star:upperbound}
$\RACC(\P_H, \Delta) = O(n^{1/2} \log n)$.
\end{claim}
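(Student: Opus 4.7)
The plan is to design an algorithm $A$ with certificate such that, for any $G$ in the support of $\Delta$ and any permutation $\pi$ of its vertex set, the expected number of queries $A$ makes on $\pi(G)$ is $O(\sqrt{n}\log n)$. From the unlabeled certificate $A$ reads off the $|H|$ distinct degrees $d_{i_1},\dots,d_{i_{|H|}}\in[\sqrt{n}/4,\,3\sqrt{n}/2]$ of the star centers whose leaves participate in the planted $H$-clique (these stars are distinguished in the certificate by possessing a leaf of degree $|H|$ rather than $1$, so extracting the target degrees requires no queries to the actual input).

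The algorithm proceeds in two phases. In Phase~1, $A$ locates each of the $|H|$ target star centers by repeatedly sampling a uniformly random vertex $v$, querying $\deg(v)$, and whenever $\deg(v)=1$ following the unique edge from $v$ to its neighbor $u$ and then querying $\deg(u)$. If $\deg(u)$ matches a not-yet-recorded target $d_{i_j}$, $A$ records $u$ as the center $c_{i_j}$. Uniqueness of the star-center degrees, together with the gap $\sqrt{n}/4\gg |H|$ between star-center degrees and clique-leaf degrees, guarantees correct identification. A uniformly random vertex is a leaf of the star centered at $c_{i_j}$ with probability $d_{i_j}/n\ge 1/(4\sqrt{n})$, so the expected cost of locating any single target is $O(\sqrt{n})$, and collecting all $|H|$ of them — a constant-size coupon-collector problem with per-coupon probability $\Theta(1/\sqrt{n})$ — costs $O(\sqrt{n})$ in expectation and $O(\sqrt{n}\log n)$ with probability at least $1-1/\poly(n)$ via a standard Chernoff bound.

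In Phase~2, $A$ extracts the clique: for each recorded center $c_{i_j}$ it enumerates the neighbors of $c_{i_j}$ and queries each neighbor's degree, halting upon finding the unique neighbor of degree $|H|$, which is the clique-leaf $\ell_j$. This scan costs at most $\deg(c_{i_j})=O(\sqrt{n})$ per center and therefore $O(\sqrt{n})$ in total across the $|H|=O(1)$ centers. Finally, $A$ queries the $\binom{|H|}{2}$ pairs among $\ell_1,\dots,\ell_{|H|}$ to recover an explicit copy of $H$ inside the planted clique. To make $A$ a Las Vegas algorithm (correct on every input, irrespective of whether the certificate is accurate), $A$ interleaves the above with a trivial brute-force search that eventually exhausts $G$; on certificate-matching inputs the first branch dominates and finishes in $O(\sqrt{n}\log n)$ expected queries, giving the desired bound $\RACC(\P_H,G)=O(\sqrt{n}\log n)$ for every $G$ in the support of $\Delta$.

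The only mildly delicate step is the Phase~1 coupon-collector analysis, which has to ensure that the rare failure of collecting all $|H|$ targets within $O(\sqrt{n}\log n)$ samples does not inflate the overall expected query complexity. Since $|H|$ is constant and each per-coupon probability is $\Theta(1/\sqrt{n})$, a tail bound on the collection time combined with the brute-force fallback keeps the expected overhead within the claimed bound, so the argument is essentially routine once the two-phase structure is in place.
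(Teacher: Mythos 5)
Your proof is correct and follows the same basic strategy as the paper's: exploit the uniqueness of the star-center degrees, read off from the unlabeled certificate the degrees of the centers adjacent to the planted clique, locate those centers by sampling random leaves, and then scan their neighborhoods. The one genuine difference is that the paper first collects \emph{all} $\sqrt{n}$ star centers by random sampling, which is a coupon-collector over $\sqrt{n}$ coupons and is exactly where its $\Theta(\sqrt{n}\log n)$ term comes from, and only afterwards filters by the certificate-known degrees; you instead filter during the search and only hunt for the $|H|=O(1)$ target centers, so your Phase~1 costs $O(\sqrt{n})$ in expectation and you in fact establish the stronger bound $\RACC(\P_H,\Delta)=O(n^{1/2})$, which of course implies the claim. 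Your Phase~2 (scanning the neighbors of each good center for the unique neighbor of degree $|H|$, then querying the pairs among them) also makes explicit how the clique vertices are actually identified, a point the paper's proof states rather tersely, and your brute-force fallback correctly handles the Las Vegas requirement when the certificate does not match the input.
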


\begin{claim} \label{claim:star:lowerbound}
For any algorithm $A$ (without a certificate) for $\P_H$, $\E_{G \leftarrow \Delta} \Query_A(G) = \Omega(n)$.
\end{claim}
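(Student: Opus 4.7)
The plan is to first isolate the structural reason the lower bound should hold, and then convert this into a query lower bound via a deferred randomness argument. The starting observation is this: in any $G$ drawn from $\Delta$, removing the $\binom{|H|}{2}$ clique edges among the special vertices leaves a disjoint union of $\sqrt{n}$ stars. Any subgraph of a disjoint union of stars is a forest whose connected components are themselves subgraphs of stars, so every connected subgraph of the star part of $G$ is either an isolated vertex, an edge, or a $k$-star. Since $H$ is assumed to be connected, to have at least two edges (we may assume this by discarding the trivial edge case handled in \Cref{lemma:deg1}), and not to be a $k$-star, $H$ cannot embed into the star part of $G$. Hence every copy of $H$ in $G$ must use at least one of the hidden clique edges.

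Since the algorithm is Las Vegas, on an input in $\P_H$ it must eventually produce a certificate, i.e., the discovered subgraph must actually contain a copy of $H$. Combined with the previous paragraph, this means the algorithm must issue a neighbor query whose source vertex is a special vertex and whose returned neighbor is also a special vertex. Call such a query a \emph{hit}. I would then reduce the whole lemma to showing that, on $G \sim \Delta$, any algorithm needs $\Omega(n)$ queries in expectation before the first hit occurs.

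To bound the probability of a hit, I would use the principle of deferred decisions: sample everything in the construction first except for the identities of the $|H|$ special leaves, and only resolve this random choice as the algorithm's queries force it. Concretely, condition on the history so far having not yet identified any special vertex; then by symmetry across leaves in the same star (and across the $\sqrt{n}-|H|$ stars that still could be "the" stars), the conditional distribution over the still-hidden special leaves is uniform over a set of size $\Theta(n)$ containing $|H|$ special elements. The algorithm's next query (whether a degree query on some $v$, or a neighbor query $(v,i)$) therefore identifies a special vertex with probability at most $O(|H|/n) = O(1/n)$, regardless of how $v$ was adaptively chosen. The same $O(1/n)$ bound applies to the probability that the query is itself a hit, since a hit requires the source vertex $v$ to be special. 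A union bound over $T$ queries then yields probability $O(T/n)$ of having produced a hit by step $T$, so $T=\Omega(n)$ queries are required to succeed with constant probability, which by Markov gives the expected-query bound.

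The main obstacle is handling the adaptivity carefully when writing out the deferred-randomness argument: the algorithm can mix degree queries (which would identify a special vertex once hit) and neighbor queries (which could themselves reveal a special vertex as a returned neighbor, or directly produce a clique edge), and its choice of what to query next depends on everything it has seen. The clean way to handle this is to maintain, along the execution, the set $U_t$ of "still-candidate" leaves — those that neither have been degree-queried with result $\neq |H|$, nor revealed to be a star-leaf by being returned as a neighbor of an already-discovered star center — and to observe that $|U_t|\geq n/2$ for all $t = o(n)$ (since each query shrinks $U_t$ by $O(1)$), while conditionally on the filtration up to time $t$ the unseen special leaves are a uniformly random size-$|H|$ subset of $U_t$. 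Coupled with the observation that hitting a clique edge requires the source vertex of a neighbor query to lie in this hidden special set, this yields the $O(1/n)$ per-query bound uniformly and completes the lower bound.
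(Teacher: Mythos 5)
Your argument is correct and takes essentially the same route as the paper's proof: both reduce the task to locating one of the $O(1)$ hidden special leaves among $\Theta(n)$ degree-one vertices (the paper even hands the algorithm the full star structure for free and drops the distinct-stars conditioning, which holds with probability $1-O(1/\sqrt{n})$), and then apply a per-query $O(1/n)$ hitting probability, i.e.\ a sampling-without-replacement argument, to get the $\Omega(n)$ expected-query bound. Your intermediate claim that a correct Las Vegas execution must literally query a clique edge is slightly stronger than what correctness requires (the view need only force a copy of $H$), but this is harmless because your quantitative bound is on the weaker, genuinely necessary event of identifying some special vertex at all.
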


\begin{proof} [Proof Of \Cref{claim:star:upperbound}]

The algorithm starts by finding the center of each star and its degree as follows:
\begin{algorithmic}

\State $A \leftarrow \emptyset$
\For{$i = 1,...,\Theta(\sqrt{n} \log n)$}
\State Pick an arbitrary vertex $v \in V$, and query the degree of $v$.
\If{ $\text{deg}(v) = 1$}
\State Let $u$ be the neighbor of $v$. $u$ is a center of a star. If $u \notin A$, add $u$ to $A$.
\EndIf
\EndFor
\end{algorithmic}
The probability of any star center to be found in any particular step is $\Theta(1/\sqrt{n})$, so after $\Theta(\sqrt{n} \log n)$ rounds of the above process, $A$ will contain all star centers.

Next, we query the degrees of all star centers. This takes $O(\sqrt{n})$ queries.

Recall that the degree of each star is unique, and thus the algorithm knows from the unlabeled certificate the degrees of the star centers that have a neighbor in $S$; call these star centers ``good''. Finally, we query all neighbors of all good star centers at a cost of $O(\sqrt{n})$ queries. This reveals the clique containing $H$.
\end{proof}

\begin{proof} [Proof of \Cref{claim:star:lowerbound}]
The proof is very similar to that of Lemma \ref{fixedpointlower} and we sketch it briefly.

 Suppose the algorithm is given for free all the stars, but is \emph{not} given $S$. In particular, for every star center $v$, the algorithm knows the set of all neighbors of $v$. The algorithm is required to query at least one node in $S$.
 
 Suppose that we pick $|H|$ vertices uniformly at random among all degree-$1$ vertices. Let $E$ denote the event that no two of them belong to the same star. Then $\Pr(E) = 1 - O(1/\sqrt{n})$. Similarly to Lemma \ref{fixedpointlower}, it suffices to prove the following: if we choose a set $S$ of $|H|$ points from all degree-$1$ vertices without conditioning on $E$, then the expected number of queries required to find one vertex in $S$ is $\Omega(n)$. As in Lemma \ref{fixedpointlower}, this follows from a standard analysis of a sampling with replacement setting where there are $O(1)$ green balls and $O(n)$ red balls, and we are interested in the number of balls that one needs to sample in expectation to get one green ball. 
 %
\end{proof}

The proof of the lemma follows.
\end{proof}


\begin{proof}[Proof of \Cref{lemma:paths}]
We describe the construction for the property of containing a \kstar{k} for $k \ge 4$. Define an input distribution $\Delta$ as follows. Pick $\sqrt{n}$ vertices randomly, and connect them to form a path. Next randomly split the remaining vertices into $\sqrt{n}$ disjoint subsets each of size $\sqrt{n} - 1$ or $\sqrt{n}-2$, $P_1,...,P_{\sqrt{n}}$. Order the vertices in each $P_i$ randomly, and connect them in this order to form a path. Lastly, connect a new vertex $v_i$ to the first vertex in $P_i$ for each $i$, and connect another vertex $v_0$ to $v_1$. See illustration in Figure \ref{fig:paths11}.

\begin{figure}
\begin{center}
\begin{tikzpicture}[scale=0.16]
\tikzstyle{every node}+=[inner sep=0pt]
\draw [black] (12.8,-3.7) circle (3);
\draw (12.8,-3.7) node {$v_0$};
\draw [black] (12.8,-15.1) circle (3);
\draw (12.8,-15.1) node {$v_1$};
\draw [black] (61.1,-15.1) circle (3);
\draw [black] (12.8,-26.5) circle (3);
\draw (12.8,-26.5) node {$v_2$};
\draw [black] (61.1,-26.5) circle (3);
\draw [black] (12.8,-49) circle (3);
\draw (12.8,-49) node {$v_{\sqrt{n}}$};
\draw [black] (61.1,-49) circle (3);
\draw [black] (12.8,-12.1) -- (12.8,-6.7);
\draw [black] (12.8,-18.1) -- (12.8,-23.5);
\draw [black] (15.8,-15.1) -- (58.1,-15.1);
\draw (36.95,-15.6) node [below] {$Path\mbox{ }of\mbox{ }length\mbox{ }\sqrt{n}$};
\draw [black] (12.8,-23.5) -- (12.8,-18.1);
\draw [black] (58.1,-15.1) -- (15.8,-15.1);
\draw [black] (15.8,-26.5) -- (58.1,-26.5);
\draw (36.95,-26) node [above] {$Path\mbox{ }of\mbox{ }length\mbox{ }\sqrt{n}$};
\draw [black] (12.8,-29.5) -- (12.8,-46);
\draw [black] (12.8,-46) -- (12.8,-29.5);
\draw [black] (15.8,-49) -- (58.1,-49);
\draw (36.95,-49.5) node [below] {$Path\mbox{ }of\mbox{ }length\mbox{ }\sqrt{n}$};
\end{tikzpicture}
 \caption{Construction for \Cref{lemma:paths} \label{fig:paths11}}
\end{center}
\end{figure}


The last phase of the construction is to pick a random vertex $u$. We then connect $u$ to $k$ additional vertices. Denote this set of vertices by $S$. 
As we show, the construction establishes the following:
\begin{claim} \label{claim:path:upperbound}
$\RACC(\P_{S_k}, \Delta) = O(n^{1/2})$.
\end{claim}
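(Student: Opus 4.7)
The plan is to exploit the unique distinguishable spine $v_0, v_1, \ldots, v_{\sqrt{n}}$ of the graph drawn from $\Delta$: the certificate reveals the position of $u$ relative to this spine, so tracing the spine and then walking to $u$'s location will cost $O(\sqrt{n})$ queries in total. A key structural fact I would use is that the automorphism group of the graph is trivial: the endpoints $v_0$ (degree $1$) and $v_{\sqrt{n}}$ (degree $2$) are distinguishable by degree, canonically orienting the spine and identifying every $v_j$; each path $P_i$ is oriented by its attachment to $v_i$; and the $k$-star attached to $u$ breaks the remaining symmetry of permuting equal-length $P_i$'s. Consequently, the certificate uniquely pins down the spine index $j^{\ast}$ and the distance $d^{\ast} \leq \sqrt{n}$ from $v_{j^{\ast}}$ along $P_{j^{\ast}}$ at which $u$ sits (or identifies $u = v_{j^{\ast}}$ directly if $u$ lies on the spine).

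The first step of the algorithm is to reach the spine. Sample a uniformly random vertex $w$ and query its degree; with probability at least $1 - O(1/\sqrt{n})$, $w$ is an interior degree-$2$ vertex of some $P_i$. Starting from $w$, walk in both directions by following the unique previously unseen neighbor at each step, stopping on each side when a vertex of degree $\neq 2$ is reached. Since $|P_i| \leq \sqrt{n}$, this walk costs $O(\sqrt{n})$ queries and terminates at a degree-$1$ leaf on one side and a degree-$3$ spine vertex $v_i$ on the other. The boundary cases only help: if $w$ has degree $1$ we walk in the single available direction with the same bound; if $w$ has degree $3$ we are already on the spine; and if $w$ has degree $\geq 4$ then $w = u$ and we are done immediately.

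Next, trace the entire spine starting from $v_i$. At each spine vertex $v_j$ of degree $3$, two neighbors are spine vertices (degree $3$) and one is the first vertex of $P_j$ (degree $2$), so a constant number of local degree queries suffices to move one step along the spine. Walking in both directions until reaching $v_0$ (degree $1$) on one side and $v_{\sqrt{n}}$ (degree $2$) on the other costs $O(\sqrt{n})$ queries and produces the canonical labeling $v_0, v_1, \ldots, v_{\sqrt{n}}$ in the real graph. If during this walk a vertex of degree $\geq 4$ is encountered, it must be $u$ lying on the spine and the procedure terminates.

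Finally, walk from $v_{j^{\ast}}$ through its unique degree-$2$ neighbor into $P_{j^{\ast}}$ for $d^{\ast}$ steps, reaching $u$, and then query $k$ of its degree-$1$ neighbors to exhibit the copy of $S_k$. Since $d^{\ast} \leq \sqrt{n}$, this phase uses $O(\sqrt{n})$ queries, for a total of $O(\sqrt{n})$. The main technical point that requires care is the triviality of the automorphism group: without it, the certificate would not uniquely identify the spine index $j^{\ast}$, and the algorithm would in principle have to search through many paths in the real graph rather than walking directly to the correct one.
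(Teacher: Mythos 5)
Your proposal is correct and follows essentially the same route as the paper: walk from a random vertex to reach a degree-$3$ spine vertex, trace and canonically orient the spine $v_0,\ldots,v_{\sqrt{n}}$ using the degree-distinguishable endpoints, use the certificate to identify which path holds the star center, and walk along that path, for $O(\sqrt{n})$ queries in total. One small remark: your claim that the automorphism group is trivial is not literally true (the $k$ leaves attached to $u$ can be permuted), but all you actually need -- and what both arguments use -- is that the spine orientation, and hence the index (and position) of the path containing the star center, is isomorphism-invariant, so this does not affect the proof.
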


\begin{claim} \label{claim:path:lowerbound}
For any algorithm $A$ (without a certificate) for $\P_{S_k}$, $\E_{G \leftarrow \Delta} \Query_A(G) = \Omega(n)$.
\end{claim}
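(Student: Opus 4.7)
The plan is to show that any correct Las Vegas algorithm, without the certificate, must query either $u$ or one of the $k$ star-neighbors in $S$, and that reaching this $(k+1)$-vertex set takes $\Omega(n)$ queries in expectation. The argument closely mirrors the lower bound style used in the proofs of \Cref{fixedpointlower} and \Cref{claim:star:lowerbound}.

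First I would establish the indistinguishability reduction. Set $R := \{u\} \cup S$. For $k \geq 4$, every vertex outside $R$ has degree at most $3$ in $G$: backbone vertices have degree at most $3$, and all vertices inside the paths $P_i$ have degree at most $2$. In particular, $u$ is the unique vertex of degree $\geq k$, and the star at $u$ is the only copy of $S_k$ in $G$. Let $G'$ be the modification of $G$ obtained by deleting the $k$ star edges at $u$; then $G'$ contains no $S_k$. Any query to a vertex $w \notin R$ returns identical answers on $G$ and $G'$, since the two graphs agree outside the edges incident to $R$. Hence if $A$ makes only queries outside $R$, its transcript on $G$ is identical to its transcript on $G'$, so $A$ must err on one of them. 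Therefore any correct algorithm must query some vertex of $R$ on input $G$.

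The core step is then to bound the expected hitting time of $R$ by a principle of deferred decisions. We generate $G \sim \Delta$ online: fix the backbone $v_0,\dots,v_{\sqrt{n}}$, all paths $P_1,\dots,P_{\sqrt{n}}$, and the reserve pool from which $S$ is drawn, but postpone committing to the identity of $u$ and to which reserve vertices belong to $S$. Maintain a candidate set $U_t$ of path vertices that could still be $u$ after $t$ queries; initially $U_0 = V_{\text{path}}$, of size $\Theta(n)$. When $A$ queries $w$: if $w$ has a predetermined role (backbone vertex, already-seen path position, or reserve vertex not yet linked to $u$), respond with that role's predetermined degree and neighbors; otherwise $w \in U_{t-1}$, and we flip a coin of bias $1/|U_{t-1}|$ to decide whether $w = u$, committing the star and halting if so, and otherwise removing $w$ from $U_t$. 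This coupling produces exactly the distribution $\Delta$ up to the $O(1/\sqrt{n})$ event that $u$ would have fallen outside $V_{\text{path}}$. The probability that the $t$-th query discovers $u$ is at most $1/|U_{t-1}| = O(1/n)$ for all $t \leq |V_{\text{path}}|/2$, so the probability that $u$ is found within $cn$ queries is $O(c)$. A symmetric argument -- each star-neighbor in $S$ has its sole incident edge going to $u$ and is thus indistinguishable from a reserve vertex until either it or $u$ is directly probed -- handles the case of first hitting $S$ rather than $u$. Combining yields $\E_{G \leftarrow \Delta}\Query_A(G) = \Omega(n)$.

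The main technical point is the joint consistency of the lazy adversary's responses with an actual sample from $\Delta$ throughout execution; this reduces to the fact that $u$ is uniform over $V_{\text{path}}$ conditional on any transcript that has not yet touched it, and that membership in $S$ is uniform over the reserve pool, both of which hold by construction. Combined with Claim~\ref{claim:path:upperbound}, this completes the proof of \Cref{lemma:paths}.
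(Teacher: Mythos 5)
Your proof is correct and follows essentially the same route as the paper's (which simply gives the algorithm the paths for free and notes that the star center $u$ is uniformly random, so hitting $\{u\}\cup S$ takes $\Omega(n)$ expected queries). Your indistinguishability reduction to hitting $R=\{u\}\cup S$ and the deferred-decisions hitting-time bound are just a more detailed write-up of the argument the paper leaves implicit, so no further changes are needed.
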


\begin{proof}[Proof of \Cref{claim:path:upperbound}]
The algorithm with a certificate knows the index $k$ for which $P_k$ contains the star center. It first finds $v_0, v_1, \ldots, v_{\sqrt{n}}$ in the following way. Pick a random starting point $v$, and walk until encountering either a degree-$1$ vertex or a degree $3$ (or more) vertex. In the former case, we can walk on the other direction from $v$ until reaching a degree $3$ or more vertex. It is easy to check with $O(1)$ queries if the encountered vertex $u$ is part of a $k$-star; assume henceforth that this is not the case. Thus, $u = v_i$ for some $1 \leq i \leq \sqrt{n}-1$ (note that $v_0$ and $v_{\sqrt{n}}$ have degree $1$ and $2$ respectively, not $3$). From here we can locate all vertices $v_i$ for $2 \leq i \leq \sqrt{n}-1$ (in this order or the reverse order) by checking, for each of its neighbors, whether it is degree $3$ (or degree $1$, which marks that we found $v_0$).

Given $v_0, \ldots, v_{\sqrt{n}}$ and the value $k$ for which the star center is in $P_k$, the algorithm simply proceeds by querying all vertices in $P_k$ (via a walk from $v_k$), including the star center.
\end{proof}

\begin{proof}[Proof of \Cref{claim:path:lowerbound}]
 Suppose the algorithm is given, for free, all the paths, but is not given any information about the star center. 
 %
%
The proof follows immediately since the star center was chosen uniformly at random among all vertices in the graph.
\end{proof}
The proof of the lemma follows from the above two claims.
\end{proof}

\remove{
\section{The case of Claws}
For now we consider the case of permutations. 
\begin{theorem}

\end{theorem}

\newcommand{\lshort}{l_{\text{short}}}
\newcommand{\lmid}{l_{\text{mid}}}
\newcommand{\llong}{l_{\text{long}}}
\newcommand{\tmid}{t_{\text{mid}}}
\newcommand{\wlow}{w_{\text{low}}}
\newcommand{\whigh}{w_{\text{high}}}

\begin{proof}
Consider the following distribution:
 
The distribution has $\theta(n/\lshort)$ loops of length $\lshort$. Paths of length $\lmid$ with a collision in the end (total of $\tmid$ such paths). And a loop of length $\llong$.

$lshort$ is a uniformly distributed random variable between $n^{.01}$ to $n^{.49}$.
$\lmid = \lshort log^{1/3}$
$\tmid = n/log^{2/3}$
$\llong = n/log^{1/3} $
 
 \begin {claim}
An algorithm with a certificate can find a collision in  $2 \lshort log^{2/3}$ queries in expectations

\beq
\frac{n}{\tmid}\left(\lshort \text{constant} + \lmid \text{prob of long}  \right)
\eeq

\end{claim}

\begin{proof}
 todo. Follow each path for $\lmid$ steps. 
\end{proof}
 
\begin{claim}
 Every algorithm without a certificate requires $\Omega(\lshort \log n)$ queries in expectations to find a collision.
\end{claim} 

Note that if an algorithm were to find a collision in $o(\lshort \log n)$ queries in expectation then it can also discover $\lshort$ in $o(\lshort \log n)$. Thus we will argue that every algorithm without a certificate requires $\Omega(\lshort \log n)$ queries in expectations to discover $\lshort$ (with probability 1).

The two lemmas below imply the above claim:
 \begin{lemma} \label{Lemma:MainCollisionLemma}
 Finding a complete loop requires $\Omega(\lshort \log n)$ queries in expectation.
 \end{lemma}
 
 \begin{lemma} \label{Lemma:NoPathEnd}
 Conditional on not finding a complete loop, finding the end of a path (i.e a collision) requires $\omega(\lshort \log^{2/3} n)$ queries in expectation.
 \end{lemma}

Due to the above lemmas any algorithm that makes $o(\lshort \log n)$ queries will only see different paths of different length, without finding an end of any path or any cycle. Thus, $\lshort$ and $\lmid$ are indistinguishable (with probability 1).
 
 The lemma below allows us to analyze non adaptive algorithms:
\begin{lemma}
Finding two paths that  collide requires $\Omega(\lshort \log n)$ queries in expectation
\end{lemma}
\omri{Need a different statement, our actual argument is (I think) that conditioned on never seeing a collision the expectation is ZZZ, but seeing collisions is rare. Something like
$$E[X] = E[X | \tilde Y] \cdot \Pr( \tilde Y) + E[X | Y] \cdot \Pr(Y)$$
where $X$ is the amount of time to see end of path and $Y$ is the event of seeing collision during the process. Need to formalize this further.
}

\begin{proof}
TODO. Intuitively, finding a collision requires $\Omega(\sqrt{n})$ queries due to birthday. 
\end{proof}

Due to the above lemma, we only need to analyze non adaptive algorithms.

 \begin{proof}[Proof of \Cref{Lemma:NoPathEnd}]
 
 Consider two algorithms, one that always walks on each path $o(\lmid)$ and one that walks on each path $\omega(\lmid)$.
 
Since we assume two paths never collide we can analyze these two algorithms as two independent algorithms. We will then argue that each algorithm requires $\omega(\lshort \log^{2/3} n)$ queries in expectation.

The number of queries an algorithm makes is:
\begin{align*}
&\text{num of paths walked} \\&[ \frac{1}{\log^{1/3}} \text{amount walked when landed on $\llong$}\\& + \text{average amount walked when landed on $\lshort$}  ]
\end{align*}

We first analyze the algorithm that always walks $\omega(\lmid)$:\\
$\text{Num of paths walked} \in \theta(\log^{2/3})$. \tomer{unless he walks far more than $\lshort \log$, and then exhausts the entire long paths}.
\\
$\text{Amount walked when landed on $\llong$} \in \omega(\lmid)$.
\\
Thus the number of queries required is $\omega(\lshort\log^{2/3})$.

When an algorithm always walks $o(\lmid)$: \beq \left(\text{num of paths walked}\right) \left( \text{average amount walked when landed on $\lshort$} \right) \in \omega(\lshort\log^{2/3}) \eeq \tomer{Prove this}

 Thus an optimal algorithm must walk, in expectation, $\theta(\lshort\log^{2/3})$ paths each of length $\theta(\lshort \log^{2/3} n)$. 
 
 Since $\lmid$ is a random variable, and no loop has been completed...

 \end{proof}

\end{proof}

\begin{proof}[\Cref{Lemma:MainCollisionLemma}]
TODO:

\end{proof}
} 

\section{Finding Claws Is almost Instance Optimal when the input has low Complexity} \label{sec:AlmostIO}

In this section we prove Theorem \ref{thm:inst_opt_graphs_intro}. That is, we prove Conjecture \ref{conjgraph}, that finding a claw in a graph is $O(\log(n))$ close to being instance optimal in the regime where finding a claw can be done in $O(\sqrt{n/\log n})$ queries by an algorithm with an unlabeled certificate. 
For what follows, let $\P_{S_3} = \P_{S_3}(n)$ denote the property of containing a claw (a $3$-star) in $n$-vertex graphs. 

\begin{theorem}[Near-instance optimality in the low-complexity regime] 
\label{thm:near_instance_optimality}
There exist universal constants $C, \alpha > 0$ and a Las Vegas algorithm $A_{\text{all-scales}}$ satisfying the following.
For every graph $G$ on $n$ vertices in which the unlabeled certificate complexity satisfies
$$\RACC(\P_{S_3}, G) \leq \alpha \sqrt{\frac{n}{\log n}},$$
the algorithm $A_{\text{all-scales}}$ detects a claw with expected query complexity at most $C \log n \cdot \RACC(\P_{S_3}, G)$.
\end{theorem}

The proof consists of three main parts, each contained in a separate subsection. \Cref{cleaning} modifies the input graph to be more symmetrical to make all further analysis easier. \Cref{sec:nomerge} shows that it is not possible to find certain type of intersections (merges) between walks in time less than $\sqrt{n/\log(n)}$ unless one finds a claw. Lastly, in \Cref{sec:final} we prove that conditioned on no two paths merging, a ``memoryless'' strategy that follows paths at logarithmically many different paths in parallel, where each path $i$ is followed for length $2^i$,
is $O(\log n)$-instance optimal. 

\subsection{Cleaning up the graph} \label{cleaning}
We start by reducing the problem of finding a claw in an $n$-vertex graph $G$ with $n$ vertices to a closely related problem on a graph $G'$ on the same vertex set, where $G'$ additionally satisfies the following symmetry property, which will be useful for our analysis.

\begin{definition}[path-symmetric graph]
A graph $G$ is \emph{path-symmetric} if for every vertex $v$ of degree $1$ in $G$, the other endpoint $u$ of the simple path $P$ containing $v$ in $G$ is also of degree $1$.
\end{definition}

\paragraph{The construction.} Given an undirected graph $G = (V,E)$ with $|V| = n$, $V$ into disjoint sets $V_0 \cup V_1 \cup V_2 \cup V_{3+}$, where for $i=0,1,2$, $V_i$ contains all degree-$i$ vertices in the graph, and $V_{3+}$ includes all other vertices. 

To each $v \in V_1 \cup V_2$ we can uniquely assign a path $P = P(v)$ that contains $v$.
For any $v \in V_1$ (i.e., endpoint of the path $P(v)$), we say that $v$ is a ``yes'' endpoint if the other endpoint $u$ of $P(v)$ is in $V_{3+}$ (i.e., $u$ is a claw center); otherwise, $v$ is a 
``no'' endpoint (and in this case, $u \in V_1$ as well). Let $V_1^{\text{yes}}$ and $V_1^{\text{no}}$ denote the set of ``yes'' and ``no'' endpoints, respectively.

The transformation $G \mapsto G'$ is defined as follows. 
We add two new vertices $w,w'$ and connect them to all vertices in $V_1^{\text{yes}}$ within $G$. If $|V_1^{\text{yes}}| = 1$, we also connect $w$ to $w'$.
Thus, all vertices from $V_1^{\text{yes}}$ are claw centers in $G'$.
All other vertices in $G$ remain unchanged. The degree of $w,w'$ in $G'$ depends on $n_{\text{yes}} = |V_1^{\text{yes}}|$; it is zero if $n_{\text{yes}} = 0$, two if $n_{\text{yes}} \in \{1,2\}$, and three or more otherwise.
Note that this transformation is deterministic. Thus, an algorithm holding an unlabeled certificate of $G$ can easily construct from it an unlabeled certificate of $G'$.

It immediately follows from the construction that $G'$ is path-symmetric. Our main lemma in this subsection states that finding claws in $G'$ has roughly the same query complexity as in $G$. 
\begin{lemma}
\label{lem:cleaning_up_graph_stars}
$
\RACC(\P_{S_3}, G) = \Theta(\RACC(\P_{S_3}, G')),
$
that is, the expected query complexity of finding a $S_3$ in $G$ with an unlabeled instance is equal, up to a multiplicative constant, to the same query complexity for $G'$.

The same equivalence is true also without access to an unlabeled certificate.
\end{lemma}

The importance of Lemma~\ref{lem:cleaning_up_graph_stars} is that it allows us to eliminate from a graph all vertices of type $V_1^{\text{yes}}$. These vertices are problematic for our analysis; our proof in subsequent subsections relies heavily on symmetry arguments that break down when visiting a vertex of this type. The lemma implies that we can consider graphs in which each path $P$ is of one of two types: either both endpoints of $P$ are claw centers, or both are degree-$1$ vertices.

We note that Lemma~\ref{lem:cleaning_up_graph_stars} holds in any regime, not just the ``low-complexity'' one. In particular, the lemma may be useful toward proving Conjecture~\ref{conjgraph} in the general case.

\begin{proof}[Proof of Lemma~\ref{lem:cleaning_up_graph_stars}]
The proof relies on coupling arguments. 
Given an algorithm $A$ for $\P_{S_3}$ in $G$, consider the following algorithm $A'$ for the same task in $G'$. We simulate the algorithm $A$ with the same set of random bits it would have used on $G$, with the following slight modification. Each time that a ``new'' vertex\footnote{In what follows, a ``new'' vertex is one that did not appear in the output of past queries.} is queried by the simulated copy of $A$, the modified algorithm $A'$ checks if $A'$ is a claw center, or has a neighbor which is a claw center. If not, $A'$ continues the simulation of $A$ (and otherwise, a claw is found).

Conditioned on $A'$ never querying $w$ and $w'$ as ``new vertices'', its query complexity is at most a constant times the query complexity of $A$ on the corresponding instance (removing $w$ and $w'$ and all edges connected to them). If $A'$ queries $w$ or $w'$ at any point, then it wins within $O(1)$ queries. Thus, $\Query_A'(G') = O(\Query_A(G)$.

The other direction is more subtle, and relies in part on symmetry considerations. 
Let $A'$ be an algorithm for $\P_{S_3}$ in graphs on $n+2$ vertices, where we may assume that each time that $A'$ queries a new vertex $v$, it checks whether $v$ is a claw center (and declares a win in this case), and retains the list of neighbors of $v$ if not. This assumption only increases the query complexity by a constant multiplicative factor. We may also assume that each time $A'$ picks a new vertex in the graph, this vertex is picked uniformly at random among all previously unseen vertices.

We consider an algorithm $A$ that operates on an $n$-vertex graph $G$ by simulating $A'$ with the following adaptation. Each time that $A'$ should query a new (previously unseen) vertex, $A$ flips a coin with probability $2/t$, where $t$ is the number of unseen vertices at the current point. If the coin outputs ``heads'', the algorithm discards its entire state and restarts its simulation of $A'$ from scratch (note that previous queries made by $A$ are still counted). 

Let $T = \text{Queries}_{A'}(G')$ denote the expected number of queries of $A'$ on the graph $G'$ created from the input $G$ via the transformation. By Markov's inequality, with probability at least $1/2$ the number of queries until $A'$ finds a claw in $G'$ is at most $2T$. We may assume that, say, $T \leq n/40$, otherwise the statement of the lemma is trivial. In what follows, we claim that $A$ finds a claw in $G$ within $2T$ queries with probability at least $1/5$. The strategy of iteratively running independent copies of $A$ for $2T$ steps each, until a claw is found, has expected query complexity $O(T)$. 
This follows from standard properties of the geometric distribution. 

It remains to show that a single run of $A$ finds a claw with probability at least $1/5$. First, the probability that $A'$ queries $w$ or $w'$ before finding a claw (within the first $2T$ steps) is bounded by $\frac{2}{n} \cdot 2T \leq \frac{1}{10}$. Thus, $A'$ wins without querying any of $w$ or $w'$ with probability $p \geq 2/5$ in $2T$ queries. Condition henceforth on $w$ and $w'$ not being queried by $A'$ in a single simulation. Now, fix any $v \in V_1^{\text{yes}}$ in $G$, that was transformed into a claw center in $G'$. Let $v'$ be the other endpoint of the path $P(v)$ (in $G$), and recall that $v' \in V_{3+}$. By symmetry, the probability that $v$ is queried by the simulated copy of $A'$ before $v'$ is at most $1/2$. Indeed, there are several cases:
\begin{itemize}
\item Case 1: one of $v$ or $v'$ is queried as a ``new vertex'' before being reached by any existing walk of the algorithm. Conditioning on this event, by symmetry, the probability that $v$ is queried before $v'$ is exactly $1/2$.
\item Case 2: One of $v$ or $v'$ is queried as a neighbor of node from within $P$. Again by symmetry, conditioning on this event, the probability of each vertex to be first queried is exactly $1/2$.
\item Case 3: One of $v$ or $v'$ is queried as a neighbor of a vertex outside $P$. Since $v'$ has neighbors outside $P$ and $v$ does not, conditioning on this event, $v'$ is the first to be queried with probability $1$.
\end{itemize}



Summing over all vertices $v \in V_1^{\text{yes}}$, we conclude that the probability that $A'$ found a claw by querying one of the vertices $v \in V_1^{\text{yes}}$ is at most $p/2$. In this case, $A$ does not find a claw. The probability of the complementary event, in which $A$ does find a claw, is at least $p/2 \geq 1/5$. The proof follows.
\end{proof}

\subsection{Hardness of merging walks} \label{sec:nomerge}

In this subsection we prove a hardness result concerning \emph{merging walks} in graphs. We shall care about a property of \emph{walks}, as \emph{oriented subgraphs} of the input (undirected) graph. Although the main focus of this subsection is on the graph case, the definition below of a walk graph is suitable in both graphs and functions.

\begin{definition}[Algorithm's walk graph]
\label{def:walk-graph}
The walk graph of a query-based algorithm $A$ in $n$-vertex graphs (in the query model) and functions $f \colon [n] \to [n]$ (treated as out directed graphs with out-degree $1$) is initialized as an empty graph on $[n]$. 

Each time that an algorithm $A$ queries a vertex $v$ to obtain a neighbor $u$ of it (or an out-neighbor, in the functions case), we add the oriented edge $v \to u$ to $G_A$. 
\end{definition}
A walk is any connected component (in the undirected sense) of the walk graph, preserving the orientations of the walk graph. That is, it is an object of the following structure:
$$
v_{-k} \leftarrow v_{-k+1} \leftarrow \ldots \leftarrow v_0 \rightarrow v_1 \rightarrow \ldots \rightarrow v_l,
$$
where $k,l \geq 0$ (and can equal zero), and $v_0$ is the first vertex to be queried among $\{v_i : -k \leq i \leq l\}$.

Note that the walk graph is kept oriented (each edge only has a single orientation); if $v \to u$ has already been added to $G_A$, then $u \to v$ cannot be added to it in the future.

\begin{definition}[Merging walks]
\label{def:merging_walks}
Consider the walk graph $G_A$ for an algorithm $A$ in a graph $G$. We say that two walks $$v_{-k} \leftarrow \ldots \leftarrow v_{-1} \leftarrow v_0 \to v_1 \to \ldots \to v_l \ \ \text{ and }\ \  u_{-k'} \leftarrow \ldots \leftarrow u_{-1} \leftarrow u_0 \to u_1 \to \ldots \to u_{l'}$$ 
in $G_A$ with distinct starting points $v_0 \neq u_0$ \emph{merge} if their endpoints intersect non-trivially: $$\{v_{-k}, v_l\} \cap \{u_{-k'}, u_{l'}\} \neq \emptyset.$$ 
\end{definition}

Our main result is a universal (not graph-specific) lower bound on the query complexity of finding merging paths, as long as a the walk-graph of the algorithm does not contain a claw center.
\begin{lemma}[Hardness of merging for graphs]\label{lem:hardness_merge_graphs}
There exists $\alpha > 0$ such that the following holds. For
every $n$-vertex graph $G$ and algorithm $A$ making up to $q = \alpha \sqrt{n / \log n}$ queries to $G$, the probability that walks merge in $G_A$ at least once during the run of $A$ and that further $G_A$ does not contain a claw center at the time the merging takes place, is at most $1/5$. 
\end{lemma}
The constant $1/5$ here is arbitrary (and can be replaced with any other small positive constant). 
The above lemma implies that the ``strategy'' of first finding merging walks and then using them to find a claw would require at least $\Omega(\sqrt{n /\log{n}})$ queries. In the next subsection, we show the complementary result: that in a graph where no merges have been observed, a simple strategy that does not require an unlabeled certificate is almost instance-optimal.



\paragraph{Toward the proof of Lemma~\ref{lem:hardness_merge_graphs}.} We start by introducing objects and ideas to be used in the proof of the lemma.
We may assume that the input graph $G$ does not contain claw centers, i.e., has maximum degree at most $2$. (Indeed, querying any edge that contains a claw center makes the algorithm fail, and it is easy to show that removing all these edges cannot decrease the probability to win.)
Therefore, the graph $G$ is a disjoint union of simple paths, simple cycles, and isolated vertices. For what follows, it will be convenient to measure the length of a path/cycle as the number of \emph{vertices} (not edges) it contains, that is, an isolated vertex is a path of length $1$.

We bucket the paths in $G$ into sets of similarly-sized paths and cycles. Let $t = \log_{1.1} n = \Theta(\log n)$, 
and for every $0 \leq i < t$ let $\mathcal{P}_i$ denote the collection of all paths and cycles in $G$ containing at least $1.1^i$ and less than $1.1^{i+1}$ vertices.
Further let $N_i$ denote the sum of lengths of paths and cycles in $\mathcal{P}_i$. Note that these collections are disjoint and thus $\sum_{i=0}^{\log n} N_i = n.$ Define
$$
S = \left\{ 0 \leq i \leq t : N_i \geq \frac{\sqrt{n \log n}}{t}  \right\},
$$
where we note that $\sum_{i \notin S} N_i \leq \sqrt{n \log n}$. In particular, if an algorithm makes $q \leq \alpha \sqrt{n / \log n}$ queries then with probability $1-2\alpha$ none of the vertices in $\bigcup_{i \notin S} \mathcal{P}_i$ will be queried. That is, we may assume that the walk-graph of the algorithm does not intersect $\bigcup_{i \notin S} \P_i$, and contains only walks strictly in paths outside this union.

We prove the statement of Lemma~\ref{lem:hardness_merge_graphs} under an even stronger algorithmic model. Each time that a ``new'' vertex $v$ is being queried by the algorithm $A$, we immediately notify the algorithm about which bucket $\P_i$ the vertex $v$ belongs to. Note that this implies, in particular, that at all times during the algorithm's run, each walk $W$ in the walk-graph of $A$ is contained in a bucket known to the algorithm. We show that even under this stronger assumption, it is hard to merge walks.

Our first main claim is a concentration of measure argument regarding the number of walks within each bucket. It is used in the proof of Lemma~\ref{lem:hardness_merge_graphs} to show that walks in the same bucket have, very roughly speaking, a high ``degree of freedom'', resulting in good bounds on the merging probability.

\begin{claim}[Properties of walks]
\label{claim:walk_properties}
There exists a constant $\alpha > 0$ satisfying the following. Let $A$ be a query-based algorithm in $G$ making $q \leq \alpha \sqrt{n / \log{n}}$ queries. The following statements hold with probability at least $9/10$.
\begin{enumerate}
\item For all $0 \leq i < t$, the number of walks in $G_A$ contained in $\bigcup_{P \in \P_i} P$ is bounded by $\frac{10 N_i q t}{n}$.
\item For all $i \in S$ for which $1.1^i \leq \sqrt{\frac{n}{\log n}}$, the number of such walks is smaller than $\frac{1}{2} |\P_i|$.\footnote{Note that the quantity $|\P_i|$ counts the \emph{number} of paths in $\P_i$, not their total length.} 
\end{enumerate}
\end{claim}
\begin{proof}
For the first bullet, note that the expected number of ``new queries'' falling in $\bigcup_{P \in \P_i}$ is $N_i q / n$. By Markov inequality, the probability that (for a specific $i$) the number of walks is bigger than the expectation by $10t$ is $1/10t$. The proof follows by a union bound since we have $t$ buckets.


We now turn to the second bullet. The condition $1.1^i \leq \sqrt{n / \log n}$ implies
$$
\frac{10 N_i q t}{n} \leq \frac{20 \cdot 1.1^i |\P_i| q t}{n} < \frac{|\P_i|}{2},
$$
where the last inequality holds provided that the constant $\alpha$ is small enough. The proof follows.
\end{proof}

\begin{proof}[Proof of Lemma~\ref{lem:hardness_merge_graphs}]
As discussed above, we can assume that $G$ is a collection of disjoint simple paths (without any claw center vertices). 
We can simulate any (up to) $q$-query algorithm on such a graph $G$ with the following algorithm $A$ making up to $2q$ queries. First $A$ makes $q$ ``new vertex'' queries, by iteratively picking previously unseen nodes in $G$ and querying them. Let $Q$ be the set of all such queried nodes.
In the second step, $A$ simulates the original algorithm: each time that the latter decides to query a new vertex, $A$ picks an unused vertex from $Q$ and provides it to the simulated copy. If the algorithm decides to extend an existing walk, $A$ does so as well. It is easy to see that the probability of $A$ to encounter merging walks in $G$ is at least as that of the original algorithm. The probability of both algorithms to encounter a claw is zero, since $G$ is claw-free. Thus, it suffices to prove the lemma for $A$ as above. 

By a standard birthday paradox argument, with high probability no two walks will merge in the initial querying step (which results in $q$ oriented edges), so long as $q = o(\sqrt{n})$. We condition on this event for the rest of the proof. We also condition on the statement of Claim~\ref{claim:walk_properties}. 

For any $0 \leq i < t$, fix a linear ordering of the vertices in $\P_i$, so that each path in $\P_i$ consists of consecutive vertices in this order. Now, for each walk $W$ in $G_A \cap \P_i$, one can uniquely define the \emph{location} of $W$ as the element of $W$ that is smallest in the linear ordering.

We consider a walk $W \in \P_i$ to be \emph{short} if $|W| \leq \frac{1}{3} \cdot 1.1^i$, and \emph{long} otherwise.
The proof proceeds by bounding two types of probabilities: (i) the probability of a short walk $W$ to merge with another short walk when an edge is added to $W$, and (ii) the probability of a long walk $W$ to reside in the same path as another walk (short or long). 

Let us start by analyzing the first type of event.
\begin{claim}
\label{claim:short-short-merge}
Let $W$ be a short walk in $\P_i$ and suppose that we extend $W$ by an edge $e$. The probability that $W \cup \{e\}$ will merge with one of the other short walks, conditioning on the statement of Claim \ref{claim:walk_properties}, is bounded by $\sqrt{\frac{\log n}{n}}$.
\end{claim}
\begin{proof}
We claim that the probability that $W$ will merge with any specific short walk $W'$ in a single step is bounded by $4/N_i$.

Consider first the case that both $W$ and $W'$ do not contain an endpoint of a path from $P$.
Fix the location of $W$ and all other walks in $\mathcal{P}_i$, except for $W'$. We claim that $W$ and $W'$ merge only if $W'$ is located in one specific location (either one place after the largest element of $W$, or one place before the smallest of $W$, depending on the direction of the walk). On the other hand, conditioning on the event that no merging has happened so far, $W'$ is distributed uniformly among all locations for which $W'$ does not intersect any of the other walks in $\P_i$. We call such locations \emph{free}.

If $i$ satisfies $1.1^i \leq \sqrt{\frac{n}{\log n}}$, then (by the second part of Claim~\ref{claim:walk_properties}) a subset $\P_i^{\text{free}}$ of at least $\frac{1}{2} |\P_i|$ of the paths in $\P_i$ do not contain any walk (except for possibly $W'$). The number of free locations in each $P \in \P_i^{\text{free}}$ is at least $1.1^i / 2$. Thus, at least $N_i / 4$ of the locations in $\bigcup_{P \in \P_i} P$ are free. 
If $1.1^i > \sqrt{n / \log{n}}$, then since $q < \frac{1}{10}\sqrt{\frac{n}{\log n}}$ (provided $\alpha < 1/10)$, at most $N_i / 2$ of the locations in $\P_i$ will be occupied, and more than $N_i / 4$ locations will be free. 
In both cases, the number of free locations for $W'$ is at least $N_i / 4$. 

So far we dealt with the case that $W$ and $W'$ do not contain an endpoint of a path. In the case that both $W$ and $W'$ contain an endpoint, their merging probability is zero. It remains to consider the case where one of $W$ and $W'$ contains an endpoint $v$ in $\P_i$ and the other does not. The case where  $v \in W$ is identical to the analysis above (when $W$ and $W'$ do not contain an endpoint). The case where $v \in W'$ is symmetric, where instead of fixing the location of $W$ and randomizing over the location of $W'$, we fix $W'$ and randomize $W$. The bounds we obtain are identical.

Thus, the probability for merging of $W$ with $W'$ in this case is bounded by $4/N_i$. Taking a union bound over all possible walks $W'$,
the total probability for merging (with at least one short walk $W' \neq W$) is bounded by
$$
\frac{4}{N_i} \cdot \frac{10 N_i q t}{n} = \frac{40 q t} {n} \leq \sqrt{\frac{\log n}{n}},
$$
by the first part of Claim~\ref{claim:walk_properties} and provided that $\alpha$ is small enough.
\end{proof}

The second claim that we need considers the behavior of walks in the first step where they cross the threshold for becoming a long walk.

\begin{claim}
\label{claim:long-short-merge}
Let $W \in \P_i$ be a walk of length at least $\frac{1}{3} \cdot 1.1^i$ residing in a path $P$. The probability that $P$ contains at least one additional short walk is bounded by $1.1^i \sqrt{\frac{\log n}{n}}$.
\end{claim}
\begin{proof}
Let $W' \neq W$ be a short walk in $\P_i$. Fix the location of all walks in $G_A$ except for that of $W$. We start with the case that $W'$ does not contain an endpoint.
As in the previous claim, conditioned on no merging happening so far, the location of $W'$ is distributed uniformly at random within a set of size at least $N_i / 4$. Among these, less than $1.1^i$ of the possible locations are in $P$ (note that $|P| < 1.1^{i+1}$, and that at least $\frac{1}{3} \cdot 1.1^i$ locations in it are occupied by $W$). By a union bound, the probability that $P$ contains at least one such $W'$ is bounded by
$$
\frac{1.1^i}{N_i/4} \cdot \frac{10 N_i q t}{n} = \frac{10 \cdot 1.1^i q t}{n} \leq \frac{1.1^i}{2} \sqrt{\frac{\log n}{n}},
$$
provided $\alpha$ is small enough. The case that $W'$ contains an endpoint of a path in $\P_i$ is treated similarly, except that there are at least $\frac{N_i} {4 \cdot 1.1^i}$ locations (that contain an endpoint) to uniformly choose from for $W'$, with only at most two of the locations belonging to $P$. A similar accounting gives a bound of $\frac{1.1^i}{2} \sqrt{\frac{\log n}{n}}$ for this case as well, provided small enough $\alpha$. The proof follows by a union bound on these two cases.
\end{proof}





To complete the proof of the lemma, recall that the number of events of adding an edge to an existing short walk is bounded by $q \leq \alpha \sqrt{n / \log n}$, and the probability of each such event to lead to merging with another short walk is bounded by $\sqrt{\frac{\log n}{n}}$. By a union bound, the probability for a short-short merge is at most $\alpha$. The other type of merging events happen when at least one of the walks is long. The probability of such merging to happen is bounded by the sum of probabilities, over all long walks $W$ in $G_A$, of $W$ having another $G_A$-walk residing in the same path. By Claim~\ref{claim:long-short-merge}, the latter sum is bounded by 
$$
\sum_{W \in \P_i : W \text{ long}} 1.1^i \sqrt{\frac{\log n}{n}} \leq \sum_{W \in \P_i : W \text{ long}} 3|W| \sqrt{\frac{\log n}{n}} \leq 3q \sqrt{\frac{\log n}{n}} \leq 3\alpha, 
$$
where the second inequality holds because the sum of length of (long) walks in $G_A$ is bounded by twice the total number of queries. The sum of probabilities of all ``bad'' events in the analysis (which may lead to possible merging) is bounded by $\frac{1}{10} + O(\alpha) \leq \frac{1}{5}$ for small enough $\alpha$.
\end{proof}

\subsection{On algorithms for finding claws without merging}\label{sec:final}
We next establish the near-instance-optimality of claw detection in the absence of merging between walks. Our main result is as follows. 

\begin{lemma}[Near instance-optimality in graphs without merges]
\label{lem:instance_optimality_graphs_no_merges}
There exists a constant $\alpha > 0$ for which the following holds.
Let $G$ be a path-symmetric graph admitting an algorithm $A$ for $\P_{S_3}$ with expected query complexity $q_G \leq \frac{1}{20}\sqrt{n}$. Suppose that the probability of $A$ to cause merging in $G_A$ within its first $2q_G$ queries, provided that it has not found a claw before merging, is bounded by $1/5$. 
Then there exists an algorithm $A_{\text{all-scales}}$ without an unlabeled certificate, that finds a claw with expected query complexity $O(q_G \log n)$.
\end{lemma}

\paragraph{The algorithm $A_{\text{all-scales}}$.}
Before proving the lemma, we present the  algorithm $A_{\text{all-scales}}$ operating on a graph $G$ on $n$ vertices $v_1, \ldots, v_n$. The algorithm $A_{\text{all-scales}}$ first picks a random permutation $(v_{\pi(1)}, v_{\pi(2)}, \ldots, v_{\pi(n)})$ of the vertices. 
Now, for each $0 \leq i < \log n$, define $A^{(i)}$ as follows. 
\textit{Iterate} the following for $j=1,2,\ldots,n$: 
\begin{enumerate}
\item Set $v = v_{\pi(j)}$.
\item Walk from $v$ in both directions (alternately) for $2^{i+1}$ steps, or until one of the following is found: a claw, a path endpoint, or a closing of a simple cycle. Terminate algorithm and report success if a claw was found. In the other two cases, stop the iteration and continue to the next one.
\end{enumerate}

The algorithm $A_{\text{all-scales}}$ simulates one copy of each algorithm $A^{(i)}$, $0 \leq i < \log n$, in an interleaved manner as follows. $A_{\text{all-scales}}$ proceeds in rounds, where in each round it simulates exactly one step for each algorithm $A^{(i)}$. (Note that a single round of $A_{\text{all-scales}}$ uses a total of $\log n$ queries, one for each copy.)


\begin{proof}[Proof of Lemma~\ref{lem:instance_optimality_graphs_no_merges}]
Let $A$ be any algorithm making up to $q$ queries to $G = (V,E)$. Without loss of generality, we may assume there is a random string $r(q,G) = (v_1, v_2, \ldots, v_q) \in V^q$, picked uniformly at random among all sequences of $q$ vertices \emph{without repetitions}, and used as follows. Each time that $A$ intends to query a ``new'' (previously unseen) vertex in $V$, it picks the first vertex from this sequence that was not queried until now, and uses it as the new vertex. We may also view each $A^{(i)}$ (when running for a total of $q$ iterations) as using the same type of random string $r(q,G)$, where in the $i$-th time that a new vertex $v \in V \setminus S^{(i)}$ is picked in the first bullet of the algorithm description, we set $v = v_i$.

We define the following events of interest.
\begin{itemize}
\item $E_{\text{no-merge}}(A, q, G, r)$ is the event that when running $A$ for $q$ queries on $G$ with $r$ as the random string, no merging of walks occurs before the first time a claw is found (and, if no claw is found during this entire time window, then simply no merging occurs for the entire window).
\item $E_{\text{skip}}(A,q,G,r)$ is the event that, at some point during the run of $A$ on $G$ (with up to $q$ queries and random string $r$), some vertex $v_i$ in the random string is encountered by a query made by $A$ \emph{before} being selected as a ``new'' vertex. This means that $A$ will skip $v_i$ when picking new vertices in the future. We denote by $E_{\text{no-skips}}(A,q,G,r)$ the complementary event to $E_{\text{skip}}$. The event $E_{\text{no-skips}}$ corresponds to the situation where for all $1 \leq i \leq q$, in the $i$-th occurrence where $A$ picks and queries a ``new'' vertex, this vertex will be $v_i$.
\item $E_{\text{claw}}(i, A, q, G, r)$ is the event in which the following two conditions hold: (i) $A$ detects a claw-center, $w$, in $G$ within up to $q$ rounds using the random string $r$; and (ii) the walk $W$ containing $w$ in the walk graph $G_A$ at the time where $w$ is first queried is of length (number of vertices) between $2^{i}$ and $2^{i+1}-1$.\footnote{We will only be interested in this event in situations where walk merging \emph{did not happen} before a claw was found. Thus, this event is well defined: there is exactly one walk containing $w$ when it is first queried.}
\end{itemize}

Our main technical claim of the proof is that $A^{(i)}$ stochastically dominates $A$ in the following asymptotic sense.
\begin{claim}
\label{claim:optimality_of_Ai}
Fix $A, q, G, r$ as above. Let $E_{\text{no-merge}}$ and $E_{\text{no-skips}}$ be defined as above with respect to $A,q,G,r$. 
For any $0 \leq i < \log n$, 
\begin{equation*}
\label{eqn:claw_stochastic_domination}
     \Pr(E_{\text{claw}}(i, A^{(i)},  4q, G, r) \ \ |\ \  E_{\text{claw}}(i, A, q, G, r) \cap E_{\text{no-merge}} \cap E_{\text{no-skips}}) = 1.
\end{equation*}
\end{claim}

In particular, the proof shows that when $A$ and $A^{(i)}$ are coupled, using the same random string $r$, then
conditioned on both events $E_{\text{no-merge}}$ and $E_{\text{no-skips}}$ holding with respect to the algorithm $A$, if the algorithm $A$ can find a claw within $q$ queries, then $A^{(i)}$ will (deterministically) find the same claw, albeit with a slightly higher query cost of $4q$ (instead of $q$).  

\begin{proof}[Proof of Claim~\ref{claim:optimality_of_Ai}] 

We can view each of $A$ and $A^{(i)}$ as maintaining a collection of walks. Recall that the event $E_{\text{no-merge}}$ holds, that is, the walks in $A$ do not merge (in $A^{(i)}$ we do not care about merging). By symmetry considerations, we may assume that $A$ operates according to the following ``older first'' principle:
if two walks $W$ and $W'$ in $G_{A}$ have the exact same shape at some moment during the run of $A$, and if $A$ decides to extend one of $W$ and $W'$ by an edge at that moment, then the extension will take place at the walk that is older, i.e., that was created earlier. 

Since the event $E_{\text{no-skips}}$ holds, the $j$-th ``new'' vertex to be queried by the algorithm (for all $1 \leq j \leq q$) will be $v_j$. Let $W_j$ denote the walk in $G_A$ that emanates from $v_j$ (if $v_j$ was not yet queried, we think of $W_j$ as a length-$0$ walk). Observe that for two walks $W_a$ and $W_b$ where $a < b$ that have not yet encountered a path endpoint (or a claw) at some point in time, it holds that $|W_a| \geq |W_b|$ at that time. This is a direct consequence of the `older first'' principle (and the no-merging property). 

Suppose that $A$ first encounters a claw center $w$ at time $t\leq q$, as part of a walk $W_j$ of length between $2^i$ and $2^{i+1}-1$. That is, the event $E_{\text{claw}}(i, A, G, t, r)$ holds while $E_{\text{claw}}(i, A, G, t-1, r)$ does not hold.
We claim that $A^{(i)}$ will encounter (with probability $1$) the same claw center within its first $4q$ queries, implying in particular that with probability $1$, the event $E_{\text{claw}}(i, A^{(i)}, G, 4t, r)$ also holds in this case.

Indeed, suppose this is the case for some walk $W_j$ at some time $t$. For any vertex $v \in G$, denote by $d_G(v)$ the distance of $v$ from an endpoint of the path it belongs to; if $v$ belongs to a cycle, then $d_G(v)$ is defined as the length of the cycle.
Let $$\mathcal{I} = \{j' \leq j : \text{$d_G(v_{j'}) \geq 2^i$}\} \ \ \text{ and }\ \  \bar{\mathcal{I}} = [j] \setminus \mathcal{I}.$$

Recall that $|W_j| \geq 2^{i}$ at time $t$. By the ``older first'' principle, $|W_{j'}| \geq 2^i$ holds at time $t$ for any $j' \in \mathcal{I}$, and $|W_{j'}| \geq d_G(v_{j'})$ for any $j' \in \bar{\mathcal{I}}$. Thus, 
$$t \geq 2^i \cdot |\mathcal{I}| + \sum_{j' \in \bar{\mathcal{I}}} d_G(v_{j'})$$. 

On the other hand, by definition of $A^{(i)}$ and since its randomness is coupled with that of $A$ as detailed above, $A^{(i)}$ will spend at most $2 \cdot 2 \cdot 2^{i+1} = 2^{i+2}$ queries for the walk emanating from each $v_{j'}$ with $j' \in \mathcal{I}$; and at most $2 d_G(v_{j'})$ queries for $j' \in \bar{\mathcal{I}}$. Thus, the total number of queries until it encounters the claw center $w$ is at most
$$
\cdot 2^{i+2} |\mathcal{I}| + 2 \sum_{j' \in \bar{\mathcal{I}}} d_G(v_{j'}) \leq 4t,
$$
as desired.
\end{proof}

We now resume the proof of Lemma~\ref{lem:instance_optimality_graphs_no_merges}. 
Let $A$ be an algorithm achieving an expected query complexity $q_G$ for finding a claw in $G$. By Markov's inequality, $A$ finds a claw with probability at least $1/2$ within $2q_G$ queries. Fix a random string $r = r(2q_G,G)$. By the statement of the lemma, 
$\Pr(E_{\text{no-merge}}(A, 2q_G, G, r)) \geq 4/5$. 
Further, it is easy to show that the probability of $E_{\text{skip}}(A,q,G,r)$ for any algorithm $A$ making $q$ queries is bounded by $2q^2 / n$. 
In our case $q = 2q_G \leq \frac{1}{10}\sqrt{n}$, and this probability is smaller than $1/20$. We conclude that $$\Pr(E_{\text{no-merge}} \cap E_{\text{no-skips}}) \geq 3/4.$$

Next, observe that $A$ finds a claw after at most $2q_G$ rounds if and only if at least one of the events $E_{\text{claw}}(i, A, 2q_G, G, r)$ holds. Since the total probability to find a claw is $1/2$, we get that
$$
\Pr\left(\bigcup_{i=0}^{\log n - 1} E_{\text{claw}}(i,A, 2q_G,G,  r)\ \  |\ \  E_{\text{no-merge}} \cap E_{\text{no-skips}}\right) \geq \frac{1}{2} - \frac{1}{4} = \frac{1}{4}.
$$

Thus, by Claim~\ref{eqn:claw_stochastic_domination} we have
\begin{align*}
\Pr\left(\bigcup_{i=0}^{\log n - 1} E_{\text{claw}}(i,A^{(i)}, 8q_G, G,  r) \ \ |\ \  E_{\text{no-merge}} \cap E_{\text{no-skips}}\right) \geq \frac{1}{4}.
\end{align*}
Thus, $A_{\text{all-scales}}$ finds a claw with probability $1/4$ within $8q_G$ rounds (with a total query complexity of $8q_G \log n$). By standard properties of the geometric distribution, $A_{\text{all-scales}}$ finds a claw with expected query complexity $O(q_G \log n)$, and the proof follows.\footnote{Note that the geometric distribution requires independence between different trials, whereas the way we defined $\A_{\text{all-scales}}$ may make it seem at first that there are nontrivial dependencies, due to the algorithm generating a random permutation in its setup. However, the algorithm can be defined equivalently using a ``with replacement'' version without the need to specify a random permutation. Indeed, when starting a walk from a vertex $v$ that was already used as a source, we can simply reuse the old walk from $v$ for free, without paying any extra queries.}
%
\end{proof}

It remains to combine all pieces for the proof of Theorem \ref{thm:near_instance_optimality}.
\begin{proof}
By Lemma~\ref{lem:cleaning_up_graph_stars}, we may assume that the input graph $G$ is path-symmetric. Suppose that $G$ admits an algorithm (with an unlabeled certificate) for $\P_{S_3}$ with expected query complexity at most $q \leq \frac{\alpha}{2} \sqrt{\frac{n}{ \log n}}$, where $\alpha$ is as in Lemma~\ref{lem:hardness_merge_graphs}. By the latter lemma combined with Lemma \ref{lem:instance_optimality_graphs_no_merges}, the algorithm $A_{\text{all-scales}}$ detects a claw in $G$ with expected query complexity $O(q \log n)$, as desired.
\end{proof}


\addcontentsline{toc}{section}{References}
\bibliographystyle{alpha}
\bibliography{main}

\appendix

\end{document}